\newcommand\Tstrut{\rule{0pt}{2.6ex}}         % = `top' strut
\theoremstyle{definition}
\newtheorem{corollary}{Corollary}
\newtheorem{definition}{Definition}
\newtheorem{lemma}{Lemma}
\newtheorem{proposition}{Proposition}
\newtheorem{theorem}{Theorem}
\newtheorem*{remark}{Remark}
\newcommand{\mbf}{\mathbf}
\newcommand{\mbb}{\mathbb}
\newcommand{\mc}{\mathcal}
\newcommand{\mK}{\mathcal{K}}
\newcommand{\mL}{\mathcal{L}}
\newcommand{\mM}{\mathcal{M}}
\newcommand{\mN}{\mathcal{N}}
\newcommand{\mS}{\mathcal{S}}
\newcommand{\mX}{\mathcal{X}}
\newcommand{\ol}{\overline}
\newcommand{\N}{\mathcal{N}}
\newcommand{\tr}{\textrm{Tr}}
\newcommand{\id}{\textrm{id}}
\newcommand{\ket}[1]{|#1\rangle}
\newcommand{\bra}[1]{\langle #1|}
\newcommand{\op}[2]{|#1\rangle\langle #2|}
\newcommand{\dyad}[1]{\op{#1}{#1}}
\newcommand{\pre}{\text{pre}}
\newcommand{\post}{\text{post}}
\newcommand{\cptp}{\text{CPTP}}
\newcommand{\wt}{\widetilde}
\newcommand{\cv}{\text{cv}}
\newcommand{\sep}{\text{SEP}}
\newcommand{\sepl}{\text{sep}}
\newcommand{\ppt}{\text{PPT}}
\newcommand{\sym}{\text{Sym}}
\newcommand{\Pos}{\mathrm{Pos}}
\newcommand{\T}{\textrm{T}}
\newcommand{\eb}{\text{EB}}
\newcommand{\opvec}{\operatorname{vec}}
\definecolor{cool_green}{rgb}{0.0, 0.5, 0.0}
\definecolor{navy_blue}{rgb}{0,0,0.5}
\newcommand{\todo}[1]{{\color{red} #1}}
\newcommand{\ian}[1]{{\color{navy_blue} #1}}
\begin{document}

\title{The Communication Value of a Quantum Channel}

\author{Eric Chitambar, Ian George, Brian Doolittle, Marius Junge%
\thanks{E. Chitambar and I. George are with the Department of Electrical and Computer Engineering, University of Illinois Urbana-Champaign, Urbana, IL, 61801 USA, e-mail: echitamb@illinois.edu.}% <-this % stops a space
\thanks{B. Doolittle is with the Department of Physics, University of Illinois Urbana-Champaign, Urbana, IL, 61801 USA.}% <-this % stops a space
\thanks{M. Junge is with the Department of Mathematics, University of Illinois Urbana-Champaign, Urbana, IL, 61801 USA.}}

\maketitle

\begin{abstract}
There are various ways to quantify the communication capabilities of a quantum channel.  In this work we study the communication value (cv) of channel, which describes the optimal success probability of transmitting a randomly selected classical message over the channel.  The cv also offers a dual interpretation as the classical communication cost for zero-error channel simulation using non-signaling resources.  We first provide an entropic characterization of the cv as a generalized conditional min-entropy over the cone of separable operators.  Additionally, the logarithm of a channel's cv is shown to be equivalent to its max-Holevo information, which can further be related to channel capacity.  We evaluate the cv exactly for all qubit channels and the Werner-Holevo family of channels.  While all classical channels are multiplicative under tensor product, this is no longer true for quantum channels in general.  We provide a family of qutrit channels for which the cv is non-multiplicative.  On the other hand, we prove that any pair of qubit channels have multiplicative cv when used in parallel.  Even stronger, all entanglement-breaking channels and the partially depolarizing channel are shown to have multiplicative cv when used in parallel with any channel.  We then turn to the entanglement-assisted cv and prove that it is equivalent to the conditional min-entropy of the Choi matrix of the channel.  Combining with previous work on zero-error channel simulation, this implies that the entanglement-assisted cv is the classical communication cost for perfectly simulating a channel using quantum non-signaling resources.  A final component of this work investigates relaxations of the channel cv to other cones such as the set of operators having a positive partial transpose (PPT).  The PPT cv is analytically and numerically investigated for well-known channels such as the Werner-Holevo family and the dephrasure family of channels.

\end{abstract}

%The capacity of channel describes its optimal rate of information transmission.  To realize this ultimate performance as close as possible, multiple channel uses are required in conjunction with an encoder for the sender (Alice) and decoder for the receiver (Bob).  Crucially, the encoder and decoder act jointly across all the copies of the channel, thus enabling the use of typicality to achieve reliable communication.  However, what if Alice and Bob only have access to a single copy of the channel, or what if joint encoding/decoding is not possible?  In this case the capacity is not an appropriate measure for quantifying how well the channel can be used for communication.  A more meaningful measure is simply the largest probability for which a randomly drawn message can be successfully sent across the channel on a single use.  %This one-copy average success probability is the object of our investigation here.

%\tableofcontents

\section{Introduction}

A noisy communication channel prohibits perfect transmission of messages from the sender (Alice) to the receiver (Bob).  While there are a number of ways to quantify the noise of a channel, perhaps the simplest is in terms of a guessing game.  Suppose that with uniform probability Alice randomly chooses a channel input and sends it to Bob over the channel.  Based on the channel output, Bob tries to guess Alice's input with the greatest probability of success.  In this game, Bob's optimal strategy is to perform maximum likelihood estimation based on the channel's transition probabilities.  To be concrete, suppose that $\mbf{P}:[n]\to[n']$ is a channel mapping set $[n]:=\{1,\cdots,n\}$ to set $[n']$ with transition probabilities $P(y|x)$.  We define the channel's \textit{communication value} (cv) to be
\begin{equation}
\label{Eq:Cv}
\cv(\mbf{P})=\sum_{y\in[n']}\max_{x\in[n]}P(y|x).
\end{equation}
It is then straightforward to see that $\frac{1}{n}\cv(\mbf{P})$ is the largest success probability of correctly identifying the input $x$  based on the output $y$, when $x$ is drawn uniformly from $[n]$.  The quantity $\cv(\mbf{P})$ is thus a natural measure for how well a channel $\mbf{P}$ transmits data on the single-copy level.  The goal of this paper is to better understand the channel cv in different communication settings.

The channel cv also emerges in the problem of zero-error channel simulation \cite{Cubitt-2011a, Duan-2016a, Wang-2016a, Fang-2020a}.  In the general task of channel simulation, Alice and Bob attempt to generate one channel $\mbf{P}$ using another channel $\mbf{Q}$ combined with pre- and post-processing \cite{Bennett-2002a, Bennett-2014a, Berta-2011a, Heinosaari-2019a, Heinosaari-2020a}.  Interesting variations to this problem arise when different types of resources are used to coordinate the pre- and post-processing of $\mbf{Q}$.  For example, these resources could be shared randomness \cite{Bennett-2014a, Frenkel-2015a}, shared quantum entanglement \cite{Berta-2013a, Wang-2018a, Wilde-2018a, Gour-2021a, Frenkel-2021a}, or non-signaling side-channel \cite{Cubitt-2011a, Duan-2016a, Fang-2020a}.  The latter refers to a general bipartite channel that prohibits communication from one party to the other.  When $\mbf{Q}=\id_r$ is the identity map on $[r]$, then the goal is to perfectly simulate $\mbf{P}$ using $r$ noiseless messages from Alice to Bob, along with any auxiliary resource.  For a given class of resource, the smallest number $r$ needed to accomplish this simulation is called the communication cost of $\mbf{P}$ (also referred to as the signaling dimension of $\mbf{P}$ in Refs. \cite{Dall'Arno-2017a, Doolittle-2021a}).  It turns out that $\lceil\cv(\mbf{P})\rceil$ is a lower bound on the communication cost when Alice and Bob have access to shared randomness \cite{Doolittle-2021a}.  In fact, this lower bound is tight when the Alice and Bob are allowed to use non-signaling resources \cite{Cubitt-2011a}.  Combining this discussion with the previous paragraph, we thus have two dual interpretations of the communication value: $\frac{1}{n}\cv(\mbf{P})$ as an optimal guessing probability and $\cv(\mbf{P})$ as an optimal simulation cost.  This is no coincidence since Eq. \eqref{Eq:Cv} is the dual formulation of the linear program characterizing the communication cost for perfectly simulating $\mbf{P}$ using classical non-signaling resources (see Section \ref{Sect:Capacity-NS} for more details).

%The notion of channel cv can also be extended to channels that transmit states of quantum systems from Alice to Bob.  
The goal of this paper is to understand the communication value of quantum channels.  Formally, a quantum channel is described by a completely positive trace-preserving (CPTP) map $\mc{N}$ mapping density operators $\rho^A$ on Hilbert space $\mc{H}^A$ to density operators $\mc{N}(\rho)$ on Hilbert space $\mc{H}^B$.  Every quantum channel is able to generate a family of classical channels by encoding classical data into quantum objects.  Namely, for each $x\in[n]$, Alice prepares a quantum state $\rho_x$ and sends it through the channel to Bob's side.  Upon receiving $\mc{N}(\rho_x)$, Bob performs a quantum measurement, described by a general positive operator-valued measure (POVM) $\{\Pi_y\}_{y\in[n']}$, and regards his measurement outcome as the decoded classical data.  The induced classical channel then has the form
\begin{equation}
P(y|x)=\tr[\Pi_y\mc{N}(\rho_x)].
\end{equation}
How noisy this channel will be depends on the state encoding $\{\rho_x\}_{x\in[n]}$ and measurement decoding $\{\Pi_y\}_{y\in[n']}$, and ideally one chooses the states and measurement to minimize the error in data transmission.  %When many copies of $\mc{N}$ are considered, the celebrated Holevo-Schumacher-Westmoreland theorem describes an optimal choice of joint encoding/decoding that approaches the classical communication capacity of the quantum channel $\mc{N}$ \cite{Schumacher-1997a, Holevo-1998a}.  Focusing just on a single copy of the channel, 
We define the cv of $\mN$ in terms of the classical channels it can generate.
\begin{definition}
\label{Defn:cv}
The $[n]\to[n']$ communication value (cv) of a quantum channel $\mc{N}\in\cptp(A\to B)$ is
\begin{equation}
\cv^{n\to n'}(\mc{N})=\max_{\substack{\{\Pi_{y}\}_{y=1}^{n'}\\\{\rho_x\}_{x=1}^n}}\{\cv(\mbf{P})\;|\; P(y|x)=\tr[\Pi_y\mc{N}(\rho_x)]\},
\end{equation}
and the cv value of $\mc{N}$ is defined as
\begin{equation}
\cv(\mc{N})=\sup_{n,n\in\mbb{N}}\cv^{n\to n'}(\mc{N}).
\end{equation}
\end{definition}
\noindent Analogous to the classical case, $\frac{1}{n}\cv^{n\to n'}(\mc{N})$ quantifies the largest success probability attainable in an $n$-input guessing game using the channel $\mc{N}$.  The quantity $\cv(\mc{N})$ also has a dual interpretation as the classical communication cost for simulating any classical channel generated by $\mc{N}$ when Alice and Bob have access to non-signaling resources.

By taking multiple copies of the channel, one can consider the cv capacity, defined as 
\begin{align}
\mc{CV}(\mbf{P})&=\lim_{k\to\infty}\frac{1}{k}\log\cv(\mbf{P}^{k}),\notag\\
\mc{CV}(\mN)&=\lim_{k\to\infty}\frac{1}{k}\log\cv(\mN^{\otimes k})
\end{align}
in the classical and quantum cases, respectively.  It is not difficult to see that $\log \cv(\mbf{P})$ is an additive quantity and so $\mc{CV}(\mbf{P})=\log\cv(\mbf{P})$.  On the other hand, as we show below, $\log\cv(\mN)$ is non-additive in general for quantum channels.  A primary objective of this paper is to understand when additivity of $\log\cv(\mN)$ (equivalently multiplicativity of $\cv(\mN)$) holds and when it does not.  One of our main results is that multiplicativity always holds for qubit channels, whereas it does not for qutrits.

One can also study the cv of channels that are enhanced by auxiliary resources shared between the sender and receiver.  In the quantum setting, it is natural to consider entanglement-assisted channel communication, as depicted in Fig. \ref{fig:ea_cv}.  This is precisely the setup in the well-known quantum superdense coding protocol \cite{Bennett-1992a}.  Letting $\cv^*(\mN)$ denote the entangled-assisted cv of $\mN$, another main result of ours is that $\cv^*(\mN)$ equals the conditional min-entropy \cite{Konig-2009a} of the Choi matrix of $\mN$.  Combining with the results of Duan and Winter \cite{Duan-2016a}, this implies that $\lceil\cv^*(\mN)\rceil$ captures the zero error classical communication cost for simulating $\mN$ when Alice and Bob have \textit{quantum} non-signaling resources.  Note that by additivity of the min-entropy, it follows that $\log\cv^*(\mN)$ is additive and therefore $\mc{CV}^*(\mN)=\log\cv^*(\mN)$.  A summary between channel cv and zero-error channel simulation is given in Table \ref{table:cv-and-chan-sim}.

\begin{center}
\begin{tabular}{|p{2cm}|p{6cm}|}\hline \label{table:cv-and-chan-sim}
  $\lceil\cv(\mN)\rceil$ & Classical communication cost  to perfectly simulate every classical channel induced by $\mN$ using classical non-signaling resource. \\ \hline
  $\lceil\cv^*(\mN)\rceil$ \Tstrut & Classical communication cost to perfectly simulate $\mN$ using quantum non-signaling resource.\\ \hline
\end{tabular}
\end{center}

This paper is structured as follows.  We begin in Section \ref{Sect:Notation} by introducing the notation used in this manuscript and reviewing some preliminary concepts.  Section \ref{Sect:cv-characterization} takes a deeper dive into the definition of channel communication value and relates it to the geometric measure of entanglement and other information-theoretic quantities such as the conditional min-entropy.  Section \ref{Sect:qubits} focuses on qubit channels and provides an analytic expression for cv in terms of the correlation matrix of the channel's Choi matrix.  The Werner-Holveo family of channels is introduced in Section \ref{Sect:Werner-Holevo} and the cv is computed.  The question of cv multiplicativity is taken up in Section \ref{Sect:Multiplicativity} with examples of both multiplicativity and non-multiplicativity being presented.  Notably, the cv capacity is shown to take a single-letter form for entanglement-breaking channels, Pauli qubit channels, and the general depolarizing channel.  Section \ref{sec:entanglement_assisted_cv} introduces the notion of entanglement-assisted communication value and relates it to the conditional min-entropy of the Choi matrix.  Different relations to the communication value are considered in Section \ref{sec:relaxations_of_cv}, with a particular focus on the PPT communication value and computable examples that it supports.  In Section \ref{Sect:Numerics} we describe a procedure for numerically estimating the cv of a given channel, and we provide a link to our developed software package, which performs this estimation.  Finally, Section \ref{Sect:Capacity-NS} provides a discussion of our results as they relate to channel capacity and zero error channel simulation.

\section{Notation and Preliminaries}

\label{Sect:Notation}

This paper considers exclusively finite-dimensional quantum systems represented by Hilbert spaces $\mc{H}^A, \mc{H}^B,\cdots$ etc.  The collection of positive operators acting on Hilbert space $\mc{H}^A$ will be denoted by $\Pos(A)$, which consists of all hermitian operators $\text{Herm}(A)$ acting on $A$ with a non-negative eigenvalue spectrum.  The subset of these operators having unit trace constitute the collection of density operators for system $A$, and we denote this set by $\mc{D}(A)$.  We write $\Vert\cdot\Vert_\infty$ and $\Vert\cdot\Vert_1$ to indicate the spectral and trace norms of elements in $\Pos(A)$, respectively.  For bipartite systems, an operator $\Omega\in \Pos(AB)$ is called separable if it can be expressed as a positive combination of product states, $\Omega=\sum_it_i\op{\alpha_i}{\alpha_i}\otimes\op{\beta_i}{\beta_i}$, with $t_i\geq 0$, $\op{\alpha_i}{\alpha_i}\in\mc{D}(A)$, and $\op{\beta_i}{\beta_i}\in\mc{D}(B)$, and we let $\sep(A:B)$ denote the set of all separable operators on systems $A$ and $B$.  Classical systems can be incorporated into this framework by demanding that the density matrix of every classical state be diagonal in a fixed basis.  In general, we will label a classical system by $X$ or $Y$.

Quantum channels provide the basic building blocks of any dynamical system.  Mathematically, they are represented by CPTP maps, and we denote the set of CPTP maps from system $A$ to $B$ by $\cptp(A\to B)$.  The set $\cptp(A\to B)$ is isomorphic to the subset of $\Pos(AB)$ consisting of operators whose reduced density operator on system $A$ is the identity.  Specifically, for every $\mN\in\cptp(A\to B)$ its Choi matrix is the associated operator $J_{\mN}\in\Pos(AB)$ given by
\[J_{\mN}=\id\otimes\mN(\phi^+_{d_A}),\]
where $\id$ is the identity map and $\phi^+_{d_A}=\op{\phi^+_{d_A}}{\phi^+_{d_A}}=\sum_{i,j=1}^{d_A}\op{ii}{jj}$.  Note that $\ket{\phi^+_{d_A}}$ is proportional to the normalized $d_A$-dimensional maximally entangled state, and we write the latter as $\ket{\Phi^+_{d_A}}:=\frac{1}{\sqrt{d_A}}\ket{\phi^+_{d_A}}$.  The fact that $\mN$ is completely positive assures that $J_{\mN}\geq 0$, and the trace-preserving condition means that $\tr_B J_{\mN}=\mbb{I}^A$, where $\mbb{I}$ is the identity operator.  On the other hand, if $\tr_A J_{\mN}=\mbb{I}^B$ then $\mN$ is a unital map, meaning that $\mN(\mbb{I}^A)=\mbb{I}^B$.  More generally, we say a map is sub-unital if $\mN(\mbb{I}^A)\leq \mbb{I}^B$.

An important subclass of channels are known as entanglement-breaking (EB).  These are characterized by the property that $\mN^{A\to B}\otimes \id^C(\rho^{AC})\in\sep(B:C)$ for all $\rho^{AC}\in\Pos(AC)$.  It is not difficult to see that $\mN\in\cptp(A\to B)$ is EB if and only if $J_{\mN}\in\sep(A:B)$.  For any subset $S\subset \text{Herm}(A)$, we let 
\[S^*=\{\omega\in\text{Herm}(A)\;|\;\langle \omega,\tau\rangle:=\tr[\omega\tau]\geq 0\;\;\forall \tau\in S\}\]
denote the dual cone of $S$.  As a final bit of notation, we write $\exp(x)$ and $\log(x)$ to mean $2^x$ and $\log_2x$, respectively.

\section{Characterizing the Communication Value}

\label{Sect:cv-characterization}
 
Let us begin with a few remarks regarding Definition \ref{Defn:cv}.  First, every choice of optimal POVM $\{\Pi_y\}_{y=1}^{n'}$ and set of signal states $\{\rho_x\}_{x=1}^n$ is characterized by a labeling function $f:[n']\to[n]$ such that $\max_{x\in[n]}\tr[\Pi_y\rho_x]=\tr[\Pi_y\rho_{f(x)}]$.  If the range of $f$ is strictly contained in $[n]$, then we can replace the set $\{\rho_x\}_{x=1}^n$ with a smaller set of signal states such that the map $f$ is surjective.  Similarly, if $f$ is not one-to-one, then we can coarse-grain the $n'$ POVM elements so that each outcome uniquely identifies a signal state.  Hence without changing the cv, we can assume $f:[m]\to[m]$ is a bijection with $m=\min\{n,n'\}$, and so
\begin{equation}
\cv^{n\to n'}(\mc{N})=\cv^{m\to m}(\mc{N})\qquad\text{for}\qquad m=\min\{n,n'\}.
\end{equation}
%Indeed, if $n\leq n'$ then one can coarse-grain the $n'$ POVM elements into an $n$-outcome POVM without changing the cv.  On the other hand, if $n'\leq n$ then one can just restrict the input states $\{\rho_x\}_{x=1}^n$ to a subset of $n'$ states that attain the maximum in Eq. \eqref{Eq:Cv}.  

Another observation is that
\begin{equation}
\cv^{m\to m}(\mc{N})\leq \cv^{m'\to m'}(\mc{N})\qquad\text{for}\qquad m\leq m',
\end{equation}
which follows from the fact that we can always trivially split a POVM to increase outcomes $\Pi\to \frac{1}{2}\Pi+\frac{1}{2}\Pi$, and we can always increase the size of our input set $\{\rho_x\}_{x=1}^m$ by adding the same state $\rho_x$ multiple times.  Finally, we note that
\begin{equation}
\label{Eq:cv-carethedory}
\cv(\mc{N})=\cv^{d_B^2\to d_B^2}(\mc{N}),
\end{equation}
where $d_B$ is the dimension of the output system.  This follows from the fact that any POVM on a $d_B$-dimensional systems can always be decomposed into a convex combination of extremal POVMs, each with at most $d_B^2$ outcomes \cite{Davies-1978a}, and the cv can always be attained with one of these extremal measurements.  

It is also not difficult to see that 
\begin{align}
\label{Eq:cv-bounds}
    1\leq \cv(\mN)\leq \min\{d_A,d_B\}
\end{align}
for any channel $\mN$.  The lower bound holds by considering a constant input $\rho_x=\rho$ (for all $x$) so that $\sum_x\tr[\Pi_x\mN(\rho_x)]=\sum_x\tr[\Pi_x\mN(\rho)]=1$, since $\mN$ is trace preserving and $\sum_x\Pi_x=\mbb{I}_{d_B}$.  Similarly, the upper bound follows from the inequalities
\begin{align}
    \sum_x\tr[\Pi_x\mN(\rho_x)]&\leq \sum_x\tr[\Pi_x]=\tr[\mbb{I}_{d_B}]=d_B;\\
    \sum_x\tr[\Pi_x\mN(\rho_x)]&= \sum_x\tr[\mN^\dagger(\Pi_x)\rho_x]\notag\\
    &\leq \sum_x\tr[\mN^\dagger(\Pi_x)]=\tr[\mbb{I}_{d_A}]=d_A,
\end{align}
where $\mN^\dagger:B\to A$ is the adjoint map of $\mN$, and so $\{\mN^\dagger(\Pi_x)]\}_x$ will always be a valid POVM on Alice's system.

Notice that when $\cv(\mN)=1$, Bob can do no better than randomly guessing Alice's input.  The following proposition characterizes the type of channel for which this is the case.
\begin{proposition}
$\cv(\mN)=1$ iff $\mN$ is a replacer channel; i.e. there exists a fixed state $\sigma$ such that $\mN(\rho)=\sigma$ for all states $\rho$.  %(b) $\cv(\mN)=\min\{d_A,d_B\}$ iff \todo{[[Need to figure out the condition... something like the Choi matrix is maximally correlated in some basis after dephasing]]}
\end{proposition}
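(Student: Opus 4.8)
The plan is to prove the two implications separately, with the reverse direction handled by contraposition via an elementary state-discrimination argument.

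\medskip\noindent\emph{The easy direction.} Suppose $\mN(\rho)=\sigma$ for every state $\rho$. Then for any encoding $\{\rho_x\}_{x=1}^n$ and decoding $\{\Pi_y\}_{y=1}^{n'}$, the induced channel has $P(y|x)=\tr[\Pi_y\sigma]$ independently of $x$, so $\cv(\mbf{P})=\sum_y\max_x P(y|x)=\sum_y\tr[\Pi_y\sigma]=\tr[\sigma]=1$. Taking the supremum over all encodings and decodings gives $\cv(\mN)\le 1$, which together with the general bound $\cv(\mN)\ge 1$ from Eq.~\eqref{Eq:cv-bounds} yields $\cv(\mN)=1$.

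\medskip\noindent\emph{The converse (contrapositive).} I would show that if $\mN$ is not a replacer channel then some classical channel it induces already has cv strictly above $1$. If $\mN$ is not constant, there are states $\rho_0,\rho_1$ with $\mN(\rho_0)\ne\mN(\rho_1)$, hence $\|\mN(\rho_0)-\mN(\rho_1)\|_1>0$. Take $n=n'=2$ with signal states $\rho_0,\rho_1$ and the Helstrom measurement $\{\Pi_0,\1-\Pi_0\}$, where $\Pi_0$ is the projector onto the positive eigenspace of the traceless Hermitian operator $\mN(\rho_0)-\mN(\rho_1)$; then $\tr[\Pi_0(\mN(\rho_0)-\mN(\rho_1))]=\tfrac12\|\mN(\rho_0)-\mN(\rho_1)\|_1$, so the induced channel $\mbf P$ satisfies
\[
\cv(\mbf P)\ge P(0|0)+P(1|1)=\tr[\Pi_0\mN(\rho_0)]+\tr[(\1-\Pi_0)\mN(\rho_1)]=1+\tfrac12\|\mN(\rho_0)-\mN(\rho_1)\|_1>1 .
\]
Therefore $\cv(\mN)\ge\cv(\mbf P)>1$, which is the desired contradiction.

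\medskip I do not expect a genuine obstacle: the only point requiring a moment's thought is choosing the right witness, namely recognizing that "$\mN$ is not constant" is exactly the statement that two of its outputs can be told apart with bias in a two-message guessing game, after which the Helstrom bound (equivalently, the trace-distance formula for optimal discrimination of two states) finishes the argument and even makes the quantitative gap $\tfrac12\|\mN(\rho_0)-\mN(\rho_1)\|_1$ explicit. An alternative route would combine the easy direction with convexity of $\mN$ on $\mc D(A)$, but the discrimination argument is cleaner and self-contained.
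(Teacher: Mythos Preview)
Your proof is correct and follows essentially the same approach as the paper: both directions are handled identically, with the nontrivial implication proved by contraposition using the Helstrom measurement on two distinguishable outputs to obtain $\cv(\mN)\ge 1+\tfrac12\|\mN(\rho_0)-\mN(\rho_1)\|_1>1$. Your write-up simply adds a bit more detail than the paper's terse version.
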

\begin{proof}
If $\mN$ is a replacer channel, then clearly $\cv(\mN)=1$.  On the other hand, suppose that $\mN$ is not a replacer channel.  This means there exists two inputs $\rho_1$ and $\rho_2$ such that $\Delta=\mN(\rho_1)-\mN(\rho_2)\not=0$.  Hence by performing a Helstrom measurement on the channel output (i.e. projecting onto the $\pm$ parts of $\Delta$) \cite{Helstrom-1976a}, one obtains $\cv(\mN)\geq 1+\tfrac{1}{2}\Vert\Delta\Vert_1>1$. 
\end{proof}

\subsection{Communication Value via Conic Optimization}

In general $\cv(\mc{N})$ is difficult to compute.  This can be seen more explicitly by casting $\cv(\mc{N})$ as an optimization over the separable cone, $\sep(A:B)$, whose membership is NP-Hard to decide \cite{Gurvits-2003a}.  %Recall that $\Omega^{AB}\in\sep(A:B)$ iff it can expressed as a positive combination of product states, $\Omega=\sum_it_i\op{\alpha_i}{\alpha_i}\otimes\op{\beta_i}{\beta_i}$, with $t_i\geq 0$, $\ket{\alpha_i}\in\mc{H}^A$, and $\ket{\beta_i}\in\mc{H}^B$.  
Nevertheless, expressing $\cv(\mc{N})$ as an optimization over $\sep(A:B)$ leads to computable upper bounds since there are well-known relaxations to the set $\sep(A:B)$ that are easier to handle analytically.
\begin{proposition}
\label{Prop:Proposition-cv}
For $\mc{N}\in\cptp(A\to B)$, 
\begin{align}
\cv(\mc{N})=\max &\;\;\tr[\Omega^{AB}J_\mc{N}]\notag\\
\text{subject to}\;&\;\;\tr_A[\Omega^{AB}]=\mbb{I}^B;\notag\\
&\;\;\Omega^{AB}\in\sep(A:B).
\label{Eq:Proposition-cv}
\end{align}
\end{proposition}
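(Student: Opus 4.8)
The plan is to prove the two inequalities separately via a direct dictionary between the data $(\{\rho_x\}_x,\{\Pi_y\}_y)$ of Definition~\ref{Defn:cv} and the feasible points $\Omega^{AB}$ of the program~\eqref{Eq:Proposition-cv}. The only computation needed is the standard Choi identity
\[
\tr[(\rho^{\T}\otimes M)\,J_\mc{N}]=\tr[M\,\mc{N}(\rho)]
\qquad\text{for all }\rho\in\mc{D}(A),\ M\in\Pos(B),
\]
which follows immediately from $J_\mc{N}=\sum_{i,j}\op{i}{j}\otimes\mc{N}(\op{i}{j})$ together with $\langle j|\rho^{\T}|i\rangle=\langle i|\rho|j\rangle$.

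For $\cv(\mc{N})\le(\text{RHS})$ I would take an arbitrary signal ensemble $\{\rho_x\}_{x=1}^{n}$ and POVM $\{\Pi_y\}_{y=1}^{n'}$, let $f(y)$ be an index with $\tr[\Pi_y\mc{N}(\rho_{f(y)})]=\max_x\tr[\Pi_y\mc{N}(\rho_x)]$, so that the induced classical channel $\mbf{P}$ satisfies $\cv(\mbf{P})=\sum_y\tr[\Pi_y\mc{N}(\rho_{f(y)})]$, and then set $\Omega^{AB}:=\sum_y\rho_{f(y)}^{\T}\otimes\Pi_y$. Each summand is a tensor product of positive operators (since $\rho_{f(y)}\ge0$ implies $\rho_{f(y)}^{\T}\ge0$), so $\Omega^{AB}\in\sep(A:B)$; moreover $\tr_A\Omega^{AB}=\sum_y(\tr\rho_{f(y)})\Pi_y=\sum_y\Pi_y=\mbb{I}^B$, so $\Omega^{AB}$ is feasible. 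The Choi identity then gives $\tr[\Omega^{AB}J_\mc{N}]=\sum_y\tr[\Pi_y\mc{N}(\rho_{f(y)})]=\cv(\mbf{P})$. Taking the supremum over all $\{\rho_x\},\{\Pi_y\}$ yields $\cv(\mc{N})\le(\text{RHS})$.

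For the reverse inequality I would start from any feasible $\Omega^{AB}$ and fix a separable decomposition $\Omega^{AB}=\sum_i t_i\op{\alpha_i}{\alpha_i}\otimes\op{\beta_i}{\beta_i}$ with $t_i\ge0$ and $\ket{\alpha_i},\ket{\beta_i}$ unit vectors. Defining the pure states $\rho_i:=\overline{\op{\alpha_i}{\alpha_i}}$ (so that $\rho_i^{\T}=\op{\alpha_i}{\alpha_i}$) and the operators $\Pi_i:=t_i\op{\beta_i}{\beta_i}$, one checks $\sum_i\Pi_i=\tr_A\Omega^{AB}=\mbb{I}^B$, so $\{\Pi_i\}$ is a POVM and $\{\rho_i\}$ a valid ensemble. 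Applying the Choi identity termwise and then bounding the diagonal entries by the row maxima of the induced channel $\mbf{P}$,
\begin{align*}
\tr[\Omega^{AB}J_\mc{N}]&=\sum_i\tr[(\rho_i^{\T}\otimes\Pi_i)J_\mc{N}]=\sum_i\tr[\Pi_i\mc{N}(\rho_i)]\\
&\le\sum_i\max_j\tr[\Pi_i\mc{N}(\rho_j)]=\cv(\mbf{P})\le\cv(\mc{N}),
\end{align*}
and maximizing over feasible $\Omega^{AB}$ gives $(\text{RHS})\le\cv(\mc{N})$. That the optimum on the right is attained (so that ``$\max$'' is justified) follows from compactness of the feasible set, which is closed and bounded because $\tr\Omega^{AB}=\tr[\tr_A\Omega^{AB}]=\tr\mbb{I}^B=d_B$.

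I do not expect a genuine obstacle; the argument is essentially bookkeeping once the Choi identity is in hand. The single delicate point is keeping the transpose on system $A$ (which enters the Choi identity) consistent with the complex conjugation needed to turn an outer product $\op{\alpha_i}{\alpha_i}$ into a legitimate signal state $\rho_i$ — these match because $\overline{\op{\alpha}{\alpha}}^{\T}=\op{\alpha}{\alpha}$. It is also worth stating explicitly that no Carath\'eodory-type bound on the number of terms in the separable decomposition is needed, since $\cv(\mc{N})$ is defined as a supremum over all input and output alphabet sizes.
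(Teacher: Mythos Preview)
Your argument is correct and essentially the same as the paper's: both directions use the Choi identity $\tr[(\rho^{\T}\otimes M)J_\mc{N}]=\tr[M\mc{N}(\rho)]$ to translate between encoding/decoding pairs and separable operators $\Omega^{AB}$ with $\tr_A\Omega^{AB}=\mbb{I}^B$. The only cosmetic difference is that the paper first invokes Eq.~\eqref{Eq:cv-carethedory} to reduce to a bijective labeling with $d_B^2$ outcomes (so that $\cv(\mc{N})=\max\sum_x\tr[\Pi_x\mc{N}(\rho_x)]$ directly), whereas you handle the $\max_x$ explicitly via the selector $f$; your extra remarks on compactness and the transpose bookkeeping are welcome but not needed beyond what the paper does.
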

\begin{proof}
By Eq. \eqref{Eq:cv-carethedory} we have
\begin{align}
\cv(\mc{N})&=\max_{\{\Pi_x\}, \{\rho_x\}}\sum_{x=1}^{d_B^2}\tr[\Pi_x\mc{N}(\rho_x)]\notag\\
&=\max_{\{\Pi_x\}, \{\rho_x\}}\sum_{x=1}^{d_B^2}\tr[\rho_x^T\otimes\Pi_x(J_{\mc{N}})]\notag\\
&=\max \tr[\Omega^{AB}J_{\mc{N}}],
\end{align}
where $\Omega^{AB}=\sum_{x=1}^{d_B^2}\rho_x^T\otimes\Pi_x$ satisfies the conditions of Eq. \eqref{Eq:Proposition-cv}.  Conversely, any $\Omega^{AB}\in\sep(A:B)$ can be written as $\Omega^{AB}=\sum_{x}\op{\psi_x}{\psi_x}^A\otimes\omega_x^B$ with $\ket{\psi_x}$ being a pure state.  The condition $\tr_A[\Omega^{AB}]=\mbb{I}^B$ implies that $\{\omega_x\}_x$ constitutes a POVM.
\end{proof}
\noindent Note that strong duality holds for the conic program here, and so Proposition \ref{Prop:Proposition-cv} can be cast in dual form as
\begin{align}
\cv(\mc{N})=\min &\;\;\tr[Z^{B}]\notag\\
\text{subject to}&\;\;\mbb{I}^{A}\otimes Z^{B}-J_{\mc{N}}^{AB}\in\sep^*(A:B).
\label{Eq:cv-dual}
\end{align}

%\begin{remark}
%If we replace the condition $\tr_A[\Omega^{AB}]=\mbb{I}^B$ by $\tr[\Omega^{AB}]=1$ in the maximization of Proposition \ref{Prop:Proposition-cv}, then we recover the maximum output entropy of the channel w.r.t. the operator norm \cite{Werner-2002a}. \todo{[[Do we want to say anything more about this connection?]]}
%\end{remark}

In Section \ref{Sect:Capacity-NS}, we will explore different relaxations to this problem by considering outer approximations of $\sep(A:B)$.  For example, the cone $\ppt(A:B)$, which consists of all bipartite positive operators having a positive partial transpose, contains $\sep(A:B)$ \cite{Peres-1996a}.  Replacing $\sep(A:B)$ with $\ppt(A:B)$ in Eq. \eqref{Eq:Proposition-cv} gives us a semi-definite program (SDP).  Furthermore, since $\sep(A:B)=\ppt(A:B)$ whenever $d_Ad_B\leq 6$ \cite{Horodecki-1996a}, we thus obtain the following.
\begin{corollary}
\label{cor:Proposition-PPT}
Suppose $\mc{N}\in\cptp(A\to B)$ with $d_Ad_B\leq 6$.  Then
\begin{align}
\cv(\mc{N})=\max &\;\;\tr[\Omega^{AB}J_\mc{N}]\notag\\
\text{subject to}\;&\;\;\tr_A[\Omega^{AB}]=\mbb{I}^B\notag\\
&\;\;\Omega^{AB}\in\ppt(A:B).
\label{Eq:Proposition-PPT}
\end{align}
\end{corollary}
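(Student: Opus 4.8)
The plan is to derive this straight from Proposition~\ref{Prop:Proposition-cv} by swapping the separability constraint for a positive-partial-transpose constraint, which is valid exactly in the stated dimension regime. First I would invoke the Peres criterion \cite{Peres-1996a}, which gives $\sep(A:B)\subseteq\ppt(A:B)$ in every dimension; hence enlarging the feasible set of the program in Eq.~\eqref{Eq:Proposition-cv} from $\sep(A:B)$ to $\ppt(A:B)$ can only increase the optimal value, establishing the ``$\leq$'' direction of the asserted identity. This also makes explicit that the right-hand side of Eq.~\eqref{Eq:Proposition-PPT} is a genuine SDP, since the constraints $\tr_A[\Omega^{AB}]=\mbb{I}^B$, $\Omega^{AB}\geq 0$, and $(\Omega^{AB})^{\T_B}\geq 0$ are all semidefinite-representable.

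For the reverse inequality I would use the low-dimensional separability characterization of \cite{Horodecki-1996a}: for $d_Ad_B\leq 6$ (the $2\otimes 2$ and $2\otimes 3$ cases) every operator with positive partial transpose is separable, so $\ppt(A:B)=\sep(A:B)$ as cones. Thus the feasible sets of the two optimizations coincide and so do their optimal values; combined with Proposition~\ref{Prop:Proposition-cv} this yields Eq.~\eqref{Eq:Proposition-PPT}.

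One minor point to address is that the equivalence in \cite{Horodecki-1996a} is often stated for normalized states, whereas here the feasible operators are unnormalized positive operators with a prescribed partial trace; but $\sep(A:B)$ and $\ppt(A:B)$ are both cones, so the set-level equality is insensitive to normalization and no extra argument is required. There is no real obstacle in this proof: essentially all of the content sits in the cited low-dimensional criterion, and beyond a careful statement of the cone identity nothing further is needed.
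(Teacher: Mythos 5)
Your argument is correct and follows the same route the paper takes: cite \cite{Horodecki-1996a} to get $\sep(A:B)=\ppt(A:B)$ when $d_A d_B\leq 6$ and substitute into Proposition~\ref{Prop:Proposition-cv}. The only difference is cosmetic — the ``$\leq$'' step via \cite{Peres-1996a} is subsumed once you invoke the low-dimensional cone equality, and your remark that the Horodecki result extends from normalized states to the full cone by positive scaling is a reasonable, if small, point of care that the paper leaves implicit.
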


\subsection{An Entropic Characterization of Communication Value}
\label{sec:entropic-characterization-of-cv}

\subsubsection{The Conditional Separable Min-Entropy}

An alternative but related manner of characterizing the communication value is in terms of the min-entropy or variations of it. Equation \eqref{Eq:cv-dual} might strike the reader as closely resembling the conditional min-entropy of $J_\mc{N}$.  Recall that the conditional min-entropy of a positive bipartite operator $\omega^{AB}$ is given by
\begin{align}
    H_{\min}(A|B)_{\omega}=-\min_{\sigma^B\in\mc{D}(B)}D_{\max}(\omega\Vert \mbb{I}^A\otimes \sigma^B),
\end{align}
where $D_{\max}(\mu\Vert\nu)=\min\{\lambda\;|\;\mu\leq 2^\lambda \nu\}$ \cite{Konig-2009a}.  Here $\leq$ denotes a generalized inequality over the convex cone of positive-semidefinite operators; i.e. $X\leq Y$ iff $Y-X\in \Pos(AB)$.  Equivalently, we can combine the two minimizations in the definition of $H_{\min}$ to write
\begin{align}
    \exp[-H_{\min}(A|B)_{\omega}]=\min &\;\;\tr[Z^{B}]\notag\\
\text{subject to}&\;\;\mbb{I}^{A}\otimes Z^{B}-\omega^{AB}\in\Pos(AB).
\end{align}
Comparing with Eq. \eqref{Eq:cv-dual}, we see that cv is recovered by changing the cone from $\Pos(AB)$ to $\sep^*(A:B)$.  Let us denote the cone inequality over $\sep^*(A:B)$ by $\leq_{\sep^*}$ such that $X\leq_{\sep^*}\! Y$ iff $Y-X\in\sep^*(A:B)$.  Then we can introduce a restricted conditional min-entropy.
\begin{definition}\label{defn:sep-min-entropy}
The conditional \textit{separable} min-entropy of a positive bipartite operator $\omega^{AB}$ is defined as
\begin{equation}\label{eqn:sep-min-ent-def}
   H_{\min}^{\sepl}(A|B)_{\omega}=-\min_{\sigma^B\in\mc{D}(B)}D^{\sepl}_{\max}(\omega\Vert \mbb{I}^A\otimes \sigma^B),
\end{equation}
where $D^{\sepl}_{\max}(\mu\Vert\nu)=\min\{\lambda\;|\;\mu\leq_{\sep^*}\! 2^\lambda \nu\}$.
\end{definition}
\noindent By Eq. \eqref{Eq:cv-dual}, we therefore have
\begin{equation} \label{eqn:cv-sep-min-relation}
   \cv(\mN)=\exp[-H^{\sepl}_{\min}(A|B)_{J_\mc{N}}].
\end{equation}

The separable min-entropy enjoys a data-processing inequality under one-way LOCC from Bob to Alice.  The latter consists of any bipartite map $\Phi\in \cptp(AB \to A'B')$ having the form $\Phi=\sum_i\mN_i\otimes\mM_i$, where $\mN_i\in\cptp(A\to A')$ and $\sum_i\mM_i\in \cptp(B\to B')$ with each individual $\mM_i$ being CP.  In fact we can prove the data-processing inequality under an even larger class of operations.%, which is the same as for the standard min-entropy \cite{Tomamichel-2015}.
\begin{proposition}\label{eqn:sep-min-DPI}
Let $\Phi:\Pos(AB)\to\Pos(A'B')$ be any positive map whose adjoint is non-entangling (i.e. $\Phi^\dagger:\sep(A':B')\to\sep(A:B)$) and that further satisfies $\Phi(\mbb{I}^A\otimes \sigma^{B'})\leq\mbb{I}^{A'}\otimes\phi(\sigma^{B'})$ for some trace-preserving map $\phi:\Pos(B)\to\Pos(B')$.  Then $H^{\sepl}_{\min}(A|B)_{P} \leq H^{\sepl}_{\min}(A|B)_{\Phi(P)} $ for all $P \in \mathrm{Pos}(AB)$.
\end{proposition}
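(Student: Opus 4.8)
The plan is to unfold both sides via Definition~\ref{defn:sep-min-entropy} and work directly with the $\leq_{\sep^*}$ inequality, being careful about the direction. Write $\exp[-H^{\sepl}_{\min}(A|B)_{P}] = \min\{\tr[Z^B] \;|\; \mathbb{I}^A\otimes Z^B - P \in \sep^*(A:B)\}$, using the combined form analogous to Eq.~\eqref{Eq:cv-dual}. The goal is to show $\exp[-H^{\sepl}_{\min}(A|B)_{\Phi(P)}] \leq \exp[-H^{\sepl}_{\min}(A|B)_{P}]$, i.e. that any feasible $Z^B$ for $P$ can be pushed forward to a feasible point for $\Phi(P)$ of no larger trace. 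Given $Z^B$ with $\mathbb{I}^A\otimes Z^B - P \in \sep^*(A:B)$, the natural candidate on the primed side is $Z'^{B'} := \phi(Z^B)$, where $\phi$ is the trace-preserving map from the hypothesis. Trace-preservation of $\phi$ gives $\tr[Z'^{B'}] = \tr[Z^B]$ immediately, so the only thing to check is feasibility: $\mathbb{I}^{A'}\otimes \phi(Z^B) - \Phi(P) \in \sep^*(A':B')$.

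To verify this, I would test against an arbitrary $\tau \in \sep(A':B')$ and show $\langle \mathbb{I}^{A'}\otimes\phi(Z^B) - \Phi(P),\, \tau\rangle \geq 0$. The key decomposition is to write this as $\langle \mathbb{I}^{A'}\otimes\phi(Z^B),\tau\rangle - \langle \Phi(P),\tau\rangle$. For the second term, use the adjoint: $\langle \Phi(P),\tau\rangle = \langle P, \Phi^\dagger(\tau)\rangle$, and since $\tau\in\sep(A':B')$ the non-entangling hypothesis gives $\Phi^\dagger(\tau)\in\sep(A:B)$. Then feasibility of $Z^B$ on the unprimed side yields $\langle \mathbb{I}^A\otimes Z^B - P,\, \Phi^\dagger(\tau)\rangle \geq 0$, i.e. $\langle P,\Phi^\dagger(\tau)\rangle \leq \langle \mathbb{I}^A\otimes Z^B,\, \Phi^\dagger(\tau)\rangle$. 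So it suffices to show $\langle \mathbb{I}^A\otimes Z^B,\, \Phi^\dagger(\tau)\rangle \leq \langle \mathbb{I}^{A'}\otimes\phi(Z^B),\, \tau\rangle$. The left side equals $\langle \Phi(\mathbb{I}^A\otimes Z^B),\, \tau\rangle$ by another application of the adjoint relation. Now invoke the sub-unitality-type hypothesis: $\Phi(\mathbb{I}^A\otimes\sigma^B) \leq \mathbb{I}^{A'}\otimes\phi(\sigma^B)$ for all $\sigma^B$ — here I would need this to hold not just for density operators but for all positive $\sigma^B$ (which follows by homogeneity/scaling, and for general Hermitian $Z^B$ by splitting into positive and negative parts, or one can arrange $Z^B \geq 0$ in a feasible solution without loss of generality since the min-entropy optimum can be taken with $Z^B \geq 0$). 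Applying it with $\sigma^B = Z^B$ gives $\Phi(\mathbb{I}^A\otimes Z^B) \leq \mathbb{I}^{A'}\otimes\phi(Z^B)$ in the PSD order, and pairing both sides with $\tau \geq 0$ (note $\sep(A':B')\subseteq\Pos(A'B')$) preserves the inequality: $\langle \Phi(\mathbb{I}^A\otimes Z^B),\tau\rangle \leq \langle \mathbb{I}^{A'}\otimes\phi(Z^B),\tau\rangle$. Chaining the inequalities closes the argument.

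The main obstacle I anticipate is the technical point of extending the hypothesis $\Phi(\mathbb{I}^A\otimes\sigma^{B'})\leq\mathbb{I}^{A'}\otimes\phi(\sigma^{B'})$ from density operators to the operator $Z^B$ appearing in the optimization, which need not be a density operator (it is PSD but generally not trace-one, and a priori one might worry it is not even PSD). Positive scaling handles the trace-one restriction since both sides are homogeneous of degree one in $\sigma$. For PSD-ness: in the $D_{\max}^{\sepl}$ optimization one minimizes $\lambda$ with $P \leq_{\sep^*} 2^\lambda(\mathbb{I}^A\otimes\sigma^B)$; since $\Pos(AB)\subseteq\sep^*(A:B)$, the operator $2^\lambda\mathbb{I}^A\otimes\sigma^B - P$ lying in $\sep^*$ does not force $Z^B = 2^\lambda\sigma^B$ to dominate $P$ in PSD order, but $\sigma^B$ itself is a density operator by the definition, so $Z^B = 2^\lambda\sigma^B \geq 0$ automatically — this resolves the concern cleanly. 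A secondary point worth stating carefully is that $\Phi$ being merely a positive map (not necessarily CP) is enough everywhere above, since we only ever use positivity of $\Phi$, positivity of $\Phi^\dagger$ on the separable cone, and the explicit sub-unitality bound; no Choi-positivity is needed. I would also remark that one-way LOCC maps from Bob to Alice satisfy all three hypotheses — the non-entangling adjoint condition and the bound $\Phi(\mathbb{I}^A\otimes\sigma^B)\leq\mathbb{I}^{A'}\otimes\phi(\sigma^B)$ with $\phi = \sum_i\mM_i$ — so Proposition~\ref{eqn:sep-min-DPI} indeed subsumes the one-way LOCC data-processing inequality claimed in the text.
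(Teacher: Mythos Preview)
Your proof is correct and follows essentially the same route as the paper's. The paper streamlines the argument by observing at the outset that $\Phi^\dagger:\sep(A':B')\to\sep(A:B)$ is equivalent to $\Phi:\sep^*(A:B)\to\sep^*(A':B')$, so from $2^\lambda\mbb{I}\otimes\sigma-P\in\sep^*$ one gets $2^\lambda\Phi(\mbb{I}\otimes\sigma)-\Phi(P)\in\sep^*$ directly, then uses $\Phi(\mbb{I}\otimes\sigma)\leq\mbb{I}\otimes\phi(\sigma)$ together with $\Pos\subset\sep^*$ to conclude; your inner-product unpacking against a test element $\tau\in\sep(A':B')$ is exactly the dual-cone verification of this same statement.
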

\begin{proof}
Since $\Phi^\dagger$ preserves separability, we must have $\Phi(Q)\in\sep^*(A':B')$ for all $Q\in \sep^*(A:B)$.  Hence, 
\begin{align}
   2^{\lambda}\mbb{I} \otimes \sigma-P\geq_{\sep^*} 0\;&\Rightarrow\;2^{\lambda}\Phi(\mbb{I} \otimes \sigma)-\Phi(P)\geq_{\sep^*} 0\notag\\
   &\Rightarrow\;2^{\lambda}\mbb{I} \otimes \phi(\sigma)-\Phi(P)\geq_{\sep^*} 0.\notag
\end{align}
In other words, any feasible pair $(\sigma,\lambda)$ in the minimization of $H_{\min}^{\sepl}(A|B)_P$ also leads to a feasible pair for $H_{\min}^{\sepl}(A|B)_{\Phi(P)}$.  The $-1$ factor in the definition of $H_{\min}^{\sepl}$ then implies the proposition.
\end{proof}
\noindent The maps of Proposition \ref{eqn:sep-min-DPI} include those of the form $\Phi=\mM\otimes\mN$, where $\mM$ is sub-unital and $\mN$ is CPTP.  These maps are known to satisfy the data-processing inequality for the standard min-entropy \cite{Tomamichel-2015}.  However we suspect that Proposition \ref{eqn:sep-min-DPI} includes maps for which the standard min-entropy data-processing inequality does \textit{not} hold.

We can apply Proposition \ref{eqn:sep-min-DPI} to the processing of Choi matrices.  However, in this case not all maps $\Phi$ satisfying the conditions of Proposition \ref{eqn:sep-min-DPI} are physically meaningful.  Specifically, we require the additional condition that $\tr_{B'}\Phi(P)=\mbb{I}^{A'}$ for all operators $P$ in which $\tr_{B}P=\mbb{I}^A$.  This assures that $\Phi$ maps Choi matrices to Choi matrices.  One particular class of maps having this form are those in which $\Phi$ is a product of a positive unital map and a CPTP map, i.e. $\Phi=\mc{E}_{\text{pre}}^\dagger \otimes\mc{E}_{\text{post}}$.  In this case, $\mc{E}_{\pre}$ and $\mc{E}_{\post}$ are pre- and post-processing maps for a given channel, respectively.  As a consequence of Proposition \ref{eqn:sep-min-DPI} we therefore observe the following corollary, which can also be seen directly from the definition of communication value.
\begin{corollary}\label{corr:pre-post-proc}
Communication value is non-increasing under pre- and post-processing of the channel.
\end{corollary}

Note that for classical systems $X$ and $Y$, we have $\Pos(XB)=\sep(X:B)$ and $\Pos(AY)=\sep(A:Y)$.  These correspond to classical-to-quantum and quantum-to-classical channels, respectively, and in these cases Eq. \eqref{eqn:cv-sep-min-relation} reduces to
\begin{align}
    \cv(\mN^{X\to B})&=\exp[-H_{\min}(X|B)_{J_\mN}]\label{Eq:cv-cq}\\
    \cv(\mN^{A\to Y})&=\exp[-H_{\min}(A|Y)_{J_\mN}]\label{Eq:cv-qc}.
\end{align}

\subsubsection{The max-Holevo Information}

The cv can be further related to the max-Holevo information of a channel, $\chi_{\max}(\mN)$.  This quantity has been introduced in the study of ``sandwiched'' R\'{e}nyi divergences \cite{Wilde-2014a, Beigi-2013a} and is defined as
\begin{align*}
\chi_{\max}(\mN) = \max_{\rho^{XA}} \min_{\sigma^{B}} D_{\max}(\rho^{XB}||\rho^{X} \otimes \sigma^{B}),
\end{align*}
where the maximization is taken over all cq states $\rho^{XA}=\sum_xp(x)\op{x}{x}\otimes\rho_x^A$ and
\[\rho^{XB}:=\sum_xp(x)\op{x}{x}\otimes\mN(\rho_x^A).\]
In fact, since $D_{\max}$ is quasi-convex (i.e. $D_{\max}(\sum_ip(i)\rho_i\Vert\sum_ip(i)\sigma_i)\leq\max_i D_{\max}(\rho_i\Vert\sigma_i)$ \cite{Datta-2009a}), if follows that we can restrict attention to pure $\rho_x^A=\op{\psi_x}{\psi_x}^A$ in the definition of $\chi_{\max}$.  Letting $U$ be the unitary such that $U\ket{x}=\ket{\psi_x}$, the maximization over $\rho^{XA}$ can then be replaced by a maximization over $U$ such that
\begin{equation}
\label{Eq:cq-state-simplify}
    \rho^{XA}=(\mbb{I}\otimes U)\sum_{x}p(x)\op{xx}{xx}(\mbb{I}\otimes U)^\dagger.
\end{equation}

We use this simplification to prove a relationship between channel $\chi_{\max}$ and conditional $H^{\sepl}_{\min}$.
\begin{theorem}
\label{Thm:chi-max-hmin}
For any channel $\mN^{A\to B}$,
\begin{equation}
    \chi_{\max}(\mN)=\log\cv(\mN).
\end{equation}
\end{theorem}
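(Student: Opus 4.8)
The plan is to reduce both $\chi_{\max}(\mN)$ and $\log\cv(\mN)$ to the \emph{same} min--max expression over output density operators, and then to interchange an inner supremum with an infimum. Throughout, write $\Psi$ for the set of pure states on $A$. For $\chi_{\max}$, I would use the reduction already recorded before the theorem statement: it suffices to take pure $\rho_x^A=\dyad{\psi_x}$ and cq states of the form \eqref{Eq:cq-state-simplify}. For such a choice, $\rho^{XB}=\sum_x p(x)\dyad{x}\otimes\mN(\dyad{\psi_x})$ and $\rho^X\otimes\sigma^B=\sum_x p(x)\dyad{x}\otimes\sigma^B$ are both block diagonal in the $X$ register, so $D_{\max}(\rho^{XB}\Vert\rho^X\otimes\sigma)=\max_{x:\,p(x)>0}D_{\max}(\mN(\dyad{\psi_x})\Vert\sigma)$, with the weights $p(x)$ cancelling; in particular this depends only on the \emph{set} of signal states used. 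Hence
\[\chi_{\max}(\mN)=\sup_{S}\ \min_{\sigma\in\mc{D}(B)}\ \max_{\psi\in S}D_{\max}(\mN(\dyad{\psi})\Vert\sigma),\]
the supremum being over finite subsets $S\subseteq\Psi$.

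Next I would produce the matching formula for $\cv$. Testing the dual constraint in \eqref{Eq:cv-dual} against pure product operators $\dyad{\alpha}\otimes\dyad{\beta}$ and using the Choi identity $\bra{\alpha\beta}J_\mN\ket{\alpha\beta}=\bra{\beta}\mN(\dyad{\ol\alpha})\ket{\beta}$ (where $\alpha\mapsto\ol\alpha$ is a bijection of $\Psi$), one finds that $\mbb{I}^A\otimes Z^B-J_\mN\in\sep^*(A:B)$ holds iff $Z^B\geq\mN(\dyad{\psi})$ for every $\psi\in\Psi$. Any feasible $Z$ then has $Z\geq 0$ and $\tr Z>0$, so writing $Z=2^\lambda\sigma$ with $\sigma\in\mc{D}(B)$, the constraint reads $\lambda\geq\max_{\psi\in\Psi}D_{\max}(\mN(\dyad{\psi})\Vert\sigma)$; minimizing $\tr Z=2^\lambda$ over the feasible set gives
\[\log\cv(\mN)=\min_{\sigma\in\mc{D}(B)}\ \max_{\psi\in\Psi}D_{\max}(\mN(\dyad{\psi})\Vert\sigma).\]

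With the two formulas in hand, the inequality $\chi_{\max}(\mN)\leq\log\cv(\mN)$ is immediate, since for fixed $\sigma$ a maximum over a finite $S\subseteq\Psi$ is at most the maximum over all of $\Psi$. The reverse inequality is the heart of the matter and requires exchanging $\sup_S$ with $\min_\sigma$. Fix $\lambda<\log\cv(\mN)$ and consider the compact convex sets $C_\psi:=\{\sigma\in\mc{D}(B):\mN(\dyad{\psi})\leq 2^\lambda\sigma\}$ for $\psi\in\Psi$. Their total intersection is empty, since any common $\sigma$ would witness $\log\cv(\mN)\leq\lambda$. By Helly's theorem in the $(d_B^2-1)$-dimensional affine hull of $\mc{D}(B)$, already some $d_B^2$ of these sets, say $C_{\psi_1},\dots,C_{\psi_{d_B^2}}$, have empty intersection, i.e.\ $\min_\sigma\max_{1\leq j\leq d_B^2}D_{\max}(\mN(\dyad{\psi_j})\Vert\sigma)>\lambda$. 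Taking $S=\{\psi_1,\dots,\psi_{d_B^2}\}$ in the formula for $\chi_{\max}$ yields $\chi_{\max}(\mN)>\lambda$, and letting $\lambda\uparrow\log\cv(\mN)$ finishes the proof. (As an alternative to Helly, one can invoke Sion's minimax theorem for the function $(p,\sigma)\mapsto\int 2^{D_{\max}(\mN(\dyad{\psi})\Vert\sigma)}\,dp(\psi)$, over probability measures $p$ on $\Psi$ and $\sigma\in\mc{D}(B)$, using that it is linear in $p$ and that $\sigma\mapsto 2^{D_{\max}(\omega\Vert\sigma)}$ is convex.)

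The main obstacle is exactly this $\sup$--$\min$ interchange: the objective is a \emph{maximum} over the signal states, not an average, so ordinary bilinear minimax does not apply directly, and $\sigma\mapsto D_{\max}(\cdot\Vert\sigma)$ is only quasiconvex; Helly's theorem (or passing to $2^{D_{\max}}$ before applying Sion) is what bridges the gap. A minor technical point is that $D_{\max}(\mN(\dyad{\psi})\Vert\sigma)=+\infty$ when $\supp\sigma$ fails to contain $\operatorname{range}(\mN)$; restricting attention to $\sigma$ supported on $\operatorname{range}(\mN)$ — which costs nothing — makes all functions above finite and continuous, so the displayed minima and maxima are attained.
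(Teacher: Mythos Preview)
Your argument is correct and, in its overall shape, is closest to the paper's \emph{alternative} derivation (the chain of equalities following Eq.~\eqref{eqn:cv-max-holevo-alt}): both routes rewrite the dual constraint $\mbb{I}^A\otimes Z^B-J_\mN\in\sep^*(A:B)$ as the family of operator inequalities $Z^B\geq\mN(\dyad{\psi})$, $\psi\in\Psi$, and both recast $\chi_{\max}$ as a supremum over finite signal sets of $\min_\sigma\max_{\psi\in S}D_{\max}(\mN(\dyad{\psi})\Vert\sigma)$. The paper's \emph{primary} proof is organized differently: it works throughout with the separable max-relative entropy $D_{\max}^{\sepl}$ of the Choi matrix, obtains $\chi_{\max}\leq\log\cv$ from the data-processing inequality of Proposition~\ref{eqn:sep-min-DPI} applied to the dephasing map $\Delta_{U^T}\otimes\id$, and for the reverse direction exhibits, for each fixed $\sigma^B$, a single pure state (hence a unitary $U$) that saturates $D_{\max}^{\sepl}(J_\mN\Vert\mbb{I}\otimes\sigma^B)$.

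What your write-up adds is an explicit and clean justification of the $\sup$--$\min$ interchange, via Helly's theorem on the compact convex sets $C_\psi=\{\sigma:\mN(\dyad{\psi})\leq 2^\lambda\sigma\}$ (or alternatively via Sion applied to $2^{D_{\max}}$). The paper's proofs treat this exchange informally: the alternative proof simply writes the equality $\cv(\mN)=\sup_{\{\rho_x\}}\min\{\tr Z:Z\geq\mN(\rho_x)\ \forall x\}$ without isolating why finitely many constraints suffice, and the primary proof's concluding sentence (``since this holds for all $\sigma^B$ and we are maximizing over $U$'') asserts $\max_U\min_\sigma\geq\min_\sigma$ from the pointwise statement $\max_U(\cdot,\sigma)=D_{\max}^{\sepl}(J_\mN\Vert\mbb{I}\otimes\sigma)$, which is exactly the minimax step you make explicit. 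Your Helly argument also yields the pleasant by-product that $d_B^2$ signal states already achieve $\chi_{\max}(\mN)$, matching the Carath\'eodory-type bound in Eq.~\eqref{Eq:cv-carethedory}.
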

\begin{proof}
Using Eq. \eqref{Eq:cq-state-simplify} we have    \[\rho^{XB}=\sum_xp(x)\op{x}{x}\otimes\rho_x,\] where $\rho_x=\mN(U\op{x}{x}U^\dagger)$.  Since $\rho^X=\sum_xp(x)\op{x}{x}$, the definition of $D_{\max}$ yields
\begin{align}
   D_{\max}(\rho^{XB}\Vert\rho^X\otimes\sigma^B)&=\min\{\lambda\;|\;\rho_x\leq 2^\lambda \sigma,\;\forall x\}\notag\\
   &=D_{\max}(\wt{\rho}^{XB}\Vert\mbb{I}\otimes\sigma^B)\notag\\
   &=D_{\max}^{\sepl}(\wt{\rho}^{XB}\Vert\mbb{I}\otimes\sigma^B),\label{Eq:Dmax-ineq}
\end{align}
where
\begin{align}
    \wt{\rho}^{XB}&=\sum_x\op{x}{x}\otimes\mN(U\op{x}{x}U^\dagger) \notag\\
    &=\Delta_{U^T}\otimes\id^B(J_\mN),
\end{align}
and $\Delta_{U^T}(\tau)=\sum_{x}\op{x}{x}U^T(\tau) U^*\op{x}{x}$ is a completely dephasing map after applying the rotation $U^T$.  The last equality in Eq. \eqref{Eq:Dmax-ineq} follows from the fact that $\Pos(XB)=\sep(X:B)$, as noted above.  Then by data-processing (Proposition \ref{eqn:sep-min-DPI}), we have
\begin{align}
    D_{\max}^{\sepl}(\Delta_{U^T}\otimes\id^B(J_\mN)\Vert \mbb{I}\otimes\sigma^B)\leq D_{\max}^{\sepl}(J_\mN\Vert \mbb{I}\otimes\sigma^B) \notag
\end{align}
for any $\sigma^B$ and unitary $U$ on system $A$.  Hence from the definitions it follows that
\begin{equation}
    \chi_{\max}(\mN)\leq -H_{\min}^{\sepl}(A|B)_{J_{\mN}}=\log\cv(\mN).
\end{equation}

To prove the reverse inequality, for arbitrary $\sigma^B$ let $\lambda_0=D_{\max}^{\sepl}(J_\mN\Vert\mbb{I}\otimes\sigma^B)$.  Hence
\begin{equation}
    2^{\lambda_0}\mbb{I}\otimes\sigma^B-J_{\mc{N}}\in\sep^*(A:B),
\end{equation}
and since $\lambda_0$ is a minimizer, there must exist some product state $\ket{\alpha}\ket{\beta}$ such that
\begin{equation}
    2^{\lambda_0}\bra{\beta}\sigma^B\ket{\beta}=\bra{\beta}\mN(\op{\alpha^*}{\alpha^*})\ket{\beta}.
\end{equation}
Let $U$ be any unitary that rotates $\{\ket{x}\}_{x=1}^{d_A}$ such that $U\ket{1}=\ket{\alpha^*}$.  Therefore
\begin{align}
   2^{\lambda_0}\bra{\alpha,\beta}\mbb{I}\otimes\sigma^B\ket{\alpha,\beta}= \bra{\alpha,\beta}\Delta_{U^T}\otimes\id^B(J_\mN)\ket{\alpha,\beta}, 
\end{align}
which means that $D_{\max}^{\sepl}(\Delta_{U^T}\otimes\id^B(J_\mN)\Vert \mbb{I}\otimes\sigma^B)$ can be no less than $\lambda_0=D_{\max}^{\sepl}(J_\mN\Vert\mbb{I}\otimes\sigma^B)$.  Since this holds for all $\sigma^B$ and we are maximizing over $U$, we have
\begin{equation}
\chi_{\max}(\mN)\geq -H_{\min}^{\sepl}(A|B)_{J_{\mN}}=\log\cv(\mN).
\end{equation}
\end{proof}

We close this section by providing an alternative proof of Theorem \ref{Thm:chi-max-hmin}.  Instead of going through the Choi matrix, the following argument relies on a characterization of cv in terms of maximizing the min-entropy over encodings. In some sense this is intuitive as the communication value is optimizing minimal error discrimination, which min-entropy characterizes \cite{Konig-2009a}.  For this reason, the conceptual underpinning of this alternative derivation may be of interest in other applications.

 Let $\{\rho_{x}^{A}\}$ denote a subset of states for some alphabet $\mX$, $\rho^{XA}$ be a cq state defined using $\{\rho_{x}^{A}\}$, $\rho_{U}$ be the maximally mixed state on the relevant space, and $\rho^{XB} := (\mathrm{id}^{X} \otimes \mN)(\rho^{XA})$. Starting from \eqref{Eq:cv-dual},
 \begin{align}
    \cv(\mN) 
    =& \min\{\tr[Z^{B}] : \mbb{I}^{A} \otimes Z^{B} \geq_{\sep^{*}} J_{\mN} \} \notag \\
    =& \sup_{\{\rho_{x}^{A}\}} \min\{\tr[Z^{B}] : Z^{B} \geq \mN(\rho_{x}^{A}) \, \forall x \in \mX \} \notag \\
    =& \sup_{\substack{\rho^{XA}: \\ \rho^{X} = \rho_{U}}} |\mX| \min\{\tr[\wt{Z}^{B}] : \mbb{I}^{A} \otimes \wt{Z}^{B} \geq \rho^{XB} \} \notag \\
    =& \sup_{\substack{\rho^{XA}: \\ \rho^{X} = \rho_{U}}} |\mX| \exp(-H_{\min}(X|B)_{\rho^{XB}}) \notag \\
    =& \sup_{\{\rho_{x}^{A}\}} \min_{\sigma^B} \lambda_{\max}(\sum_{x} \op{x}{x} \otimes \sigma^{-1/2 } \rho^{B}_{x} \sigma^{-1/2}) \notag \\ 
    =& \sup_{\rho^{XA}} \min_{\sigma^{B}} \exp( D_{\max}(\rho^{XB}||\rho^{X} \otimes \sigma^{B})) \notag \\
    =& \exp(\chi_{\max}) \label{eqn:cv-max-holevo-alt} \ ,
 \end{align}
 where the second equality is using $X^{AB} \in \sep^{*} \Leftrightarrow \bra{\alpha}\bra{\beta}X\ket{\alpha}\ket{\beta}$ for all unit vectors $\alpha,\beta$ and the action of the channel in terms of the Choi, the third is by using uniform probability on $\mX$, the fourth is by definition of min-entropy, the fifth is using $D_{\max}(\rho||\sigma) = \lambda_{\max}(\sigma^{-1/2}\rho \sigma^{-1/2})$, the sixth is using that $p(x)^{-1/2}p(x)p(x)^{-1/2} = 1$, and the final equality is by definition.

\subsection{Communication Value in Terms of Singlet Fraction}

Another advantage of viewing $\cv(\mN)$ in terms of a restricted min-entropy is that it provides an alternative operational interpretation of the communication value in terms of the singlet fraction, which it inherits from the min-entropy conic program.  Recall that for a bipartite density matrix $\omega^{AB}$, its $d_A$-dimensional singlet fraction is defined as
\begin{equation}
    F^+_{d_A}(\omega)=\max_U\bra{\Phi^+_{d_A}}(\mbb{I}^A\otimes U^B)\omega^{AB}(\mbb{I}^A\otimes U^B)^\dagger\ket{\Phi^+_{d_A}},\notag
\end{equation}
where the maximization is taken over all unitaries applied to system $B$ \cite{Bennett-1996a}.
\begin{proposition}
$\cv(\mN)$ is the maximum singlet fraction achievable using an entanglement-breaking channel after the action of $\mN$ on $\ket{\Phi^+_{d_A}}$.
\end{proposition}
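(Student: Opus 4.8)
The plan is to obtain the statement directly from the conic program for $\cv(\mN)$ in Proposition~\ref{Prop:Proposition-cv}. Recall from there that
\[
\cv(\mN)=\max\{\tr[\Omega^{AB}J_{\mN}] : \tr_A\Omega^{AB}=\mbb{I}^B,\ \Omega^{AB}\in\sep(A:B)\},
\]
and that, as in that proof, every feasible $\Omega^{AB}$ may be written $\Omega^{AB}=\sum_k\op{\psi_k}{\psi_k}^A\otimes M_k^B$ with $\{\ket{\psi_k}\}$ normalized and $\{M_k\}_k$ a POVM on $B$. The first step is to recognize that each such $\Omega^{AB}$ has the form $\Omega^{AB}=(\id^A\otimes\mc{E}^\dagger)(\phi^+_{d_A})$ for an \emph{entanglement-breaking} channel $\mc{E}\colon B\to A$ with $d_A$-dimensional output, namely the measure-and-prepare channel $\mc{E}(X)=\sum_k\tr[M_k X]\,\overline{\op{\psi_k}{\psi_k}}$ (complex conjugate in the fixed basis). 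A short index computation with $\phi^+_{d_A}=\sum_{ij}\op{ii}{jj}$ verifies this identity, and the correspondence is \emph{onto} the whole feasible set, since transposing states and relabelling POVM outcomes are bijections; conversely each such $\mc{E}$ gives back a feasible $\Omega^{AB}$.

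With this in hand, I would push $\mc{E}^\dagger$ across the trace:
\[
\tr[\Omega^{AB}J_{\mN}]=\tr\!\big[\phi^+_{d_A}\,(\id^A\otimes\mc{E})(J_{\mN})\big]=\tr\!\big[\phi^+_{d_A}\,J_{\mc{E}\circ\mN}\big].
\]
Now interpret the right-hand side operationally. Since $(\id^A\otimes\mN)(\op{\Phi^+_{d_A}}{\Phi^+_{d_A}})=\tfrac1{d_A}J_{\mN}$ is the bipartite state produced by the action of $\mN$ on half of $\ket{\Phi^+_{d_A}}$, post-processing the output by the entanglement-breaking channel $\mc{E}$ produces $\tfrac1{d_A}(\id\otimes\mc{E})(J_{\mN})$, and (up to the $d_A$ normalization relating $\phi^+_{d_A}$ to $\ket{\Phi^+_{d_A}}$) the quantity $\tr[\phi^+_{d_A}(\id\otimes\mc{E})(J_{\mN})]$ is exactly the overlap of that post-processed state with $\op{\Phi^+_{d_A}}{\Phi^+_{d_A}}$. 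Finally, maximizing over all entanglement-breaking $\mc{E}$ already subsumes the unitary maximization in the definition of $F^+_{d_A}$, because $U\circ\mc{E}$ ranges over all entanglement-breaking channels on the $d_A$-dimensional output as $U$ ranges over unitaries. Taking the maximum over $\mc{E}$ and applying Proposition~\ref{Prop:Proposition-cv} then identifies $\cv(\mN)$ with the maximal singlet fraction achievable by an entanglement-breaking post-processing of the state $\mN$ produces from $\ket{\Phi^+_{d_A}}$.

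The conceptual content is entirely Proposition~\ref{Prop:Proposition-cv} together with the equivalence between separability of a bipartite operator and its arising from an entanglement-breaking channel; everything else is bookkeeping, and that bookkeeping is where the care is needed. In particular I expect the main (minor) obstacle to be keeping the transpose/complex-conjugation relating $\Omega^{AB}$ to $\mc{E}$ consistent, checking that the measure-and-prepare channels obtained in this way do exhaust the feasible set so that no entanglement-breaking channels are missed in the maximization, and tracking the factors of $d_A$ between the unnormalized $\phi^+_{d_A}$ of the Choi isomorphism and the normalized $\ket{\Phi^+_{d_A}}$ appearing in $F^+_{d_A}$ so that the final singlet-fraction identity is stated with the correct constant.
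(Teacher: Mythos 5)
Your proposal is correct and takes essentially the same route as the paper: both start from Proposition~\ref{Prop:Proposition-cv}, identify the feasible $\Omega^{AB}$ with Choi matrices of unital entanglement-breaking maps (equivalently, adjoints of entanglement-breaking channels $B\to A$), then use adjointness of the Hilbert--Schmidt inner product to move that map onto the output state $\id\otimes\mN(\Phi^+_{d_A})$, absorbing the unitary maximization in $F^+_{d_A}$ into the maximization over EB channels. Your explicit measure-and-prepare form $\mc{E}(X)=\sum_k\tr[M_kX]\,\overline{\op{\psi_k}{\psi_k}}$ and the noted bookkeeping concerns (conjugation convention, $d_A$ factors, surjectivity of the correspondence) are all handled correctly and match the paper's computation.
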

\begin{proof}
This follows the proof of the operational interpretation of the min-entropy \cite{Konig-2009a}, and we walk through the argument again here to exemplify that the only change is in restricting to the separable cone. Proposition \ref{Prop:Proposition-cv} shows that $\cv(\mc{N})$ is the maximum value $\tr[\Omega^{AB}J_{\mc{N}}]$, where $\Omega^{AB}$ is the Choi matrix of a unital (entanglement-breaking) map, i.e. $\Omega^{AB} = J_{\mM}$ for some entanglement-breaking unital map $\mM$. Thus,
\begin{align*}
\cv(\mN)&= \langle J_{\mM}, J_{\mN} \rangle \\
=& d_{A}^{2}\langle (\id \otimes \mM)(\widehat{\Phi}^{+}), (\id \otimes \mN)(\widehat{\Phi}^{+}) \rangle \\ 
=& d_{A}^{2} \langle \widehat{\Phi}^{+}, (\id \otimes \mM^\ast \circ \mN)(\Phi^{+}) \rangle \ ,
\end{align*}
where we have used the definitions of the Choi matrix and adjoint map, and $\widehat{\Phi}^{+}$ is the normalized maximally entangled state. Noting the adjoint of an entanglement-breaking map is entanglement-breaking, and the adjoint of a unital map is trace-preserving, we have
\begin{align}
\cv(\mN) &=d_{A}^2\max_{\mc{E}\in\eb(B\to A)}F^+_{d_A}\left(\id \otimes \mc{E} \circ \mN(\Phi_{d_A}^+)\right) \notag\\
&=d_{A}^2\max_{\mc{E}\in\eb(B\to A)}\bra{\Phi^+_{d_A}}\id \otimes \mc{E} \circ \mN(\Phi_{d_A}^+)\ket{\Phi^+_{d_A}},\label{Eq:eb-recoverability}
\end{align}
where the last line follows from the fact that the maximization over local unitaries in the definition of $F^+_{d_A}$ can be included in the maximization over entanglement-breaking channels.
\end{proof}
\begin{figure}\label{fig:CVMeasure}
    \begin{minipage}[t]{0.5\linewidth}
    \end{minipage}
    \begin{center}
        \begin{tikzpicture}
            \tikzstyle{porte} = [draw=black!50, fill=black!20]
                \draw
                (0,0) node (origin) {$\ket{\Phi^{+}}$}
                ++(2, 0.5) node[porte] (m1) {$\mathcal{N}$}
                ++(2, 0) node[porte] (m2) {$\Psi$}
                ++(-1,-1) node[porte] (m3) {$\id$}
                ++(0,-0.75) node (m4) {Versus}
                ++(-3,-1.25) node(m5) {$\ket{\Phi^{+}}$}
                ++(3, 0.5) node[porte] (m6) {$\id$}
                ++(0, -1) node[porte] (m7) {$\id$}
                ;
                \draw
                (m3) -- (1,-0.5) -- (0.5,0) -- (1,0.5) -- (m1)
                (m1) -- (m2)
                (m2) edge[-] ++(0.75,0)
                (m3) edge[-] ++(1.75,0)
                (m7) -- (1,-3) -- (0.5,-2.5) -- (1,-2) -- (m6)
                (m6) edge[-] ++(1.5,0)
                (m7) edge[-] ++(1.5,0)
                ;
        \end{tikzpicture}
    \end{center}
    \caption{The channel value is a measure of the maximum singlet fraction achievable using an entanglement-breaking (EB) channel to recover the singlet after the action of $\mN$. In this setting, non-multiplicativity means that the optimal EB channel $\Psi$ changes when using $\mN$ in parallel such that the achievable singlet fraction increases. This suggests that $\cv(\mN)$ is a measurement of entanglement preservation.}
\end{figure}
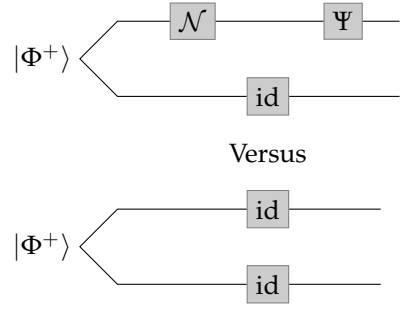
Equation \eqref{Eq:eb-recoverability} yields the interpretation of $\cv(\mN)$ as how well a maximally entangled state can be recovered after Alice sends one half of $\ket{\Phi^+_{d_A}}$ to Bob over the channel $\mN$, and he is limited to performing an entanglement-breaking channel as post-processing error correction (see Fig. \ref{fig:CVMeasure}).  In Section \ref{sec:entanglement_assisted_cv}, it will be shown that the entanglement-assisted communication value is characterized by $\exp(-H_{\min}(A|B)_{J_{\mN}})$, and it thus has a similar operational interpretation except Bob is now able to perform an arbitrary quantum channel to try and recover the maximally entangled state.  Moreover, in Section \ref{sec:relaxations_of_cv}, when we consider relaxations of cv to other cones, we will find that the PPT min-entropy $H_{\min}^{\ppt}$ also retains this operational interpretation, but with recovery being relaxed to the use of co-positive maps. 

\subsection{The Geometric Measure of Entanglement and Maximum Output Purity}

The channel cv is closely related to the geometric measure of entanglement (GME) \cite{Shimony-1995a, Wei-2003a}.  For a bipartite positive operator $\omega^{AB}$, its GME is defined as $G(\omega)=-\log\Lambda^2(\omega)$, where
\begin{align}
\Lambda^2(\omega)=\max_{\ket{\alpha}\ket{\beta}}\bra{\alpha,\beta}\omega\ket{\alpha,\beta}.
\end{align}
We can phrase this as the SDP
\begin{align}
\Lambda^2(\omega)=\max &\;\;\tr[\sigma^{AB}\omega]\notag\\
\text{subject to}&\;\;\tr[\sigma^{AB}]=1\notag\\
&\;\;\sigma^{AB}\in\sep(A:B).
\label{Eq:GME-SDP}
\end{align}
%or in dual form as
%\begin{align}
%\Lambda^2(\omega)=\min &\;\;\lambda\notag\\
%\text{subject to}&\;\;\lambda\mbb{I}^{AB}-\omega^{AB}\in\sep^*(A:B).
%\end{align}
For a channel $\mc{N}$ with Choi matrix $J_\mc{N}$, this SDP can be expressed in dual form as
\begin{align}
\Lambda^2(J_{\mc{N}})=\min &\;\;\lambda\notag\\
\text{subject to}&\;\;\lambda\mbb{I}^{AB}-J_{\mc{N}}^{AB}\in\sep^*(A:B).
\label{Eq:GM-dual}
\end{align}
For any dual feasible $\lambda$ in Eq. \eqref{Eq:GM-dual}, we can take $Z=\lambda\mbb{I}^B$ in Eq. \eqref{Eq:cv-dual} to obtain
\begin{align}
\cv(\mc{N})\leq d_B\Lambda^2(J_\mc{N}).
\end{align}
On the other hand, suppose that $Z$ is dual feasible in Eq. \eqref{Eq:cv-dual}.  Then since $\mbb{I}^A\otimes (\lambda\mbb{I}^B-Z)\in\sep^*(A:B)$ for $\lambda=\Vert Z\Vert_\infty$, we have that
\begin{align}
\lambda\mbb{I}^A\otimes \mbb{I}^B-J_{\mc{N}}^{AB}&=\mbb{I}^A\otimes (\lambda\mbb{I}-Z)^B+\mbb{I}^A\otimes Z^B-J_{\mc{N}}^{AB}\notag\\
&\in\sep^*(A:B).
\end{align}
Hence $\lambda \mbb{I}^{AB}$ is dual feasible in Eq. \eqref{Eq:GM-dual}.  Since $\lambda\leq \tr[Z]$ for every $Z$, we conclude
\begin{align}
\label{Eq:cv-GM}
\Lambda^2(J_{\mc{N}})\leq\cv(\mc{N})\leq d_B\Lambda^2(J_\mc{N}).
\end{align}
While these relationships are somewhat obvious from the formulation of the problem, it is nice to see them explicitly falling out of the two conic programs.

It is known that $\Lambda^2(J_\mN)$ is equal to the maximum output purity of the channel $\mc{N}$ \cite{Werner-2002a, Zhu-2011a} which is defined as 
\[\nu_{\infty}(\mN) := \underset{\rho \in \mc{D}(A)}{\sup} \|\mN(\rho)\|_{\infty}.\]
To see the equivalence, first note that this supremum is attained for a pure-state input due to convexity of the operator norm.  Then 
\begin{align}
\nu_{\infty}(\mN)&= \underset{\ket{\alpha}}{\sup} \|\mN(\op{\alpha}{\alpha})\|_{\infty}\notag\\
     &= \underset{\ket{\alpha}}{\sup}\sup_{\ket{\beta}}\bra{\beta}\mN(\op{\alpha}{\alpha})\ket{\beta}\notag\\
     &=\sup_{\ket{\alpha},\ket{\beta}}\bra{\alpha^*,\beta}J_{\mN}\ket{\alpha^*,\beta}\notag\\
     &=\Lambda^2(J_{\mN}). \label{eqn:GME-max-output-purity-equiv}
\end{align}     
An alternative way to prove this equality is using \cite{Werner-2002a}. In that work they establish that $\nu_{\infty}(\mN) = \Lambda^{2}(\ket{\mN})$ where $\ket{\mN}$ is an un-normalized vector induced by the Kraus representation of $\mN$. One can in fact show that $\ket{\mN}$ is the purification of the Choi matrix, though this has not been stated previously to the best of our knowledge. As the GME of a pure state is the same as the GME of the pure state with a single register traced off \cite{Jung-2008}, one can conclude $\nu_{\infty}(\mN) = \Lambda^{2}(J_{\mN})$. Despite $\Lambda^2(J_\mN)=\nu_\infty(\mN)$ being a lower bound of $\cv(\mN)$ in Eq. \eqref{Eq:cv-GM}, in general this bound will not be tight.  Hence communication value is capturing property of a quantum channel that is distinct from maximum output purity.  In fact, we have the following.
\begin{proposition}
$\nu_\infty(\mN)=\cv(\mc{N})$ iff $\mc{N}$ is a replacer channel.
\end{proposition}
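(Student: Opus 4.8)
The plan is to sandwich both $\nu_\infty(\mN)$ and $\cv(\mN)$ against the constant $1$ and then read off exactly when they can meet. For every density operator $\rho$ the output $\mN(\rho)$ is again a density operator, so its spectral norm (its largest eigenvalue) is at most $1$; taking the supremum over $\rho$ gives $\nu_\infty(\mN)\le 1$, with equality precisely when $\mN$ sends some pure input to a pure output. Since Eq.~\eqref{Eq:cv-bounds} gives $\cv(\mN)\ge 1$ and Eqs.~\eqref{eqn:GME-max-output-purity-equiv}--\eqref{Eq:cv-GM} give $\nu_\infty(\mN)=\Lambda^2(J_\mN)\le\cv(\mN)$, we obtain the chain $\nu_\infty(\mN)\le 1\le\cv(\mN)$, so the equality $\nu_\infty(\mN)=\cv(\mN)$ is equivalent to $\nu_\infty(\mN)=\cv(\mN)=1$. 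The ``only if'' direction then follows at once: $\cv(\mN)=1$ forces $\mN$ to be a replacer channel by the earlier Proposition characterizing unit communication value, and the extra condition $\nu_\infty(\mN)=1$ forces that replacer channel's fixed output state to be pure.

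For the converse, I would write $\mN(\rho)=\sigma$ for a fixed state $\sigma$, compute $J_\mN=\mbb{I}^A\otimes\sigma$ directly from the definition of the Choi matrix, and then evaluate $\nu_\infty(\mN)=\Lambda^2(\mbb{I}^A\otimes\sigma)=\Vert\sigma\Vert_\infty$ (the maximization over $\ket{\alpha}$ inside $\Lambda^2$ drops out since $\mbb{I}^A$ acts trivially), whereas $\cv(\mN)=1$ again by the earlier Proposition. These two numbers coincide exactly when $\Vert\sigma\Vert_\infty=1$, i.e. when $\sigma$ is pure. Hence the equality $\nu_\infty(\mN)=\cv(\mN)$ holds precisely for replacer channels whose fixed output is a pure state --- a mild sharpening of the statement as written, which should presumably read ``replacer channel onto a pure state'' (a replacer onto the maximally mixed state, say, has $\nu_\infty=1/d_B\neq 1=\cv$).

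I do not anticipate a genuine obstacle here: the argument is just an assembly of inequalities already established in this section, together with the elementary bound $\Vert\rho\Vert_\infty\le 1$ for density operators and the one-line Choi computation $J_\mN=\mbb{I}^A\otimes\sigma$ for a replacer channel. The only point requiring any care is tracking the purity of $\sigma$ in the ``if'' direction, which is exactly where the characterization is most naturally phrased as ``replacer channel onto a pure state.''
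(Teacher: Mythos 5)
Your argument is correct, and it takes a genuinely different and more elementary route than the paper. The paper proves the ``only if'' direction by constructing the specific separable feasible operator $\Omega^{AB}=\op{\alpha}{\alpha}\otimes\op{\beta}{\beta}+\rho\otimes(\mbb{I}-\op{\beta}{\beta})$ (with $\ket{\alpha}\ket{\beta}$ achieving $\Lambda^2(J_\mN)$), plugs it into the primal cv program to get $\cv(\mN)\geq\nu_\infty(\mN)+\tr[\mN(\rho^*)(\mbb{I}-\op{\beta}{\beta})]$, and concludes the nonnegative correction term must vanish for every $\rho$, forcing a replacer onto $\op{\beta}{\beta}$. Your sandwich $\nu_\infty(\mN)\le 1\le\cv(\mN)$ sidesteps any feasible-point construction: it reduces the equality $\nu_\infty(\mN)=\cv(\mN)$ to the pair of elementary boundary cases $\nu_\infty(\mN)=1$ and $\cv(\mN)=1$, the latter already characterized in the earlier proposition as ``replacer channel,'' and the former trivially equivalent to the fixed output being pure. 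What you give up is that the paper's construction is self-contained (it does not invoke the earlier $\cv=1$ proposition); what you gain is a one-line argument assembled entirely from previously stated bounds.

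You are also right to flag the imprecision in the statement. A replacer channel onto a mixed state $\sigma$ has $\cv(\mN)=1$ by the earlier proposition but $\nu_\infty(\mN)=\Vert\sigma\Vert_\infty<1$, so the ``if'' direction as literally written fails. The paper's own proof quietly assumes a pure replacer output $\op{\beta}{\beta}$ in the ``if'' direction and derives a pure output in the ``only if'' direction, so what is actually proved is ``iff $\mN$ is a replacer channel onto a pure state,'' exactly the sharpening you state. Your converse computation $J_\mN=\mbb{I}^A\otimes\sigma$, $\nu_\infty(\mN)=\Vert\sigma\Vert_\infty$, $\cv(\mN)=1$ makes this fully explicit.
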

\begin{proof}
If $\mc{N}$ is a replacer channel, say, $\mc{N}(\rho)=\op{\beta}{\beta}$ $\forall \rho$, then clearly $\nu_\infty(\mN)=\cv(\mc{N})=1$.  On the other hand, suppose that $\nu_\infty(\mN)=\cv(\mc{N})$ and let $\ket{\alpha}\ket{\beta} := \mathrm{argmax}(\Lambda^{2}(J_{\mN}))$.  Then for an arbitrary state $\rho^A\in\mc{D}(A)$ consider the operator
\begin{align}\notag
    \Omega^{AB}=\op{\alpha}{\alpha}\otimes\op{\beta}{\beta}+\rho\otimes(\mbb{I}-\op{\beta}{\beta}).
\end{align}
Note that since $\tr_A\Omega^{AB}=\mbb{I}^B$ and $\Omega^{AB}\in\sep(A:B)$, it is a feasible solution for the optimization of $\cv(\mN)$ in Proposition \ref{Prop:Proposition-cv}.  Hence for all $\rho$ we have
\begin{align}
    \cv(\mc{N})&\geq \tr[\Omega^{AB} J_{\mc{N}}]\notag\\
    &=\bra{\alpha,\beta}J_{\mN}\ket{\alpha,\beta}+\tr[\mN(\rho^*)(\mbb{I}-\op{\beta}{\beta})]\notag\\&=\nu_\infty(\mN)+\tr[\mN(\rho^*)(\mbb{I}-\op{\beta}{\beta})].
\end{align}
But by the assumption $\nu_\infty(\mN)=\cv(\mc{N})$, the second term must vanish for all $\rho$.  This means that $\mN$ is a replacer channel, outputting $\op{\beta}{\beta}$ for all its trace-one inputs.
\end{proof}

\section{Examples}

Having characterized the channel cv in a variety of different ways, we now focus on the problem of computing it.  In general, this is a challenging task.  Here we provide closed-form solutions for arbitrary qubit channels and the family of Holevo-Werner channels.  The latter will also provide a useful case study when we study relaxations to the communication value in Section \ref{sec:relaxations_of_cv}.

\subsection{Qubit Channels}

\label{Sect:qubits}

Every qubit channel $\mc{N}$ induces an affine transformation on the Bloch vector of the input state.  In more detail, every positive operator $\rho$ can be written as $\rho=\gamma(\mbb{I}+\mbf{r}\cdot\hat{\sigma})$, where $\mbf{r}\in\mbb{R}^3$ has norm no greater than one.  Then when $\mc{N}$ acts on $\rho$, it induces an affine transformation $\mbf{r}\mapsto A\mbf{r}+\mbf{c}$, with $A$ being some $3\times 3$ matrix and $\mbf{c}\in\mbb{R}^3$.  %Moreover, by applying pre and post unitary rotations on $\mc{N}$ (which leaves the cv invariant), we can induce independent left and right $SO(3)$ rotations on $A$.  Choosing these rotations to diagonalize $A$, we can assume without loss of generality that $A=\text{diag}[a_1,-a_2,a_3]$, with $a_i\geq 0$.
Now, let $\sigma=\sum_{k}\alpha^T_k\otimes \beta_k$ be an arbitrary two-qubit separable operator with $\tr[\alpha_k]=1$ and $\sum_k\beta_k=\mbb{I}$.  We given them Bloch sphere representations $\alpha_k=\frac{1}{2}(\mbb{I}+\mbf{a}_k\cdot\vec{\sigma})$ and $\beta_k=\gamma_k(\mbb{I}+\mbf{b}_k\cdot\vec{\sigma})$ so that
\begin{align}
\tr[\sigma J_\mc{N}]&=\sum_k\tr[\beta_k\mc{N}(\alpha_k)]\notag\\
&=\sum_k\frac{\gamma_k}{2}\tr[(\mbb{I}+\mbf{b}_k\cdot\vec{\sigma})(\mbb{I}+(A\mbf{a}_k+\mbf{c})\cdot\vec{\sigma})]\notag\\
&=1+\sum_k\gamma_k\mbf{b}_k\cdot (A\mbf{a}_k+\mbf{c})\notag\\
&=1+\sum_k\gamma_k\mbf{b}_k^TA\mbf{a}_k,
\end{align}
where the last equality follows from the fact that $\sum_k\gamma_k\mbf{b}_k=0$ since $\sum_k\beta_k=\mbb{I}$.  Our task then is to maximize $\sum_k\gamma_k\mbf{b}_k^TA\mbf{a}_k$ under the constraints that (i) $\sum_k\gamma_k\mbf{b}_k=0$, (ii) $\sum_k\gamma_k=1$, and (iii) $\Vert\mbf{b}_k\Vert,\Vert\mbf{a}_k\Vert\leq 1$.  It is easy to see that this maximization is attained by taking $\mbf{b}_1$ and $\mbf{b}_2$ as anti-parallel unit vectors aligned with the left singular vectors of $A$ corresponding to its largest singular value, and likewise for $\mbf{a}_1$ and $\mbf{a}_2$ with respect to the right singular vector.  Additionally, taking $\gamma_k=\frac{1}{2}$ for $k=1,2$ satisfies all the conditions.  Hence we have the following.
\begin{theorem}
\label{Thm:qubit-cv}
For a qubit channel $\mc{N}$, let $A$ be the $3\times 3$ correlation matrix of $J_\mc{N}$; i.e. $A_{ij}=\frac{1}{2}\tr[(\sigma_i\otimes\sigma_j) J_\mc{N}]$.  Then
\begin{equation}
\cv(\mc{N})=1+\sigma_{\max}(A)
\end{equation}
where $\sigma_{\max}(A)$ is the largest singular value of $A$.
\end{theorem}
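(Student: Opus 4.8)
The plan is to specialize the conic program of Proposition~\ref{Prop:Proposition-cv} to qubits and reduce it to a one-line singular-value estimate; the Bloch-vector bookkeeping it rests on is already carried out in the paragraph preceding the statement, so what remains is essentially the accompanying optimization. Concretely, write a feasible $\Omega\in\sep(A:B)$ as $\Omega=\sum_k\alpha_k^T\otimes\beta_k$ with $\alpha_k=\tfrac12(\mbb{I}+\mbf{a}_k\cdot\vec{\sigma})$ a qubit density operator and $\beta_k=\gamma_k(\mbb{I}+\mbf{b}_k\cdot\vec{\sigma})\ge 0$. Positivity forces $\gamma_k\ge 0$ and $\|\mbf{a}_k\|,\|\mbf{b}_k\|\le 1$, while $\tr_A\Omega=\mbb{I}^B$ is exactly the pair of conditions $\sum_k\gamma_k=1$ and $\sum_k\gamma_k\mbf{b}_k=0$. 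Using $\tr[(\alpha_k^T\otimes\beta_k)J_\mN]=\tr[\beta_k\mN(\alpha_k)]$ together with the affine action $\mbf{a}_k\mapsto A\mbf{a}_k+\mbf{c}$ of the channel, one gets the exact identity $\tr[\Omega J_\mN]=1+\sum_k\gamma_k\mbf{b}_k^{T}A\mbf{a}_k$, the $\mbf{c}$-contribution being annihilated by $\sum_k\gamma_k\mbf{b}_k=0$. Here $A$ is the affine matrix; one checks that the correlation matrix $A_{ij}=\tfrac12\tr[(\sigma_i\otimes\sigma_j)J_\mN]$ equals it up to a transpose and a sign flip on the $\sigma_2$-component (i.e.\ up to left multiplication by a $\pm1$ signature matrix), hence has the same largest singular value, so the statement is unaffected. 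Alternatively one can expand $\tr[(\alpha_k^T\otimes\beta_k)J_\mN]$ directly in the Pauli basis via $\tr[(\sigma_i\otimes\sigma_j)J_\mN]=2A_{ij}$, $\tr[(\mbb{I}\otimes\sigma_j)J_\mN]=2c_j$, and $\tr[(\sigma_i\otimes\mbb{I})J_\mN]=0$ (the last by trace preservation), landing on the same maximization with $A$ literally the correlation matrix.

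It then remains to evaluate $\max\sum_k\gamma_k\mbf{b}_k^{T}A\mbf{a}_k$ over $\gamma_k\ge 0$, $\sum_k\gamma_k=1$, $\|\mbf{a}_k\|,\|\mbf{b}_k\|\le1$. For the upper bound, the triangle inequality and submultiplicativity of the spectral norm give $\sum_k\gamma_k\mbf{b}_k^{T}A\mbf{a}_k\le\sum_k\gamma_k\|\mbf{b}_k\|\,\|A\|_\infty\,\|\mbf{a}_k\|\le\|A\|_\infty=\sigma_{\max}(A)$, using $\sum_k\gamma_k=1$. For the matching lower bound I exhibit a two-term feasible $\Omega$: take a top singular pair $A\mbf{v}=\sigma_{\max}(A)\,\mbf{u}$ with $\mbf{u},\mbf{v}$ unit vectors, and set $\gamma_1=\gamma_2=\tfrac12$, $\mbf{b}_1=\mbf{u}=-\mbf{b}_2$, $\mbf{a}_1=\mbf{v}=-\mbf{a}_2$. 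Then $\beta_1+\beta_2=\mbb{I}$, $\sum_k\gamma_k\mbf{b}_k=0$, and each $\alpha_k,\beta_k$ is a legitimate rank-one positive operator, so $\Omega$ is feasible; the objective equals $\tfrac12\mbf{u}^{T}A\mbf{v}+\tfrac12\mbf{u}^{T}A\mbf{v}=\mbf{u}^{T}A\mbf{v}=\sigma_{\max}(A)$. Combining the two bounds yields $\cv(\mN)=1+\sigma_{\max}(A)$.

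I do not anticipate a genuine obstacle here: the substance is an elementary optimization over Bloch vectors plus the spectral-norm bound above. The only points needing a moment of care are (i) keeping straight that the ``correlation matrix'' of the statement and the affine matrix of the computation are orthogonally equivalent rather than literally equal, hence share their largest singular value; and (ii) checking that the extremal $\Omega$ written down really is a separable operator with $\tr_A\Omega=\mbb{I}^B$, which holds since $\beta_1+\beta_2=\mbb{I}$ and $\tr[\alpha_k^T]=1$. It is worth remarking that the constraint $\sum_k\gamma_k\mbf{b}_k=0$, although essential for obtaining the clean objective $1+\sum_k\gamma_k\mbf{b}_k^{T}A\mbf{a}_k$ (it is what makes $\cv$ independent of the translation $\mbf{c}$), plays no role in the maximization itself: the upper bound does not invoke it and the optimal $\Omega$ satisfies it automatically.
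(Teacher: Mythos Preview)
Your proof is correct and follows essentially the same approach as the paper: write a feasible $\Omega$ in Bloch form, reduce the objective to $1+\sum_k\gamma_k\mbf{b}_k^{T}A\mbf{a}_k$ using the affine action of the channel and the constraint $\sum_k\gamma_k\mbf{b}_k=0$, and then observe the maximum is $\sigma_{\max}(A)$, attained by an antipodal two-term construction with $\gamma_1=\gamma_2=\tfrac12$. You are in fact slightly more thorough than the paper in two places: you spell out the upper bound via the spectral-norm inequality (the paper simply asserts the optimizer), and you flag that the ``correlation matrix'' of the statement and the affine matrix of the computation differ by a transpose and a sign on the $\sigma_2$ component, hence share singular values---a point the paper silently conflates.
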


\begin{remark}
For a unital channel $\mc{N}$, the Bloch vector $\mbf{c}$ is zero.  Since we are able to obtain the largest value of $\gamma_k\mbf{b}_k^TA\mbf{a}_k$ for each value $k$, it follows that
\begin{equation}
\label{Eq:unital-GM-tight}
\cv(\mc{N})=2\Lambda^2(J_{\mc{N}});
\end{equation}
i.e. the upper bound in Eq. \eqref{Eq:cv-GM} is tight.
\end{remark}

\medskip

\noindent \textit{Example: Pauli Channels.}  As a nice example of Theorem \ref{Thm:qubit-cv}, consider the family of Pauli channels, which consists of any qubit channel having the form
\begin{equation}
\mc{N}(\rho)=p_0 \rho+ p_1X\rho X+ p_2 Z\rho Z +p_3 Y\rho Y,
\end{equation}
where $\{X,Y,Z\}$ are the standard Pauli matrices and $\sum_{i=0}^3p_i=1$.  We can write the Choi matrix as
\begin{equation}
J_{\mc{N}}=2\sum_{i=0}^3p_i\op{\Phi^+_i}{\Phi^+_i},
\end{equation}
where $\ket{\Phi^+_i}$ denotes the four Bell states.  It is easy to see that the correlation matrix of $J_{\mc{N}}$ is diagonal with entries $\{p_0+p_1-p_2-p_3, -p_0+p_1-p_2+p_3, p_0-p_1-p_2+p_3\}$.  Therefore, by Theorem \ref{Thm:qubit-cv} we can conclude that 
\begin{align}
\label{Eq:cv-Pauli}
\cv(\mc{N})=2(p^\downarrow_{3}+p^\downarrow_{2}),
\end{align}
where $p^\downarrow_{3}$ and $p^\downarrow_{2}$ are the two largest probabilities of Pauli gates.

Notice that $\cv(\mc{N})$ will equal its largest value of two if and only if there are no more than two Pauli gates applied with nonzero probability in $\mN$.  In particular, when $p^\downarrow_{3}=p^\downarrow_{2}=\frac{1}{2}$ the channel is entanglement-breaking; in fact it is a classical channel.  Hence, this example shows that a channel's communication value captures a property distinct from its ability to transmit entanglement.

\subsection{Werner-Holevo Channels}

\label{Sect:Werner-Holevo}

The Werner-Holevo family of channels \cite{Werner-2002a,Leung-2015a} is defined by
\begin{align}\label{eqn:WH-defn}
\mathcal{W}_{d,\lambda} := \lambda \Phi_{0}(X) + (1-\lambda) \Phi_{1}(X) \ ,
\end{align}
where 
\begin{align*}
\Phi_{0}(X) &= \frac{1}{n+1} \left((\tr(X))\mbb{I} + X^{\T} \right) \\
\Phi_{1}(X) &=\frac{1}{n-1}\left((\tr(X))\mbb{I} - X^{\T} \right) \ .
\end{align*}
This implies the Choi matrix is given by 
\begin{align}
    J_{\mc{W}_{d,\lambda}} =\lambda \frac{2}{d+1}\Pi_+ + (1-\lambda)\frac{2}{d-1}\Pi_-,
\end{align}
where $\Pi_+ = \frac{1}{2}\left(\mbb{I} + \mbb{F}\right)$ and $\Pi_- = \mbb{I} - \Pi_+$.

%\noindent\todo{[[Comments on this section:
%\begin{itemize} %IAN: Commenting things as I fix them
%\item As a global change, I think we shouldn't use the phrase ``Choi matrix'' for unnormalized operators.  Perhaps let's just say ``Choi matrix''.
%\item As a global change, I have been using $J_{\mN}$ for the Choi matrix rather than $J_{\mN}$.  I somewhat prefer the subscript notation since we are already using capital italics to denote operators.
%\item As a global change, use \textbackslash cv to type the communication value $\cv(\mN)$.
%\item Perhaps use $\Pi_{\pm}$ for the symmetric/anti-symmetric projectors, rather than $\Pi_0/\Pi_1$?
%\item Can we just denote the Werner-Holevo channels by $\mc{W}_{d,\lambda}$, or just $\mc{W}_\lambda$ with dimension $d$ understood?
%\item What is the difference between ``generalized'' Werner-Holevo channel versus just Werner-Holevo channels (used in next sentence)?
%\end{itemize}
%]]}

% As stated, the Werner-Holevo channel Choi matrices are $UU$-invariant. Conveniently, the space of PPT and SEP $UU$-invariant operators are the same \cite{Vollbrecht-2001a}. These properties allow us to determine the cv of the Werner-Holevo channels analytically.

\begin{proposition}\label{prop:cv-WH-1-copy}
    The communication value of the Werner-Holevo channel is given by 
    \begin{align*}
       \cv(\mc{W}_{d,\lambda})
        =
        \begin{cases}
            \frac{d(d+1-2\lambda)}{d^{2}-1}  & \lambda \leq \frac{1+d}{2d} \\
            \frac{2d\lambda}{1+d} & \lambda > \frac{1+d}{2d}
        \end{cases}
    \end{align*}
\end{proposition}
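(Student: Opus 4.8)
The plan is to solve the conic program of Proposition~\ref{Prop:Proposition-cv} directly, exploiting the large symmetry group of the Choi matrix. Since $\Pi_\pm$ commute with $U\otimes U$ for every unitary $U$, the operator $J_{\mc{W}_{d,\lambda}}$ is invariant under the twirl $\Omega\mapsto\mathcal{T}(\Omega):=\int (U\otimes U)\,\Omega\,(U\otimes U)^\dagger\,dU$. I would first check that $\mathcal{T}$ sends a feasible point of \eqref{Eq:Proposition-cv} to a feasible point with the same objective value: separability is preserved because $\sep(A:B)$ is a closed convex cone stable under local unitaries; the constraint is preserved because $\tr_A[(U\otimes U)\Omega(U\otimes U)^\dagger]=U(\tr_A\Omega)U^\dagger=\mbb{I}^B$; and the objective is unchanged because $(U\otimes U)^\dagger J_{\mc{W}_{d,\lambda}}(U\otimes U)=J_{\mc{W}_{d,\lambda}}$. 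As $\mathcal{T}(\Omega)$ commutes with all $U\otimes U$, it lies in $\mathrm{span}\{\Pi_+,\Pi_-\}$, so without loss of generality the optimizer has the Werner form $\Omega=a\Pi_++b\Pi_-$.

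Next I would translate the constraints into conditions on $(a,b)$. Positivity of $\Omega$ is just $a,b\ge 0$. Using $\tr_A\Pi_+=\tfrac{d+1}{2}\mbb{I}$ and $\tr_A\Pi_-=\tfrac{d-1}{2}\mbb{I}$, the constraint $\tr_A\Omega=\mbb{I}^B$ becomes $a(d+1)+b(d-1)=2$. For the separability constraint I would invoke Werner's criterion: an operator in $\mathrm{span}\{\mbb{I},\mbb{F}\}$ is separable iff $\tr[\mbb{F}\Omega]\ge 0$; since $\mbb{F}=\Pi_+-\Pi_-$ and $\tr\Pi_\pm=\tfrac{d(d\pm1)}{2}$, this reads $a(d+1)\ge b(d-1)$. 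Finally, using $\Pi_+\Pi_-=0$, $\Pi_\pm^2=\Pi_\pm$ and the same trace values, the objective reduces to
\begin{equation*}
\tr[\Omega\,J_{\mc{W}_{d,\lambda}}]=d\big(\lambda a+(1-\lambda)b\big).
\end{equation*}

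It then remains to solve an elementary two-variable linear program. Substituting $u:=a(d+1)$ and $v:=b(d-1)$, the equality constraint is $u+v=2$, positivity is $u,v\ge 0$, and separability is $u\ge v$, i.e. $u\in[1,2]$; the objective is $d\big(\lambda u/(d+1)+(1-\lambda)(2-u)/(d-1)\big)$, affine in $u$ with slope proportional to $2\lambda d-(d+1)$. Hence for $\lambda\le\frac{d+1}{2d}$ the maximum is at $u=1$ (i.e. $a=\tfrac1{d+1}$, $b=\tfrac1{d-1}$), giving $\frac{d(d+1-2\lambda)}{d^2-1}$; and for $\lambda>\frac{d+1}{2d}$ it is at $u=2$ (i.e. $a=\tfrac2{d+1}$, $b=0$), giving $\frac{2d\lambda}{d+1}$. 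The two branches agree at $\lambda=\frac{1+d}{2d}$, where the common value is $1$.

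The main obstacle is not the optimization but making the symmetry reduction airtight — verifying that the $U\otimes U$ twirl preserves feasibility, positivity, and the objective — together with correctly importing the Werner separability criterion. The latter is genuinely needed here: keeping only the positivity constraints $a,b\ge0$ and dropping $u\ge v$ would move the optimum to $u=0$ and produce the strictly larger positive-cone relaxation value $\frac{2d(1-\lambda)}{d-1}$ in the regime $\lambda<\frac{d+1}{2d}$. As an alternative to the twirl one could instead just exhibit the explicit primal point above for the lower bound and a scalar dual point $Z=z\,\mbb{I}^B$ in \eqref{Eq:cv-dual} for the matching upper bound, but the symmetrization route is the cleanest.
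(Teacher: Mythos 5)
Your proof is correct and takes essentially the same route as the paper: twirl the feasible operator to the $U\otimes U$-invariant (Werner) subspace, translate the cone and trace constraints into linear inequalities on two scalars, and solve the resulting one-parameter linear program, whose optimal vertex flips at $\lambda=\tfrac{d+1}{2d}$. The only cosmetic differences are that you parametrize by $(a,b)$ in the $\Pi_\pm$ basis rather than by $(x,y)$ in the $\{\mbb{I},\mbb{F}\}$ basis, and you invoke the Werner separability criterion $\tr[\mbb{F}\Omega]\ge 0$ directly, whereas the paper reaches the equivalent constraint by relaxing to PPT and citing that PPT and SEP coincide on the $U\otimes U$-invariant subspace.
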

\begin{proof}
Let $\mc{U}(\mbb{C}^d)$ denote the group of $d\times d$ unitary operators.  Since $J_{\mc{W}_{d,\lambda}}$ enjoys $U\otimes U$ invariance under conjugation for every $U\in\mc{U}(\mbb{C}^d)$ \cite{Werner-1989a}, we can apply the ``twirling map''
\begin{equation}
\label{Eq:UU-twirling}
    \mc{T}_{UU}(X)=\int_{\mc{U}} (U\otimes U) X(U\otimes U)^\dagger d U
\end{equation}
to the $A'B'$ systems of $\sigma$ while leaving the cv invariant:
\[\tr[ \mc{T}_{UU}(J_{\mc{W}}) \Omega^{AB} ] = \tr[J_{\mc{W}} \mc{T}_{UU}(\Omega^{AB})].\]   Furthermore, since $\mc{T}_{UU}$ preserves the constraints on $\Omega^{AB}$, we can conclude without loss of generality the optimizer is given by $X := \mc{T}_{UU}(\Omega^{AB}) = x\mbb{I}^{AB} + y \mathbb{F}$ for some choice of $x,y$, i.e. it is an element of the $U\otimes U$-invariant space of operators. 

As the space of PPT and SEP $UU$-invariant operators are the same \cite{Vollbrecht-2001a}, we can relax the optimization program to only requires $X \in \ppt$. As is shown in \eqref{eqn:PPTPrimal}, this means we require $X$ satisfies $X \geq 0, \Gamma^{B}(X) \geq 0, \tr_{A}[\Omega] = \mbb{I}^{B}$. We will therefore convert these linear constraints into linear constraints on $x,y$. 

Note that $\{\Pi_{+},\Pi_{-}\}$ define an orthogonal basis for the space spanned by $\{\mbb{I},\mbb{F}\}$. Therefore we can write $X = (x+y)\Pi_{+} + (x-y)\Pi_{-}$, and the positivity constraints on $X$ are given by
\begin{align}
    x \pm y \geq 0 \ .
\end{align}
Similarly, $\Gamma^{B}(X) = x\mbb{I}^{AB} + y\Phi^{+}$, where $\Phi^{+}$ is the unnormalized maximally entangled state. An orthogonal basis for the space spanned by $\{\mbb{I}^{AB},\Phi^{+}\}$ is given by $\{\Phi^{\perp} := d\mbb{I}-\Phi^{+},\Phi^{+}\}$. Therefore, $X^{\Gamma^{B}} = \frac{x}{d}(\Phi^{\perp}+\Phi^{+}) + y\Phi^{+}$. It follows the partial transpose positivity constraints simplify to
\begin{align}
    x \geq 0 \hspace{1cm} x+yd \geq 0
\end{align}

The objective function is given by
\begin{align}
    & \tr[J_{\mN}X] \notag \\
    = & \tr\left[\left(\frac{2\lambda}{d+1}\Pi_+ + \frac{2(1-\lambda)}{d-1}\Pi_{-}\right)X\right] \notag \\
    = & \lambda d (x+y) + (1-\lambda)d(x-y) \notag \\
    = & d(x+(2\lambda - 1)y) \ .
\end{align}
Lastly, the trace condition is given by
\begin{align}
     x\tr_{A}[\mbb{I}^{AB}] + y\tr_{A}[\mbb{F}] = \mbb{I}^{B}
     \Rightarrow xd + y = 1 \ .
\end{align}
Combining these, we have the linear program
\begin{align}
    \text{maximize} \quad & d(x+(2\lambda - 1)y) \notag \\
    \text{subject to} \quad & xd + y = 1 \label{eqn:WH-LP-trace} \\
    & x+yd \geq 0 \label{eqn:WH-LP-3} \\
    & x+y \geq 0 \label{eqn:WH-LP-2} \\
    & x-y \geq 0 \label{eqn:WH-LP-1} \\
    & x \geq 0 \label{eqn:WH-LP-4} \ .
\end{align}
\eqref{eqn:WH-LP-trace} implies $y = 1 -xd$. \eqref{eqn:WH-LP-3},\eqref{eqn:WH-LP-1} imply $\frac{1}{d+1} \leq x \leq \frac{d}{d^{2}-1}$. $\eqref{eqn:WH-LP-2}$ implies $x \leq \frac{1}{d-1}$, which is always satisfied if $x \leq \frac{d}{d^{2}-1}$. Finally, if $x$ satisfies these constraints,
$$ x+yd = x + (1-xd)d \geq \frac{d}{d+1} + d - \frac{d^{2}}{d+1} \geq 0 \ , $$
so \eqref{eqn:WH-LP-4} is also satisfied. Thus we have reduced the LP to
\begin{align}
    \text{maximize} \quad & d(x+(2\lambda - 1)(1-xd)) \label{eqn:WH-LP-simplified-obj} \\
    \text{subject to} \quad & \frac{1}{d+1} \leq x \leq \frac{1}{d-1} \notag \ .
\end{align}
Taking the derivative of the objective function one finds that for $\lambda \leq \frac{1+d}{2d}$, the derivative is positive. Therefore, 
\begin{align*}
x^{*} = 
    \begin{cases} 
        \frac{d}{d^{2}-1} & \lambda \leq \frac{1+d}{2d} \\
        \frac{1}{d+1} & \text{ otherwise} \ .
    \end{cases}
\end{align*}
Plugging $x^{*}$ into \eqref{eqn:WH-LP-simplified-obj} completes the proof.
\end{proof}
\noindent In Section \ref{sec:relaxations_of_cv}, we generalize this derivation to determine the PPT relaxation of cv for the $n$-fold Werner-Holevo channels.

\section{Multiplicativity of cv }

\label{Sect:Multiplicativity}

We next consider how the communication value behaves when we combine two or more channels.  The cv is multiplicative for two channels $\mc{N}$ and $\mc{M}$ if
\begin{equation}
\cv(\mc{N}\otimes\mc{M})=\cv(\mc{N})\cv(\mc{M}).
\end{equation}
When multiplicativity holds, it means that an optimal strategy for guessing channel inputs involves using uncorrelated inputs and measurements across the two channels.  A concrete example of non-multiplicativity is given by the Holevo-Werner family of channels, as proven in Section \ref{Sect:non-multiplicativity}.  In general, it is a hard problem to decide whether two channels have a multiplicative communication value.  More progress can be made when relaxing this problem to the PPT cone, and we conduct such an analysis in Section \ref{sec:relaxations_of_cv}.  Here we resolve on the question of multiplicativity for a few special cases.

\subsection{Entanglement-Breaking Channels}

Our first result shows that non-multiplicativity arises only if the channel is capable of transmitting entanglement.

\begin{theorem}
\label{Thm:EBC-mult}
If $\mc{N}$ is an entanglement-breaking channel, then $\cv(\mc{N}\otimes\mc{M})=\cv(\mc{N})\cv(\mc{M})$ for an arbitrary channel $\mc{M}$.
\end{theorem}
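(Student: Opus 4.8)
The plan is to pass to the dual conic program for the communication value, Eq.~\eqref{Eq:cv-dual}, and to exhibit an optimal dual certificate for $\mc{N}\otimes\mc{M}$ as the tensor product of optimal certificates for $\mc{N}$ and for $\mc{M}$. The inequality $\cv(\mc{N}\otimes\mc{M})\geq\cv(\mc{N})\cv(\mc{M})$ is free: if $\Omega_\mc{N}^{AB}$ and $\Omega_\mc{M}^{A'B'}$ are optimal in Proposition~\ref{Prop:Proposition-cv}, then $\Omega_\mc{N}\otimes\Omega_\mc{M}$ is a feasible test operator for $\mc{N}\otimes\mc{M}$ and attains the product value. So everything is in the reverse inequality.

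First I would record a restatement of dual feasibility. Since $\sep^*$ is exactly the cone of block-positive operators, the constraint $\mbb{I}^A\otimes Z^B-J_{\mc{N}}\in\sep^*(A:B)$ is equivalent to $(\bra{a}\otimes\mbb{I}^B)J_{\mc{N}}(\ket{a}\otimes\mbb{I}^B)\preceq Z^B$ for every unit vector $\ket a\in\mc{H}^A$; and since the left-hand side equals $\mc{N}(\op{\overline a}{\overline a})$, this is the same as $Z^B\succeq\mc{N}(\rho)$ for all $\rho\in\mc{D}(A)$. Now take $Z_\mc{N}^B$, $Z_\mc{M}^{B'}$ dual-optimal for $\cv(\mc{N})$, $\cv(\mc{M})$, so $\tr[Z_\mc{N}^B\otimes Z_\mc{M}^{B'}]=\cv(\mc{N})\cv(\mc{M})$. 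It suffices to show $Z_\mc{N}^B\otimes Z_\mc{M}^{B'}$ is dual feasible for $\cv(\mc{N}\otimes\mc{M})$; again testing block-positivity on product vectors $\ket{\alpha}^{AA'}\ket{\beta}^{BB'}$ and using $\ip{\alpha}{\alpha}=1$, this is precisely the inequality $\langle\alpha\beta|\,J_\mc{N}^{AB}\otimes J_\mc{M}^{A'B'}\,|\alpha\beta\rangle\leq\bra{\beta}Z_\mc{N}^B\otimes Z_\mc{M}^{B'}\ket{\beta}$ for all unit $\ket{\alpha},\ket{\beta}$ (identifying $J_{\mc{N}\otimes\mc{M}}$ with $J_\mc{N}^{AB}\otimes J_\mc{M}^{A'B'}$ up to reordering $B\leftrightarrow A'$).

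Entanglement-breaking enters at this point: $J_\mc{N}\in\sep(A:B)$, so $J_\mc{N}=\sum_i t_i\op{a_i}{a_i}^A\otimes\op{b_i}{b_i}^B$ with $t_i>0$. Contracting the $A$- and $B$-slots of each summand against $\ket{\alpha}$ and $\ket{\beta}$ produces $\ket{u_i}^{A'}:=(\bra{a_i}\otimes\mbb{I}^{A'})\ket{\alpha}$ and $\ket{v_i}^{B'}:=(\bra{b_i}\otimes\mbb{I}^{B'})\ket{\beta}$, giving $\langle\alpha\beta|J_\mc{N}\otimes J_\mc{M}|\alpha\beta\rangle=\sum_i t_i\,\langle u_iv_i|J_\mc{M}^{A'B'}|u_iv_i\rangle$. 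Each $\ket{u_i}\otimes\ket{v_i}$ is a \emph{product} vector across $A':B'$, so block-positivity of $\mbb{I}^{A'}\otimes Z_\mc{M}^{B'}-J_\mc{M}$ gives, termwise, $\langle u_iv_i|J_\mc{M}|u_iv_i\rangle\leq\ip{u_i}{u_i}\,\bra{v_i}Z_\mc{M}^{B'}\ket{v_i}$. Using $\ip{u_i}{u_i}=\bra{a_i}\rho^A\ket{a_i}$ with $\rho^A:=\tr_{A'}\op{\alpha}{\alpha}\in\mc{D}(A)$ and $\bra{v_i}Z_\mc{M}^{B'}\ket{v_i}=\bra{\beta}\op{b_i}{b_i}^B\otimes Z_\mc{M}^{B'}\ket{\beta}$, the bound becomes $\bra{\beta}\big(\sum_i t_i\bra{a_i}\rho^A\ket{a_i}\op{b_i}{b_i}^B\big)\otimes Z_\mc{M}^{B'}\ket{\beta}$. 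The operator in parentheses equals $\mc{N}\big((\rho^A)^{\T}\big)$, a genuine output state of $\mc{N}$; this identification works precisely because each summand of $J_\mc{N}$ is rank one on $A$. Then feasibility of $Z_\mc{N}$ yields $\mc{N}((\rho^A)^{\T})\preceq Z_\mc{N}^B$, and tensoring with $Z_\mc{M}^{B'}\succeq 0$ preserves the semidefinite order, so $\langle\alpha\beta|J_\mc{N}\otimes J_\mc{M}|\alpha\beta\rangle\leq\bra{\beta}Z_\mc{N}^B\otimes Z_\mc{M}^{B'}\ket{\beta}$, which is the reverse inequality.

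The main obstacle is conceptual, not computational: one may not simply tensor arbitrary dual certificates, because $\sep^*(A:B)\otimes\Pos(A'B')\not\subseteq\sep^*(AA':BB')$ in general --- partial transposes of entangled states give explicit counterexamples, so the argument that would work if the cone were $\Pos$ fails. The entanglement-breaking hypothesis is exactly what repairs this: it forces the $A$-side contraction to collapse to an honest single-system state $\rho^A$, so that the dual feasibility of $Z_\mc{N}$ can be applied. Everything else is careful index bookkeeping and spotting the $\mc{N}((\rho^A)^{\T})$ structure.
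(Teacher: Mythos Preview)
Your proof is correct and follows essentially the same route as the paper's: exhibit the tensor product $Z_{\mc{N}}\otimes Z_{\mc{M}}$ of dual optimizers as a dual-feasible certificate for $\mc{N}\otimes\mc{M}$, using separability of $J_{\mc{N}}$ to collapse the $A$-contraction into an honest state on which the single-channel feasibility of $Z_{\mc{N}}$ applies. The only cosmetic difference is that the paper works at the operator-inequality level (showing $(\mc{N}\otimes\mc{M})(\rho^{AA'})\preceq Z_{\mc{N}}\otimes Z_{\mc{M}}$ directly via the measure-and-prepare form $J_{\mc{N}}=\sum_x\Pi_x\otimes\rho_x$), whereas you test block-positivity against product vectors using a rank-one separable decomposition; these are equivalent unpackings of the same dual condition.
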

\begin{proof}
Since $\mN$ is EB, its Choi matrix has the form $J_{\mN}=\sum_x\Pi_x\otimes\rho_x$ for some POVM $\{\Pi_x\}_x$.  The dual optimization of cv (i.e. Eq. \eqref{Eq:cv-dual}) can then be expressed as
\begin{align}
\label{Eq:EB-multiplicative} 
&\cv(\mc{N}\otimes\mc{M})=\min \;\;\tr[Z^{BB'}]\notag\\
&\text{subject to}\;\; Z^{BB'}\geq \sum_x\tr[\Pi_x\rho^A]\op{x}{x}\otimes\mc{M}(\sigma_x) \notag\\
& \qquad\sigma_x:=\tr_A[(\Pi^A_x\otimes\mbb{I}^{A'})\rho^{AA'}]/\tr[\Pi_x\rho^A],\;\; \forall \rho^{AA'}.
\end{align}
Suppose that $Z^B\geq \mc{N}(\rho)$ for all $\rho$ and $Z^{B'}\geq\mc{M}(\sigma)$.  Then
\begin{align}
    \sum_x\tr[\Pi_x\rho^A]\op{x}{x}\otimes\mc{M}(\sigma_x)&\leq  \sum_x\tr[\Pi_x\rho^A]\op{x}{x}\otimes Z^{B'}\notag\\
    &\leq Z^B\otimes Z^{B'}.\notag
\end{align}
Thus $Z^B\otimes Z^{B'}$ is feasible in Eq. \eqref{Eq:EB-multiplicative}.  By choosing $Z^B$ and $Z^{B'}$ to be the dual optimizers for $\mc{N}$ and $\mc{M}$, respectively, we have $\cv(\mc{N}\otimes\mc{M})\leq \cv(\mc{N})\cv(\mc{M})$.  Since the opposite inequality trivially holds, we have the equality $\cv(\mc{N}\otimes\mc{M})= \cv(\mc{N})\cv(\mc{M})$.
\end{proof}
\noindent By applying Theorem \ref{Thm:EBC-mult} iteratively across $n$ copies of an entanglement-breaking channel, we obtain a single-letter formulation of the cv capacity.
\begin{corollary}
If $\mN$ is an entanglement-breaking channel, then
\begin{equation}
    \mathcal{CV}(\mN)=\cv(\mN).
\end{equation}
\end{corollary}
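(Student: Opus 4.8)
The plan is to iterate Theorem~\ref{Thm:EBC-mult}. Because $\mN$ is entanglement-breaking, for every $k\geq 2$ I can apply that theorem with the EB channel being $\mN$ itself and the ``arbitrary channel'' $\mc{M}$ being $\mN^{\otimes(k-1)}$. This gives the recursion $\cv(\mN^{\otimes k})=\cv(\mN)\,\cv(\mN^{\otimes(k-1)})$, and a one-line induction on $k$ with base case $\cv(\mN^{\otimes 1})=\cv(\mN)$ yields $\cv(\mN^{\otimes k})=\cv(\mN)^{k}$ for all $k\in\mbb{N}$. A point worth noting in the write-up: we never need $\mN^{\otimes(k-1)}$ itself to be entanglement-breaking, so no closure property of the EB class has to be invoked — the EB channel plugged into the hypothesis of Theorem~\ref{Thm:EBC-mult} at every step is the fixed channel $\mN$.

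Then I would substitute into the definition $\mc{CV}(\mN)=\lim_{k\to\infty}\tfrac{1}{k}\log\cv(\mN^{\otimes k})$: since $\tfrac{1}{k}\log\cv(\mN)^{k}=\log\cv(\mN)$ for every $k$, the sequence is constant and the limit equals $\log\cv(\mN)$, i.e. the cv capacity collapses to its single-copy value. (To match the definition of $\mc{CV}$ given earlier, the right-hand side of the displayed equation in the corollary should read $\log\cv(\mN)$; this is the ``single-letter'' statement alluded to in the surrounding text.)

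I do not anticipate any genuine obstacle: the corollary is an immediate consequence of multiplicativity, and all the substantive work is already carried out in Theorem~\ref{Thm:EBC-mult}. The only thing I would be careful to verify explicitly is that the hypothesis of Theorem~\ref{Thm:EBC-mult} is satisfied at each stage of the induction, which it is, precisely because the EB factor is always the same channel $\mN$ regardless of $k$.
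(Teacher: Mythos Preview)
Your proposal is correct and matches the paper's argument exactly: the paper simply states that the corollary follows ``by applying Theorem~\ref{Thm:EBC-mult} iteratively across $n$ copies of an entanglement-breaking channel.'' You are also right that, given the definition $\mc{CV}(\mN)=\lim_{k\to\infty}\tfrac{1}{k}\log\cv(\mN^{\otimes k})$, the displayed equation should read $\mc{CV}(\mN)=\log\cv(\mN)$; the paper's stated form is a minor slip.
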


\subsection{Covariant Channels}

We next turn to channels that have a high degree of symmetry.  To study the question of multiplicativity, it will be helpful to use the relationship between cv and GME.  The following is a powerful result proven in Ref. \cite{Zhu-2011a} regarding multiplicativity of the GME.  We say an operator $\rho^{AB}$ is component-wise non-negative if there exists an orthonormal product basis $\{\ket{i,j}\}_{i,j}$ such that $\bra{i,j}\rho\ket{i',j'}\geq 0$ for all $i,j,i',j'$.
\begin{lemma}[\cite{Zhu-2011a}]
\label{Lemma:non-negative-multiplicative}
If $\rho^{AB}$ is component-wise non-negative and $\sigma^{A'B'}$ is any other density operator, then $\Lambda^2(\rho\otimes\sigma)=\Lambda^2(\rho)\Lambda^2(\sigma)$.
\end{lemma}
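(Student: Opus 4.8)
The plan is to prove multiplicativity of $\Lambda^2$ for a component-wise non-negative operator $\rho^{AB}$ tensored with an arbitrary density operator $\sigma^{A'B'}$ by exploiting the Perron--Frobenius structure that component-wise non-negativity provides. First I would recall that $\Lambda^2(\rho\otimes\sigma)\geq\Lambda^2(\rho)\Lambda^2(\sigma)$ always holds trivially, since the product of the optimal product vectors for $\rho$ and for $\sigma$ is a valid product vector for $\rho\otimes\sigma$ (grouping systems as $AA':BB'$). So the entire content is the reverse inequality $\Lambda^2(\rho\otimes\sigma)\leq\Lambda^2(\rho)\Lambda^2(\sigma)$.

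The key idea I would use is that when $\rho$ is component-wise non-negative in a product basis $\{\ket{i,j}\}$, the maximization $\max_{\ket{a}\ket{b}}\bra{a,b}\rho\ket{a,b}$ can be restricted, without loss of generality, to vectors $\ket{a}$ and $\ket{b}$ having non-negative real coordinates in the respective bases $\{\ket{i}\}$, $\{\ket{j}\}$: replacing each amplitude by its modulus can only increase $\bra{a,b}\rho\ket{a,b}=\sum_{i,j,i',j'}\overline{a_i b_j}\,\rho_{ij,i'j'}\,a_{i'}b_{j'}$ because all the $\rho_{ij,i'j'}$ are non-negative. Next, for the product operator $\rho\otimes\sigma$ acting on $AA':BB'$, I would take an optimal product vector $\ket{\alpha}^{AA'}\ket{\beta}^{BB'}$ achieving $\Lambda^2(\rho\otimes\sigma)$, expand $\ket{\alpha}=\sum_i \ket{i}^A\ket{\alpha_i}^{A'}$ and $\ket{\beta}=\sum_j\ket{j}^B\ket{\beta_j}^{B'}$, and compute
\begin{align}
\bra{\alpha,\beta}\rho\otimes\sigma\ket{\alpha,\beta}=\sum_{i,j,i',j'}\rho_{ij,i'j'}\,\bra{\alpha_i,\beta_j}\sigma\ket{\alpha_{i'},\beta_{j'}}.\notag
\end{align}
Since $\sigma\geq 0$, the matrix $M_{(ij),(i'j')}:=\bra{\alpha_i,\beta_j}\sigma\ket{\alpha_{i'},\beta_{j'}}$ is itself positive semidefinite, hence $|M_{(ij),(i'j')}|\leq\sqrt{M_{(ij),(ij)}M_{(i'j')(i'j')}}$. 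Using $\rho_{ij,i'j'}\geq 0$, the sum is bounded by $\sum_{i,j,i',j'}\rho_{ij,i'j'}\sqrt{M_{(ij),(ij)}}\sqrt{M_{(i'j')(i'j')}}=\bra{u,v'}\rho'\ket{\cdots}$-type expression, i.e. it is at most $\bra{a,b}\rho\ket{a,b}$ where $a_i b_j:=\sqrt{M_{(ij),(ij)}}=\sqrt{\bra{\alpha_i,\beta_j}\sigma\ket{\alpha_i,\beta_j}}$ --- provided this ``square-root'' matrix factorizes as a product of an $A$-vector and a $B$-vector. The main obstacle is precisely this factorization: $\sqrt{\bra{\alpha_i,\beta_j}\sigma\ket{\alpha_i,\beta_j}}$ need not split as $a_i b_j$. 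I would resolve this by instead bounding $M_{(ij),(ij)}=\bra{\alpha_i,\beta_j}\sigma\ket{\alpha_i,\beta_j}\leq\|\alpha_i\|^2\|\beta_j\|^2\,\Lambda^2(\sigma)$ when $\alpha_i,\beta_j$ are rescaled to unit vectors --- more carefully, $\bra{\alpha_i,\beta_j}\sigma\ket{\alpha_i,\beta_j}\leq\Lambda^2(\sigma)\|\alpha_i\|^2\|\beta_j\|^2$ by definition of $\Lambda^2$ --- so that with $a_i:=\|\alpha_i\|$, $b_j:=\|\beta_j\|$ (which now genuinely factorize and satisfy $\sum_i a_i^2=\sum_j b_j^2=1$) one gets
\begin{align}
\bra{\alpha,\beta}\rho\otimes\sigma\ket{\alpha,\beta}\leq\Lambda^2(\sigma)\sum_{i,j,i',j'}\rho_{ij,i'j'}\,a_i b_j a_{i'}b_{j'}\leq\Lambda^2(\sigma)\Lambda^2(\rho),\notag
\end{align}
where the last step uses component-wise non-negativity of $\rho$ to identify $\sum\rho_{ij,i'j'}a_ib_ja_{i'}b_{j'}=\bra{a,b}\rho\ket{a,b}\leq\Lambda^2(\rho)$.

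To make the middle inequality rigorous I need $|M_{(ij),(i'j')}|\leq\sqrt{M_{(ij),(ij)}}\sqrt{M_{(i'j')(i'j')}}$ combined with $\rho_{ij,i'j'}\geq 0$ to pass from the true quadratic form to the one with all diagonal-dominated entries; concretely $\sum\rho_{ij,i'j'}M_{(ij),(i'j')}\leq\sum\rho_{ij,i'j'}\sqrt{M_{(ij),(ij)}}\sqrt{M_{(i'j')(i'j')}}$, and then each $\sqrt{M_{(ij),(ij)}}\leq\sqrt{\Lambda^2(\sigma)}\,a_ib_j$. I would double-check that the reduction to non-negative coordinates for $\ket{a},\ket{b}$ is compatible with this (it is, since we only ever use $\rho_{ij,i'j'}\geq 0$ and non-negative products $a_ib_ja_{i'}b_{j'}$). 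Finally I would assemble the two inequalities into the equality $\Lambda^2(\rho\otimes\sigma)=\Lambda^2(\rho)\Lambda^2(\sigma)$ and note that this is exactly the statement of the lemma, citing \cite{Zhu-2011a} as the original source. The only genuinely delicate point, and where I would slow down, is ensuring the Cauchy--Schwarz step on the PSD matrix $M$ interacts correctly with the non-negativity of $\rho$'s entries — that is the crux that makes component-wise non-negativity (rather than mere positivity) the essential hypothesis.
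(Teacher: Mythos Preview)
Your proposal is correct and follows essentially the same route as the paper's (commented-out) proof: expand the optimal $AA':BB'$ product vectors along the $A$- and $B$-bases, use the triangle inequality (valid because $\rho_{ij,i'j'}\geq 0$), apply Cauchy--Schwarz on the PSD matrix of $\sigma$-inner products, bound each diagonal term by $\Lambda^2(\sigma)$, and recognize the remaining sum as $\bra{a,b}\rho\ket{a,b}$ with $a_i=\|\alpha_i\|$, $b_j=\|\beta_j\|$. The only cosmetic difference is that the paper pre-normalizes the $\ket{\alpha_i},\ket{\beta_j}$ and factors out the norms $a_i,b_j$ from the start, which avoids the factorization worry you raise and then resolve midway through.
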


\begin{comment}

\begin{proof}
We reproduce the proof here for completeness.  Let $\ket{\psi}^{AA'}$ and $\ket{\phi}^{BB'}$ be any two product states.  Then we can write $\ket{\psi}=\sum_{i}a_i\ket{i}\ket{\alpha_i}$ where $a_i>0$ with $\sum_{i}a_i^2=1$, and likewise $\ket{\phi}=\sum_{j}b_j\ket{j}\ket{\beta_j}$ where $b_j>0$ with $\sum_{j}b_i^2=1$.  Here, $\ket{\alpha_i}$ and $\ket{\beta_j}$ are normalized states that have absorbed any complex phase of $a_i$ and $b_j$, respectively.  Then
\begin{align}
   |\bra{\psi,\phi}\rho\otimes\sigma\ket{\psi,\phi}|&=\left|\sum_{i,i',j,j'}a_ia_{i'}b_jb_{j'}\bra{i,j}\rho\ket{i',j'}\bra{\alpha_i,\beta_j}\sigma\ket{\alpha_{i'},\beta_{j'}}\right|\notag\\
   &=\sum_{i,i',j,j'}a_ia_{i'}b_jb_{j'}\bra{i,j}\rho\ket{i',j'}\left|\bra{\alpha_i,\beta_j}\sigma\ket{\alpha_{i'},\beta_{j'}}\right|\notag\\
   &\leq \sum_{i,i',j,j'}a_ia_{i'}b_jb_{j'}\bra{i,j}\rho\ket{i',j'}\sqrt{\bra{\alpha_i,\beta_j}\sigma\ket{\alpha_i,\beta_j}\bra{\alpha_{i'},\beta_{j'}}\sigma\ket{\alpha_{i'},\beta_{j'}}}\notag\\
   &\leq \sum_{i,i',j,j'}a_ia_{i'}b_jb_{j'}\bra{i,j}\rho\ket{i',j'}\Lambda^2(\sigma)\notag\\
   &\leq\Lambda^2(\rho)\Lambda^2(\sigma),
\end{align}
where the last inequality follows from the fact that $\sum_{i}a_i\ket{i}$ and $\sum_{j}b_j\ket{j}$ are normalized states.  Since the converse inequality is trivial, the theorem is proven.
\end{proof}

\end{comment}
\noindent %For example, in two qubits all Bell-diagonal states $\rho^{AB}$ are component-wise non-negative \cite{Zhu-2011a}, a fact we will use in Section \ref{Sect:qubits-multiplicative}.  
For example, the Choi matrix of the identity channel, $\phi^+_{d'}=J_{\id_{d'}}$, is component-wise non-negative in the computational basis.  Therefore by the previous lemma we have \[\Lambda(J_{\mN}\otimes\id_{d'})=\Lambda(J_{\mN})\Lambda(\id_{d'})=\Lambda(J_{\mN})\]
for any channel $\mN$.  As we will now show, this sort of multipicativity can be readily extended to the communication value for channels with symmetry.  

Let $\mc{G}$ be any group with an irreducible unitary representation on $\mbb{C}^d$.  Then, as we did in Eq. \eqref{Eq:UU-twirling}, let $\mc{T}_{UU}$ denote the bipartite group twirling map with respect to $\mc{G}$,
\begin{align}
    \mc{T}_{UU}(\rho^{AB})=\int_{\mc{G}}dU (U\otimes U)\rho(U\otimes U)^\dagger.
\end{align}
A channel $\mc{N}$ is called $\mc{G}$-covariant if $\mc{N}(U_g\rho U_g^\dagger)=U_g\mc{N}(\rho)U_g^\dagger$ for all $g\in\mc{G}$ and all $\rho$.  On the level of Choi matrices, this is equivalent to $\mc{T}_{\ol{U}U}(J_\mc{N})=J_{\mc{N}}$, where $\ol{U}$ denote complex conjugation.  Note that $\Gamma_A\circ\mc{T}\circ\Gamma_A$ is the CPTP twirling map $\mc{T}_{\ol{U}U}$, where $\Gamma_A$ is the partial transpose on system $A$.    Likewise, the map $\Gamma\circ\mc{N}\circ\Gamma$ is $\mc{G}$-covariant if $\mc{T}_{UU}(J_\mc{N})=J_{\mc{N}}$, where $\Gamma$ denotes the transpose map.

\begin{theorem}\label{thm:covariance-multiplicativity}
Let $\mc{G}$ and $\mc{G}'$ have irreducible unitary representations on $\mbb{C}^d$ and $\mbb{C}^{d'}$ respectively.  Suppose either $\mN$ or $\Gamma\circ\mN\circ\Gamma$ is $\mc{G}$-covariant and likewise either $\mN'$ or $\Gamma\circ\mN'\circ\Gamma$ is $\mc{G}$'-covariant.  Further suppose that $J_{\mN}$ is component-wise non-negative.  Then
\begin{align}
    \cv(\mN\otimes\mN')=\cv(\mN)\cv(\mN').
\end{align}
\end{theorem}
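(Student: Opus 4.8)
\textit{Proof proposal.} The plan is to reduce everything to two facts: (i) for a channel whose Choi operator is fixed by a suitable group twirl built from an irreducible representation, the upper bound in Eq.~\eqref{Eq:cv-GM} is \emph{tight}, i.e. $\cv(\mN)=d\,\Lambda^2(J_\mN)$ and $\cv(\mN')=d'\,\Lambda^2(J_{\mN'})$; and (ii) $\Lambda^2$ is multiplicative here because $J_\mN$ is component-wise non-negative (Lemma~\ref{Lemma:non-negative-multiplicative}). Granting these, one chains the elementary bounds
\[
\cv(\mN)\cv(\mN')\ \le\ \cv(\mN\otimes\mN')\ \le\ d\,d'\,\Lambda^2(J_\mN\otimes J_{\mN'})\ =\ d\,d'\,\Lambda^2(J_\mN)\Lambda^2(J_{\mN'})\ =\ \cv(\mN)\cv(\mN'),
\]
where the first inequality uses the product feasible point $\Omega_\mN\otimes\Omega_{\mN'}$ in Proposition~\ref{Prop:Proposition-cv} (which gives $\tr[(\Omega_\mN\otimes\Omega_{\mN'})(J_\mN\otimes J_{\mN'})]=\cv(\mN)\cv(\mN')$), the second is the GME upper bound of Eq.~\eqref{Eq:cv-GM} applied to the channel $\mN\otimes\mN'$ of output dimension $dd'$, the third equality is Lemma~\ref{Lemma:non-negative-multiplicative} together with $\Lambda^2(cX)=c\,\Lambda^2(X)$ (rescaling $J_{\mN'}$ to a density operator), and the last equality is fact (i). Forcing the chain to collapse yields $\cv(\mN\otimes\mN')=\cv(\mN)\cv(\mN')$.

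For fact (i), I would argue as in the (commented-out) covariance lemma. Fix a product vector attaining the GME, rescaled so that $\ket{\alpha}$ has squared norm $d$ and $\ket{\beta}$ is a unit vector, hence $\bra{\alpha,\beta}J_\mN\ket{\alpha,\beta}=d\,\Lambda^2(J_\mN)$. By hypothesis, whichever of $\mN$ or $\Gamma\circ\mN\circ\Gamma$ is $\mc{G}$-covariant, there is a bipartite group twirl $\mc{T}$ with respect to the irreducible representation on $\mbb{C}^d$ — namely $\mc{T}_{\bar U U}$ in the first case and $\mc{T}_{UU}$ (as in Eq.~\eqref{Eq:UU-twirling}) in the second, the two being related by $\Gamma_A\circ\mc{T}_{UU}\circ\Gamma_A=\mc{T}_{\bar U U}$ — such that $\mc{T}(J_\mN)=J_\mN$. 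Put $\Omega^{AB}=\mc{T}(\op{\alpha,\beta}{\alpha,\beta})$. Since $\mc{T}$ averages unitary conjugations, it is self-adjoint and maps product (hence separable) operators to separable operators, so $\Omega^{AB}\in\sep(A:B)$; and $\tr_A\Omega^{AB}=\mbb{I}^B$ follows from Schur's lemma, because tracing out $A$ leaves the twirl of $\op{\beta}{\beta}$ over the irrep, which equals $\tfrac{1}{d}\tr[\op{\alpha}{\alpha}]\,\mbb{I}^B=\mbb{I}^B$ (irreducibility of the $\mbb{C}^d$ representation is exactly what is used here). Thus $\Omega^{AB}$ is feasible in Proposition~\ref{Prop:Proposition-cv}, and by self-adjointness of $\mc{T}$ and $\mc{T}(J_\mN)=J_\mN$,
\[
\cv(\mN)\ \ge\ \tr[\Omega^{AB}J_\mN]\ =\ \bra{\alpha,\beta}\mc{T}(J_\mN)\ket{\alpha,\beta}\ =\ \bra{\alpha,\beta}J_\mN\ket{\alpha,\beta}\ =\ d\,\Lambda^2(J_\mN).
\]
Comparing with the upper bound $\cv(\mN)\le d_B\,\Lambda^2(J_\mN)=d\,\Lambda^2(J_\mN)$ of Eq.~\eqref{Eq:cv-GM} forces equality; the same argument applied to $\mN'$ (using the $\mbb{C}^{d'}$ representation of $\mc{G}'$) gives $\cv(\mN')=d'\,\Lambda^2(J_{\mN'})$.

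The main obstacle is fact (i): one must check carefully that the twirled product operator is both separable and has the correct reduced state $\mbb{I}^B$, and that this works uniformly in the two cases ``$\mN$ covariant'' and ``$\Gamma\circ\mN\circ\Gamma$ covariant'' — this is where the two twirls $\mc{T}_{UU}$, $\mc{T}_{\bar U U}$ and their partial-transpose relationship enter, and where irreducibility (Schur) is essential. A secondary, purely bookkeeping point is the normalization mismatch when invoking Lemma~\ref{Lemma:non-negative-multiplicative}, since $J_{\mN'}$ has trace $d'$ rather than $1$; this is handled by homogeneity of $\Lambda^2$. Everything else — the two sandwiching inequalities and their collapse — is routine.
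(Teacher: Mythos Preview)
Your proposal is correct and follows essentially the same route as the paper: establish $\cv(\mN)=d\,\Lambda^2(J_\mN)$ (and likewise for $\mN'$) by twirling a GME-optimal product state into a feasible $\Omega^{AB}$ for Proposition~\ref{Prop:Proposition-cv}, then sandwich $\cv(\mN\otimes\mN')$ between $\cv(\mN)\cv(\mN')$ and $dd'\,\Lambda^2(J_\mN\otimes J_{\mN'})$ and invoke Lemma~\ref{Lemma:non-negative-multiplicative}. You are in fact slightly more explicit than the paper on two points it leaves implicit---that the twirl of a product operator is separable, and that the argument goes through uniformly for either $\mc{T}_{UU}$ or $\mc{T}_{\bar U U}$---and your remark on the homogeneity of $\Lambda^2$ when applying Lemma~\ref{Lemma:non-negative-multiplicative} to the un-normalized $J_{\mN'}$ is a fair bookkeeping note.
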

\begin{proof}
Let $\op{\alpha}{\alpha}^A\otimes\op{\beta}{\beta}^B$ be a product operator with trace equaling $d$ and satisfying 
$d\Lambda^2(J_{\mc{N}})=\bra{\alpha,\beta}J_{\mc{N}}\ket{\alpha,\beta}$.  Suppose now that either $\mc{N}$ or $\Gamma\circ\mc{N}\circ\Gamma$ is $\mc{G}$-covariant.  In either case we have
\begin{align}
    \bra{\alpha,\beta}J_{\mc{N}}\ket{\alpha,\beta}&=\bra{\alpha,\beta}\mc{T}(J_{\mc{N}})\ket{\alpha,\beta}\notag\\
    &=\tr\left[J_{\mc{N}}\mc{T}^\dagger\left(\op{\alpha,\beta}{\alpha,\beta}\right)\right]\notag\\
    &=\tr[J_\mc{N}\Omega^{AB}],
\end{align}
where $\Omega^{AB}=\mc{T}^\dagger(\op{\alpha,\beta}{\alpha,\beta})$.  Note that $\Omega^{AB}$ has trace equaling $d$, and since $U_g$ is an irrep, we have $\tr_{A}\Omega^{AB}=\mbb{I}$.  Hence $\cv(\mc{N})\geq d\Lambda^2(J_{\mc{N}})$, and an analogous argument for $\mN'$ establishes that $\cv(\mN')\geq d'\Lambda^2(J_{\mN'})$.  Therefore,
\begin{align}
    \cv(\mN\otimes\mN')&\geq \cv(\mN)\cv(\mN')\notag\\
    &\geq dd'\Lambda^2(J_{\mc{N}})\Lambda^2(J_{\mN'})\notag\\
    &= dd'\Lambda^2(J_{\mN}\otimes J_{\mN'}),
\end{align}
where the last equality follows from Lemma \ref{Lemma:non-negative-multiplicative}.  However, by the upper bound in Eq. \eqref{Eq:cv-GM} this inequality must be tight, which implies the desired multiplicativity.

\end{proof}

Using this theorem, we can compute the cv capacity for certain channels.
\begin{corollary}
\label{Cor:CV-capacity-symmetry}
Let $\mc{G}$ have irreducible unitary representation on $\mbb{C}^d$.  Suppose that either $\mN$ or $\Gamma\circ\mN\circ\Gamma$ is $\mc{G}$-covariant and $J_{\mN}$ is component-wise non-negative.  Then
\begin{align}
    \mc{CV}(\mN)=\cv(\mN).
\end{align}
\end{corollary}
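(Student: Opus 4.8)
The plan is to prove the stronger statement that $\cv$ is multiplicative over tensor copies, $\cv(\mN^{\otimes k})=\cv(\mN)^k$ for every $k\in\mbb{N}$, from which the single-letter cv capacity follows at once from the definition $\mc{CV}(\mN)=\lim_{k\to\infty}\frac1k\log\cv(\mN^{\otimes k})$.

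First I would isolate the equality $\cv(\mN)=d\,\Lambda^2(J_\mN)$ under the corollary's hypotheses; this is already contained in the proof of Theorem~\ref{thm:covariance-multiplicativity}, where the twirl produces a feasible point $\Omega^{AB}=\mc{T}^\dagger(\op{\alpha,\beta}{\alpha,\beta})$ for the primal program of Proposition~\ref{Prop:Proposition-cv} with $\tr_A\Omega^{AB}=\mbb{I}^B$ and $\tr[J_\mN\Omega^{AB}]=d\,\Lambda^2(J_\mN)$, giving $\cv(\mN)\ge d\,\Lambda^2(J_\mN)$, while the reverse inequality is the right-hand bound of Eq.~\eqref{Eq:cv-GM}, $\cv(\mN)\le d_B\,\Lambda^2(J_\mN)=d\,\Lambda^2(J_\mN)$ (with $d_B=d$ since $\mc{G}$-covariance makes the input and output systems carry the same $d$-dimensional representation). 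Next I would chain the following (in)equalities for the $k$-fold channel:
\begin{align}
\cv(\mN)^k &\;\le\; \cv(\mN^{\otimes k}) \;\le\; d^{k}\,\Lambda^2\!\big(J_{\mN^{\otimes k}}\big) \notag \\
&\;=\; d^{k}\,\Lambda^2\!\big(J_\mN^{\otimes k}\big) \;=\; d^{k}\,\Lambda^2(J_\mN)^{k} \;=\; \cv(\mN)^{k},\notag
\end{align}
where the first inequality is trivial (tensor $k$ copies of an optimal single-copy encoding and POVM), the second is Eq.~\eqref{Eq:cv-GM} applied to the channel $\mN^{\otimes k}$ of output dimension $d^{k}$, the first equality holds because $J_{\mN^{\otimes k}}$ equals $J_\mN^{\otimes k}$ up to a permutation of tensor factors respecting the $A$–$B$ bipartition (and $\Lambda^2$ is invariant under such relabelings), the second equality is $k-1$ applications of Lemma~\ref{Lemma:non-negative-multiplicative} with the component-wise non-negative factor taken to be a single copy $J_\mN$ and the remaining copies $J_\mN^{\otimes(k-1-j)}$ (suitably normalized, using homogeneity of $\Lambda^2$) playing the role of the arbitrary density operator, and the last equality is the first step. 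All the inequalities are thus forced to be equalities, so $\cv(\mN^{\otimes k})=\cv(\mN)^k$ and $\mc{CV}(\mN)=\log\cv(\mN)$.

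I do not expect a genuinely hard step here. The two things to be careful about are (i) that the needed equality $\cv(\mN)=d\,\Lambda^2(J_\mN)$ appears only implicitly inside the proof of Theorem~\ref{thm:covariance-multiplicativity} and should be stated as an intermediate fact, and (ii) that iterating Lemma~\ref{Lemma:non-negative-multiplicative} requires only $J_\mN$ — not $J_\mN^{\otimes(k-1)}$ — to be component-wise non-negative, which is exactly why the single-copy hypothesis on $\mN$ suffices. An alternative would be to induct directly on Theorem~\ref{thm:covariance-multiplicativity}, noting that $\mN^{\otimes k}$ (or $\Gamma\circ\mN^{\otimes k}\circ\Gamma$) is $\mc{G}^{\times k}$-covariant for the group $\mc{G}^{\times k}$ acting irreducibly on $\mbb{C}^{d^{k}}$ through the external tensor product of the representation, and that $J_\mN^{\otimes k}$ is again component-wise non-negative; but the direct computation above keeps everything self-contained.
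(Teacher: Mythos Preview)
Your proof is correct and essentially coincides with the paper's. The paper's proof is precisely your ``alternative'': it applies Theorem~\ref{thm:covariance-multiplicativity} inductively with $\mN'=\mN^{\otimes(n-1)}$ and $\mc{G}'=\mc{G}^{\times(n-1)}$, whereas your primary argument simply unpacks that induction into the explicit chain of inequalities using $\cv(\mN)=d\,\Lambda^2(J_\mN)$ and Lemma~\ref{Lemma:non-negative-multiplicative} --- the same ingredients in the same order. One cosmetic point: the corollary as stated reads $\mc{CV}(\mN)=\cv(\mN)$ rather than $\log\cv(\mN)$, so match the paper's convention in your final line.
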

\begin{proof}
It suffices to prove that $\cv(\mN^{\otimes n})=n\cv(\mN)$.  This follows directly from Theorem \ref{thm:covariance-multiplicativity} by letting $\mN'=\mN^{\otimes n-1}$ and $\mc{G}'=\mc{G}^{\otimes n}$.
\end{proof}

For example, in a qubit system all Pauli channels satisfy the conditions of Corollary \ref{Cor:CV-capacity-symmetry}.  These are channels of the form
\begin{align}
    \mN_{\text{Pauli}}(\rho)=p_0\rho+p_1\sigma_x\rho\sigma_x+p_2\sigma_z\rho\sigma_z+p_3\sigma_y\rho\sigma_y,
\end{align}
and they are covariant with respect to the Pauli group.  Moreover, $J_{\mN_{\text{Pauli}}}$ can always be converted into a matrix with non-negative entries by local unitaries.  Hence using Eq. \eqref{Eq:cv-Pauli} we have
\begin{equation}
    \mc{CV}(\mN_{\text{Pauli}})=2(p_3^{\downarrow}+p_2^{\downarrow}).
\end{equation}

As another example, consider the $d$-dimensional partially depolarizing channel $\mc{D}_{d,\lambda}$ given by
\begin{equation}
    \mc{D}_{d,\lambda}(\rho):=\lambda\rho+(1-\lambda)\frac{\mbb{I}}{d}, \qquad 0\leq\lambda\leq 1.
\end{equation}
The channel $\Gamma\circ\mc{D}_{d,\lambda}\circ\Gamma$ is $\mc{G}$-covariant with respect to the full unitary group on $\mbb{C}^d$ \cite{Horodecki-1999a}.  The Choi matrix is given by $J_{\mc{D}_{d,\lambda}}=\lambda\phi^+_d+(1-\lambda)\mbb{I}\otimes\mbb{I}/d$, which is clearly component-wise non-negative.  Thus by Corollary \ref{Cor:CV-capacity-symmetry} we have
\begin{align}
    \mc{CV}(\mc{D}_{d,\lambda})=\cv(\mc{D}_{d,\lambda})=\lambda d+(1-\lambda).
\end{align}

We remark that the Werner-Holevo family of channels introduced in Section \ref{Sect:Werner-Holevo} fail to satisfy Corollary \ref{Cor:CV-capacity-symmetry} since they are not component-wise non-negative.  In fact, their cv is non-multiplicative, as we will see below.  Nevertheless, Theorem \ref{thm:covariance-multiplicativity} can be applied to a Werner-Holevo channel by using in parallel with another channel that is component-wise non-negative.  For example, when trivially embedding $\mc{W}_{d,\lambda}$ into a larger system we have multiplicativity:
\begin{equation}
    \cv(\mc{W}_{d,\lambda}\otimes\id_{d'})=d'\cv(\mc{W}_{d,\lambda}).
\end{equation}
This result is perhaps surprising since mutliplicativity appears to not hold when we relax the cv optimization to the cone of PPT operators, as is shown in Section \ref{subsec:WHID} (See Fig. \ref{fig:cv-ppt-versus-cv-wh-with-id}).
\medskip

\subsection{Qubit Channels}

\label{Sect:qubits-multiplicative}

In Section \ref{Sect:qubits} we derived an explicit formula for the communication value of qubit channels.  Namely, $\cv(\mN)=1+\sigma_{\max}(N)$, where $N$ is the correlation matrix of $J_{\mN}$.  Here we show that the cv is multiplicative when using two qubit channels in parallel.
\begin{theorem}
\label{Thm:qubit-multiplicative}
Suppose $\mc{M},\mc{N}\in\cptp(A\to B)$ with $d_A=d_B=2$.  Then
\begin{equation}
\cv(\mc{M}\otimes\mc{N})=\cv(\mc{M})\cv(\mc{N}).
\end{equation}
\end{theorem}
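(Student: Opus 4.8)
The plan is to establish the non-trivial inequality $\cv(\mM\otimes\mN)\le\cv(\mM)\cv(\mN)$ (the reverse is immediate from direct-product encodings) by exhibiting a feasible point of the dual program with the right objective value. The starting observation is that, by Eq.~\eqref{Eq:cv-dual} together with the fact that $W\in\sep^*(A:B)$ iff $\bra{\alpha}\bra{\beta}W\ket{\alpha}\ket{\beta}\ge 0$ for all product vectors, one has for any channel $\mM$ that $\cv(\mM)=\min\{\tr[Z]:Z\succeq\mM(\rho)\ \forall\rho\in\mc{D}(A)\}$. Let $Z_1^B$ and $Z_2^{B'}$ be minimizers of this program for $\mM$ and $\mN$, so $Z_i\succeq 0$, $\tr[Z_1]=\cv(\mM)$, $\tr[Z_2]=\cv(\mN)$, and $Z_1\succeq\mM(\rho)$, $Z_2\succeq\mN(\sigma)$ for all $\rho,\sigma$. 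For a qubit channel such a minimizer is explicit: if $\mbf{r}\mapsto A_\mM\mbf{r}+\mbf{c}_\mM$ is the induced Bloch map, then $Z_1=\tfrac12\big((1+\sigma_{\max}(A_\mM))\,\mbb{I}+\mbf{c}_\mM\!\cdot\!\vec{\sigma}\big)$ works, its Bloch ball being the smallest ball enclosing the output ellipsoid of $\mM$, with $\tr[Z_1]=1+\sigma_{\max}(A_\mM)=\cv(\mM)$ by Theorem~\ref{Thm:qubit-cv}.

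I would then take $Z^{BB'}:=Z_1^B\otimes Z_2^{B'}$, which has $\tr[Z^{BB'}]=\cv(\mM)\cv(\mN)$, so it remains only to check that $Z^{BB'}$ is dual feasible for $\cv(\mM\otimes\mN)$, i.e.\ that $Z_1\otimes Z_2\succeq(\mM\otimes\mN)(\rho^{AA'})$ for every $\rho^{AA'}$; by convexity it suffices to take $\rho=\dyad{\psi}$ pure. Writing $\ket{\psi}$ in its Schmidt basis and using that both $\cv$ and the minimizer $Z_i$ are unaffected by pre-composing $\mM,\mN$ with unitaries on $A,A'$, this reduces to proving, for all qubit channels $\mM,\mN$, that
\[
\sum_{i,j\in\{0,1\}}\sqrt{p_i p_j}\;\mM(\op{i}{j})\otimes\mN(\op{i}{j})\ \preceq\ Z_1\otimes Z_2,\qquad p_0=p,\ p_1=1-p,
\]
for $\ket{\psi}=\sqrt{p}\,\ket{00}+\sqrt{1-p}\,\ket{11}$. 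The diagonal terms are harmless: $\mM(\op{i}{i})\preceq Z_1$ and $\mN(\op{i}{i})\preceq Z_2$ because they are channel outputs of states, and tensoring two such $\preceq$-inequalities between positive operators preserves the order, so $\mM(\op{i}{i})\otimes\mN(\op{i}{i})\preceq Z_1\otimes Z_2$.

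The crux — and the step I expect to be the main obstacle — is controlling the coherence term $\sqrt{p(1-p)}\big(\mM(\op{0}{1})\otimes\mN(\op{0}{1})+\text{h.c.}\big)$ and closing the inequality. Here the off-diagonal block $\mM(\op{0}{1})$ is tightly constrained by $Z_1$ and by $\mM(\op{0}{0}),\mM(\op{1}{1})$: it is the off-diagonal block of the positive operator $J_\mM=\left(\begin{smallmatrix}\mM(\op{0}{0})&\mM(\op{0}{1})\\\mM(\op{1}{0})&\mM(\op{1}{1})\end{smallmatrix}\right)$, giving a Schur-complement bound, while $\mM(\op{0}{0}),\mM(\op{1}{1})\preceq Z_1$ bound the diagonal; likewise for $\mN$. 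After substituting these relations the remaining statement becomes a finite-dimensional positivity check on $2\times2$ blocks that genuinely uses $d_A=d_B=2$ and reduces, after a short computation, to an inequality among the Bloch data of $\mM$ and $\mN$. This is the communication-value counterpart of the multiplicativity of the maximal output $\infty$-norm for qubit channels, and I expect the technical weight to lie in organizing these $2\times2$ positivity conditions rather than in any conceptual difficulty. Once dual feasibility of $Z_1\otimes Z_2$ is established, $\cv(\mM\otimes\mN)\le\tr[Z_1\otimes Z_2]=\cv(\mM)\cv(\mN)$, which together with the trivial reverse inequality finishes the proof.
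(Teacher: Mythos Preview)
Your overall strategy—exhibit a dual-feasible $Z^{BB'}$ with $\tr Z^{BB'}=\cv(\mM)\cv(\mN)$—is exactly the one the paper uses, and your explicit single-channel minimizers $Z_i=\tfrac12\big((1+\sigma_{\max})\mbb{I}+\mbf{c}_i\cdot\vec{\sigma}\big)$ are correct. But the paper does \emph{not} take $Z^{BB'}=Z_1\otimes Z_2$. Instead it uses
\[
Z^{BB'}=\tfrac14\Big((\mbb{I}+\mbf{c}\cdot\vec{\sigma})\otimes(\mbb{I}+\mbf{d}\cdot\vec{\sigma})+\big(\sigma_{\max}(M)+\sigma_{\max}(N)+\sigma_{\max}(M)\sigma_{\max}(N)\big)\,\mbb{I}\otimes\mbb{I}\Big),
\]
which has the same trace as $Z_1\otimes Z_2$ but differs from it by the traceless operator $\tfrac14\big(\sigma_{\max}(N)\,\mbf{c}\cdot\vec{\sigma}\otimes\mbb{I}+\sigma_{\max}(M)\,\mbb{I}\otimes\mbf{d}\cdot\vec{\sigma}\big)$; neither choice dominates the other in the operator order. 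The point of the paper's choice is that upon subtracting $(\mM\otimes\mN)(\rho^T)$ the feasibility condition becomes a spectral-norm bound on an operator in which the non-unital shifts $\mbf{c},\mbf{d}$ no longer appear: what remains is exactly $4\,\wt{\mM}\otimes\wt{\mN}(\rho^T)$ for the associated \emph{unital} qubit maps $\wt{\mM},\wt{\mN}$ (same correlation matrices, zero shift). Since Bell-diagonal Choi matrices are component-wise non-negative up to local unitaries, the GME multiplicativity lemma of Zhu \textit{et al.}\ (Lemma~\ref{Lemma:non-negative-multiplicative}) together with Eq.~\eqref{Eq:unital-GM-tight} then closes the bound immediately.

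Your tensor-product ansatz does not enjoy this cancellation, and the step you defer—``organizing these $2\times2$ positivity conditions''—is exactly where the content lies. Setting $\wt{\mM}(\cdot)=Z_1^{-1/2}\mM(\cdot)Z_1^{-1/2}$ and likewise for $\wt{\mN}$, your feasibility condition is equivalent to $\nu_\infty(\wt{\mM}\otimes\wt{\mN})\le 1$, i.e.\ multiplicativity of the maximal output $\infty$-norm for a pair of qubit CP maps that are in general neither unital nor trace-preserving. That is not covered by the standard King-type results (which require unitality), and your Schur-complement sketch does not supply it. So the gap is not merely computational: the key idea you are missing is to pick a \emph{non-product} dual variable tailored so that the problem reduces to the unital case, where known GME multiplicativity can be invoked.
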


\begin{proof}
For $\mc{M}$ and $\mc{N}$ consider their Choi matrices
\begin{align}
J_{\mc{M}}&=\frac{1}{2}\bigg(\mbb{I}\otimes(\mbb{I}+\mbf{c}\cdot\vec{\sigma})+\sum_{i}m_i\sigma_i\otimes\sigma_i\bigg),\notag\\
J_{\mc{N}}&=\frac{1}{2}\bigg(\mbb{I}\otimes(\mbb{I}+\mbf{d}\cdot\vec{\sigma})+\sum_in_i\sigma_i\otimes\sigma_i\bigg).
\end{align}
Notice that their correlation matrices $M=\text{diag}[m_1,m_2,m_3]$ and  $N=\text{diag}[n_1,n_2,n_3]$ are diagonal.  An arbitrary channel can always be converted into this form by performing appropriate pre- and post $SU(2)$ rotations on the channel, which do not change the communication value.  Define the operator 
\begin{align}
\label{Eq:qubit-multiplicative-dual-1}
Z^{BB'}=\frac{1}{4}\bigg(&(\mbb{I}+\mbf{c}\cdot\vec{\sigma})\otimes(\mbb{I}+\mbf{d}\cdot\vec{\sigma})+(\sigma_{\max}(M)\notag\\
&+\sigma_{\max}(N)+\sigma_{\max}(M)\sigma_{\max}(N))\mbb{I}\otimes\mbb{I}\bigg).
\end{align}
Since $\tr[Z^{BB'}]=\cv(\mc{M})\cv(\mc{N})$, by the dual characterization of cv given in Eq. \eqref{Eq:cv-dual}, we will prove that $\cv(\mc{M}\otimes\mc{N})\leq\cv(\mc{M})\cv(\mc{N})$ if we can show that
\begin{equation}
\label{Eq:qubit-multiplicative-dual-2}
Z^{BB'}\geq \mc{M}\otimes\mc{N}(\rho^T).
\end{equation}
for an arbitrary two-qubit state
\begin{align}
\rho^{AA'}=\frac{1}{4}\bigg(\mbb{I}\otimes\mbb{I}+\mbf{r}\cdot\vec{\sigma}\otimes\mbb{I} +\mbb{I}\otimes\mbf{s}\cdot\vec{\sigma}+\sum_{i,j}c_{ij}\sigma_i\otimes\sigma_j\bigg).\notag
\end{align}
Note that we have the action $\mc{M}(\mbb{I})=\mbb{I}+\mbf{c}\cdot\vec{\sigma}$, $\mc{M}(\sigma_x)=m_x\sigma_x$, $\mc{M}(\sigma_y)=-m_y\sigma_y$, $\mc{M}(\sigma_z)=m_z\sigma_z$, and likewise for the action of $\mc{N}$.  Hence
\begin{align}
\mc{M}\otimes\mc{N}(\rho^T)=\frac{1}{4}&\bigg((\mbb{I}+\mbf{c}\cdot\vec{\sigma})\otimes(\mbb{I}+\mbf{d}\cdot\vec{\sigma})+M\mbf{r}\cdot\vec{\sigma}\otimes\mbb{I}\notag\\
&+\mbb{I}\otimes N\mbf{s}\cdot\vec{\sigma}+\sum_{i,j}m_in_jc_{ij}\sigma_i\otimes\sigma_j\bigg).
\end{align}
When comparing with Eq. \eqref{Eq:qubit-multiplicative-dual-1}, we see that Eq. \eqref{Eq:qubit-multiplicative-dual-2} reduces to
\begin{align}
\label{Eq:qubit-multiplicative-dual-3}
\Vert &M\mbf{r}\cdot\vec{\sigma}\otimes\mbb{I}+\mbb{I}\otimes N\mbf{s}\cdot\vec{\sigma}+\sum_{i,j}m_in_jc_{ij}\sigma_i\otimes\sigma_j\Vert_\infty\notag\\
&\leq  \sigma_{\max}(M)+\sigma_{\max}(N)+\sigma_{\max}(M)\sigma_{\max}(N).
\end{align}
To prove this inequality, we note that effectively the non-unital components of $\mc{M}$ and $\mc{N}$ do not appear here.  That is, let $\wt{\mc{M}}$ and $\wt{\mc{N}}$ be the unital CPTP maps defined by the Choi matrices
\begin{align}
J_{\wt{\mc{M}}}&=\frac{1}{2}\bigg(\mbb{I}\otimes\mbb{I}+\sum_{i}m_i\sigma_i\otimes\sigma_i\bigg),\notag\\
J_{\wt{\mc{N}}}&=\frac{1}{2}\bigg(\mbb{I}\otimes\mbb{I}+\sum_in_i\sigma_i\otimes\sigma_i\bigg).
\end{align}
Letting $\ket{\psi}$ denote an eigenvector of largest eigenvalue for the operator on the LHS of Eq. \eqref{Eq:qubit-multiplicative-dual-3}, we have
\begin{align}
\Vert \mbb{I}\otimes\mbb{I}+M\mbf{r}\cdot\vec{\sigma}\otimes&\mbb{I}+\mbb{I}\otimes N\mbf{s}\cdot\vec{\sigma}+\sum_{i,j}m_in_jc_{ij}\sigma_i\otimes\sigma_j\Vert_\infty\notag\\
&=4\bra{\varphi}\wt{\mc{M}}\otimes\wt{\mc{N}}(\rho^T)\ket{\varphi}\notag\\
&=4\tr\left[\rho^{AA'}\otimes\op{\varphi}{\varphi}^{BB'}J_{\wt{\mc{M}}}\otimes J_{\wt{\mc{N}}}\right]\notag\\
&\leq 4\Lambda^2(J_{\wt{\mc{M}}}\otimes J_{\wt{\mc{N}}})\notag\\
&=2\Lambda^2(J_{\wt{\mc{M}}})\cdot 2\Lambda^2(J_{\wt{\mc{N}}})\notag\\
&=(1+\sigma_{\max}(M))(1+\sigma_{\max}(N)),
\end{align}
where we have used the fact that the GME is multiplicative for unital qubit channels (Lemma \ref{Lemma:non-negative-multiplicative}), along with Eq. \eqref{Eq:unital-GM-tight}.  This proves Eq. \eqref{Eq:qubit-multiplicative-dual-3}.

\end{proof}

%%%%%%%%%% The following strong multiplicativity argument is broken

\begin{comment}

We next show that unital qubit channels demonstrate strong multiplicativity.  In particular, it implies that $\cv(\mc{N}^{\otimes n})=[\cv(\mc{N})]^n$ for such channels.
\begin{theorem}
If $\mc{N}$ is a unital qubit channel and $\mc{M}$ is an arbitrary channel then
\begin{equation}
\cv(\mc{N}\otimes\mc{M})=\cv(\mc{N})\cv(\mc{M}).
\end{equation}
\end{theorem}
\begin{proof}
Since $\mc{N}$ is unital, we have that $J_\mc{N}$ is Bell-diagonal up to local unitaries (which do not affect the cv of the channel).  Let $J_{\mc{N}}=2\sum_{i=0}^3 p_i\op{\Phi^+_i}{\Phi^+_i}$ with a labeling chosen such that $p_3\geq p_2\geq p_1\geq p_0$. For any separable $\sigma^{AA':BB'}$ we have
\begin{align}
\tr[\sigma(J_\mc{N}\otimes J_{\mc{M}})]&=2\sum_{i=0}^3 p_i\tr[\sigma(\op{\Phi^+_i}{\Phi^+_i}\otimes J_{\mc{M}})]\notag\\
&\leq 2p_3\tr[\sigma((\op{\Phi_3^+}{\Phi_3^+}+\op{\Phi_1^+}{\Phi_1^+})\otimes J_\mc{M})]\notag\\
&\quad
+ 2p_2\tr[\sigma((\op{\Phi_2^+}{\Phi_2^+}+\op{\Phi_0^+}{\Phi_0^+})\otimes J_\mc{M})].
\end{align}
A key property is that a uniform mixture of any two Bell diagonal states is equivalent to the fully classical state $\frac{1}{2}(\op{00}{00}+\op{11}{11})$, up to local unitaries.  Hence $\op{\Phi_3^+}{\Phi_3^+}+\op{\Phi_1^+}{\Phi_1^+}$ and $\op{\Phi_2^+}{\Phi_2^+}+\op{\Phi_0^+}{\Phi_0^+}$ are both Choi matrices for classical channels, and we can use Lemma \ref{Lem:Classical-multiplicativity}
\end{proof}

\end{comment}

A natural question is whether Theorem \ref{Thm:qubit-multiplicative} can generalized to the case in which only one of the channels is a qubit channel.  Unfortunately the proof of Theorem \ref{Thm:qubit-multiplicative} relies heavily on the Pauli representation of qubit channels, and we therefore only conjecture that qubit channels possess an even stronger form of multiplicativity.

\medskip
\noindent\textit{Conjecture.}  If $\mM\in\cptp(A\to B)$ with $d_A=d_B=2$, then
\[\cv(\mM\otimes\N)=\cv(\mM)\cv(\mN)\]
for any other channel $\mN$.

\subsection{Non-Multiplicativity in Qutrits}

\label{Sect:non-multiplicativity}

In the previous sections we identified examples of channels for which the communication value is multiplicative.  We now provide an example of channels that demonstrate non-multiplicativity.  Our construction is the Werner-Holevo channels, which were previously known to exemplify non-additivity of a channel's minimum output purity \cite{Werner-2002a}, \cite{Zhu-2011a}.  Specifically, the channel $\mc{W}_{d,0}$ has a Choi matrix proportional to the anti-symmetric subspace projector,
\begin{equation}\label{eqn:anti-sym-choi}
J_{\mc{W}_{d,0}}=\frac{1}{d-1}(\mbb{I}\otimes\mbb{I}-\mbb{F}).
\end{equation}
The entanglement properties of this operator have been well-studied \cite{Christandl-2012a, Hubener-2009a, Zhu-2011a}.  In particular, Zhu \textit{et al.} have computed its one and two-copy geometric measures of entanglement to be
\begin{align}
\max_{\ket{\alpha}^{A}\ket{\beta}^B}\bra{\alpha,\beta}J_{\mc{W}_{d,0}}\ket{\alpha,\beta}&=\frac{1}{d-1} \label{Eq:asym-projGM}\\
\max_{\ket{\alpha}^{AA'}\ket{\beta}^{BB'}}\bra{\alpha,\beta}J_{\mc{W}_{d,0}}\otimes J_{\mc{W}_{d,0}}\ket{\alpha,\beta}&=\frac{2}{d(d-1)}.\label{Eq:asym-projGM2}
\end{align}
Equation \eqref{Eq:asym-projGM2} is strictly larger than the square of  Eq. \eqref{Eq:asym-projGM} whenever $d\geq 3$.  %This is the non-multiplicativity is one of the key findings in Ref. \cite{Zhu-2011a} and earlier work of Werner and Holevo \cite{Werner-2002a}.  
Furthermore, the maximization in Eq. \eqref{Eq:asym-projGM2} is attained whenever $\ket{\alpha}^{AA'}$ and $\ket{\beta}^{BB'}$ are maximally entangled states.  Thus, we consider the separable operator
\begin{equation}
\sigma^{AA':BB'}=\sum_{k=1}^{d^2}\op{\varphi_k^+}{\varphi_k^+}^{AA'}\otimes\op{\varphi_k^+}{\varphi_k^+}^{BB'},
\end{equation}
where $\{\ket{\varphi_k^+}\}_{k=1}^{d_2}$ is an orthonormal basis consisting of maximally entangled states for $\mbb{C}^d\otimes\mbb{C}^d$.  This satisfies the conditions of Proposition \ref{Prop:Proposition-cv}.  Hence, we conclude
\begin{align}
\cv(\mc{W}_{d,0})&=\frac{d}{d-1} \label{eqn:anti-sym-cv} \\
\cv(\mc{W}_{d,0}\otimes\mc{W}_{d,0})&=\frac{2d}{d-1},
\end{align}
which yields $\cv(\mc{W}_{d,0})^2<\cv(\mc{W}_{d,0}\otimes\mc{W}_{d,0})$ when $d\geq 3$.  Most notably, for $d=3$ we have $\cv(\mc{W}_{d,0}\otimes\mc{W}_{d,0})=3$ while $\cv(\mc{W}_{d,0})^2=2.25$.

\begin{remark}
%One can see that \eqref{eqn:anti-sym-choi} is the Choi matrix for $\mc{W}_{d,0}$, an instance of the Werner-Holevo channel family. Using Proposition \ref{prop:cv-WH-1-copy}, one has $\cv(\mc{W}_{d,0}) = \frac{d(d+1)}{d^{2}-1} = \frac{d}{d-1}$, recovering \eqref{eqn:anti-sym-cv}. 
In Section \ref{sec:relaxations_of_cv}, we use that the spaces of PPT and SEP $UUVV$-invariant operators are equivalent \cite{Vollbrecht-2001a}, to determine the range of $\lambda$ that satisfy non-multiplicativity for $\cv(\mc{W}_{d,\lambda} \otimes \mc{W}_{d,\lambda})$ numerically.
\end{remark}

\section{Entanglement-Assisted CV}
\label{sec:entanglement_assisted_cv}

\begin{figure}[b]
  \includegraphics[width=\linewidth]{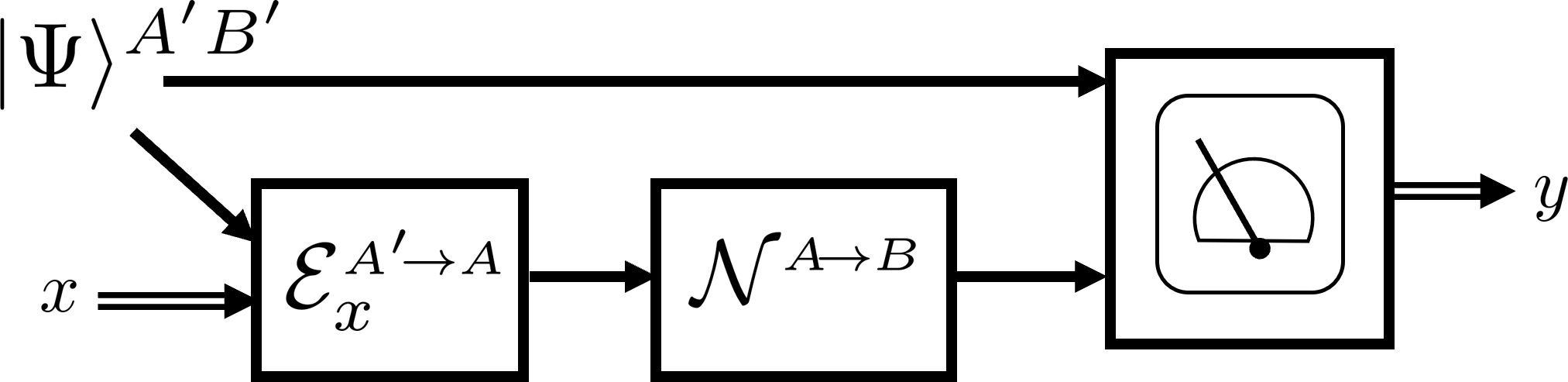}
  \caption{Entanglement-assisted communication value scenario.}
  \label{fig:ea_cv}
\end{figure}

We next generalize the communication scenario and allow the sender and receiver to share entanglement.  Remarkably, this added resource simplifies the problem immensely.  In what follows, we will allow Alice and Bob to share an entangled state $\varphi^{A'B'}$ that can be used to increase the channel cv.  The most general entanglement-assisted protocol is as follows (see Fig. \ref{fig:ea_cv}).  For input $x\in[n]$, Alice performs a CPTP map $\mc{E}_x\in\cptp(A'\to A)$ on her half of the entangled state $\varphi^{A'B'}$.  System $A$ is then fed into the channel, and Bob finally performs a POVM $\{\Pi_y^{BB'}\}_{y\in[n']}$ on systems $BB'$.  The induced channel has transition probabilities given by
\begin{equation}
\label{Eq:Ea-cv}
P(y|x)=\tr\left[\Pi_y^{BB'}\left(\mc{N}^{A\to B}\circ\mc{E}_x^{A'\to A}\otimes\id^{B'}[\varphi^{A'B'}]\right)\right].
\end{equation}
Note that this scenario corresponds to the one used in superdense coding \cite{Bennett-1992a}.  The entanglement-assisted channel cv can now be defined.
\begin{definition}
The entanglement-assisted communication value (ea cv) of a quantum channel $\mc{N}\in\cptp(A\to B)$, denoted $\cv^{*}(\mN)$, is 
\begin{align}
    \sup_{\varphi^{A'B'}}\max\{\cv(\mbf{P})\;|\;\text{$P(y|x)$ given by Eq. \eqref{Eq:Ea-cv}}\} \, ,
\end{align}
where the supremum is taken over all entangled states $\varphi^{A'B'}$ (and all dimensions $A'B'$), while the maximization considers all $n,n'\in\mbb{N}$ along with arbitrary states $\{\rho_x\}_{x\in[n]}$ and POVMs $\{\Pi_y\}_{y\in[n']}$.
\end{definition}
\begin{theorem}
\label{Thm:ea-cv}
For an arbitrary channel $\mc{N}\in\cptp(A\to B)$,
\begin{align}
\cv^*(\mc{N})=\max &\;\;\tr[\sigma^{AB}J_\mc{N}]\notag\\
&\;\;\tr_A[\sigma^{AB}]=\mbb{I}^B\notag\\
&\;\;\sigma^{AB}\in\text{Pos}(A:B),
\label{Eq:theorem-ea-cv}
\end{align}
where $\text{Pos}(A:B)$ denotes the positive cone on $AB$.  Moreover, $\cv^*(\mc{N})$ is attained using a $(d_A)$-dimensional maximally entangled state.
\end{theorem}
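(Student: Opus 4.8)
The plan is to prove Theorem~\ref{Thm:ea-cv} by establishing the two inequalities separately, the upper bound for an \emph{arbitrary} shared state and the lower bound realized specifically by $\varphi^{A'B'}=\Phi^+_{d_A}$. As a preliminary, I would note that by strong SDP duality (verbatim as for Proposition~\ref{Prop:Proposition-cv}, but with $\sep(A:B)$ replaced by $\Pos(AB)$) the optimal value in \eqref{Eq:theorem-ea-cv} equals $\min\{\tr[Z^B]\;|\;\mbb{I}^A\otimes Z^B\geq J_\mc{N}\}=\exp[-H_{\min}(A|B)_{J_\mc{N}}]$, so the statement is the entanglement-assisted counterpart of Proposition~\ref{Prop:Proposition-cv} and of Eq.~\eqref{eqn:cv-sep-min-relation}.

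\emph{Upper bound.} Fix an arbitrary protocol: a shared state $\varphi^{A'B'}$, encodings $\mc{E}_x\in\cptp(A'\to A)$, and a POVM $\{\Pi_y^{BB'}\}_y$. Writing $R_{ij}^{B'}:=\bra{i}_A\rho^{AB'}\ket{j}_A$ for the partial matrix elements on $A$ of a generic $\rho^{AB'}$, one has $(\mc{N}^{A\to B}\otimes\id^{B'})(\rho^{AB'})=\sum_{ij}\mc{N}(\op{i}{j})^B\otimes R_{ij}^{B'}$, and a short Choi-matrix computation gives, for any decoding function $g\colon[n']\to[n]$,
\[
\sum_y P(g(y)|y)=\tr\!\big[\sigma^{AB}J_\mc{N}\big],\qquad
\sigma^{AB}:=\sum_y\sum_{ij}\op{i}{j}^A\otimes\tr_{B'}\!\Big[\Pi_y^{BB'}\big(\mbb{I}^B\otimes R^{(y)}_{ji}\big)\Big],
\]
where $R^{(y)}_{ij}$ is built from $\rho_{g(y)}^{AB'}:=(\mc{E}_{g(y)}\otimes\id^{B'})(\varphi^{A'B'})$. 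I would then check $\sigma^{AB}$ is feasible for \eqref{Eq:theorem-ea-cv}: for $\ket{\psi}^{AB}=\sum_j\ket{j}^A\ket{\psi_j}^B$ and $V_\psi:=\sum_j\ket{\psi_j}^B\bra{j}^A$ one computes $\bra{\psi}\sigma^{AB}\ket{\psi}=\sum_y\tr[\Pi_y^{BB'}(V_\psi\otimes\mbb{I}^{B'})\rho_{g(y)}^{AB'}(V_\psi\otimes\mbb{I}^{B'})^\dagger]\ge 0$, a trace of two positive operators, so $\sigma^{AB}\in\Pos(AB)$; and since each $\mc{E}_x$ is trace preserving, $\sum_i R^{(y)}_{ii}=\tr_A\rho_{g(y)}^{AB'}=\varphi^{B'}$ for all $y$, whence $\tr_A\sigma^{AB}=\tr_{B'}[(\sum_y\Pi_y^{BB'})(\mbb{I}^B\otimes\varphi^{B'})]=\mbb{I}^B$. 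Taking $g$ to be the maximum-likelihood decoder gives $\cv(\mbf{P})=\tr[\sigma^{AB}J_\mc{N}]$, so $\cv^*(\mc{N})$ is at most the RHS of \eqref{Eq:theorem-ea-cv}.

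\emph{Lower bound and optimality of $\Phi^+_{d_A}$.} Conversely, given any feasible $\sigma^{AB}$, I would reverse-engineer a protocol on $\varphi^{A'B'}=\Phi^+_{d_A}$ (so $A'\cong B'\cong\mbb{C}^{d_A}$) with \emph{unitary} encodings $\mc{E}_x(\cdot)=U_x(\cdot)U_x^\dagger$. For such a protocol the computation above collapses to $\sigma_x^{AB}=\tfrac1{d_A}(\overline{U_x}\otimes\mbb{I}^B)\Pi_x^{AB}(\overline{U_x}\otimes\mbb{I}^B)^\dagger$, where $\Pi_x^{AB}$ denotes $\Pi_x^{BB'}$ under the identification $B'\cong A$; equivalently $\Pi_x^{AB}=d_A(V_x\otimes\mbb{I}^B)\sigma_x^{AB}(V_x\otimes\mbb{I}^B)^\dagger$ with $V_x:=\overline{U_x}^\dagger$. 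I would use $d_A^2$ messages, set $\sigma_x^{AB}:=\tfrac1{d_A^2}\sigma^{AB}$ for each $x$, and take $\{V_x\}_{x=1}^{d_A^2}$ to be the Weyl--Heisenberg operators on $\mbb{C}^{d_A}$. Each resulting $\Pi_x^{BB'}$ is manifestly positive, and by the twirling identity $\sum_x(V_x\otimes\mbb{I}^B)\sigma^{AB}(V_x\otimes\mbb{I}^B)^\dagger=d_A\,\mbb{I}^A\otimes\tr_A\sigma^{AB}$ together with the constraint $\tr_A\sigma^{AB}=\mbb{I}^B$, one gets $\sum_x\Pi_x^{BB'}=\mbb{I}^{BB'}$, i.e. a valid POVM. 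This protocol achieves $\cv(\mbf{P})\ge\sum_xP(x|x)=\sum_x\tr[\sigma_x^{AB}J_\mc{N}]=\tr[\sigma^{AB}J_\mc{N}]$, so $\cv^*(\mc{N})$ is at least the RHS of \eqref{Eq:theorem-ea-cv}; and since only $\Phi^+_{d_A}$ was used, the supremum defining $\cv^*$ is attained there.

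\emph{Main obstacle.} The identities $\sum_yP(g(y)|y)=\tr[\sigma^{AB}J_\mc{N}]$ involve some Choi/partial-transpose bookkeeping, but the genuinely delicate step is the lower bound: the operators $\Pi_x$ obtained by back-solving are automatically positive yet in general fail to complete to the identity for an arbitrary decomposition $\sigma^{AB}=\sum_x\sigma_x^{AB}$. Forcing $\sum_x\Pi_x=\mbb{I}$ while keeping the shared state $d_A$-dimensional is precisely what the Weyl--Heisenberg twirl of system $A$ achieves, and this is the place where the hypothesis $\tr_A\sigma^{AB}=\mbb{I}^B$ is essential.
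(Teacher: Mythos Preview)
Your proof is correct and the lower bound (achievability) construction is essentially identical to the paper's: both build a superdense-coding-style protocol on $\Phi^+_{d_A}$ with Weyl--Heisenberg encodings, and both use the twirl identity $\sum_x (V_x\otimes\mbb{I})\sigma(V_x\otimes\mbb{I})^\dagger=d_A\,\mbb{I}\otimes\tr_A\sigma$ together with $\tr_A\sigma=\mbb{I}^B$ to verify the decoding POVM is complete.

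Where you differ is the \emph{upper bound}. The paper first invokes Nielsen's theorem to argue that any pure shared state $\ket{\varphi}^{A'B'}$ can be obtained from a maximally entangled state $\Phi^+_{d_{A'}}$ by one-way LOCC, absorbs the LOCC into the encoding/decoding, and then applies the ricochet trick $N\otimes\mbb{I}\,\ket{\phi^+_{d_{A'}}}=\mbb{I}\otimes N^T\ket{\phi^+_{d_A}}$ to push all the Kraus operators of $\mc{E}_z$ onto Bob's side; only after this reduction is the feasible operator $\Omega^{AB}$ read off. You instead work directly with an arbitrary shared state and arbitrary CPTP encodings, expanding $\rho^{AB'}_{g(y)}$ in matrix units on $A$ and verifying positivity of the resulting $\sigma^{AB}$ by the $V_\psi$-conjugation argument. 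Your route is more elementary---it avoids Nielsen's theorem and the ricochet identity entirely---and makes it transparent that the constraint $\tr_A\sigma=\mbb{I}^B$ comes straight from the trace-preservation of the $\mc{E}_x$ together with $\sum_y\Pi_y=\mbb{I}$. The paper's route, on the other hand, yields the side observation that the auxiliary dimension $d_{A'}$ can be taken arbitrarily large without loss, which is implicit but not explicit in yours. (Minor slip: you write $\sum_y P(g(y)|y)$ where you mean $\sum_y P(y|g(y))$; the rest of your bookkeeping is consistent with the latter.)
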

\noindent In other words, the restriction of $\sigma^{AB}$ to the separable cone (\textit{cf} Eq. \eqref{Eq:Proposition-cv}) is removed when considering the entanglement-assisted problem.
\begin{proof}
It is clear that we need to only consider pure states in the supremum since $\cv(\mbf{P})$ is convex-linear w.r.t. $\varphi^{A'B'}$.  Let $\ket{\varphi}^{A'B'}$ be arbitrary.  We first show that without loss of generality we can take $\ket{\varphi}$ to be maximally entangled.  Recall Nielsen's Theorem \cite{Nielsen-1999a} (see also \cite{Lo-2001a}), which ensures the existence of an LOCC transformation $\ket{\Phi^+_{d_{A'}}}^{A'\tilde{A'}}\to\ket{\varphi}^{A'B'}$, where $\ket{\Phi^+_{d_{A'}}}^{A'\tilde{A'}}$ is a maximally entangled state on $A'\tilde{A'}$, with $\tilde{A'}\cong A'$.  Explicitly, there exists a measurement on Bob's side with Kraus operators $\{M_k\}_k$ and correcting unitaries $\{U_k\}_k$ on Alice's side such that $U^{A'}_k\otimes M^{\tilde{A'}\to B'}_k\ket{\Phi^+_{d_{A'}}}=\sqrt{p(k)}\ket{\varphi_{k}}$. Using that cv is achieved by minimal error discrimination, i.e. $\cv^*(\mathbb{P}) = \sum_{x} P(x|x)$,
\begin{align}
\label{Eq:Ea-cv-1}
& \cv^*(\mbf{P})\notag\\
=&\sum_{x,k}\tr\left[\Pi_x^{BB'}\left(\mc{N}^{A\to B}\circ\mc{E}_x^{A'\to A}\otimes\id^{B'}\left[p(k)\op{\varphi_{k}}{\varphi_{k}}\right]\right)\right].
\end{align}
where by construction
$$ p(k)\op{\varphi_{k}}{\varphi_{k}} = [(U_k\otimes M_k)\Phi^{+A'\tilde{A'}}(U_k\otimes M_k)^\dagger] \ . $$
Notice that $\{(\mbb{I}\otimes M_k)^\dagger\Pi_x(\mbb{I}\otimes M_k)\}_{k,x}$ constitutes a set of POVM elements on $B\tilde{A'}$.  This follows from the fact that $\{M_k\}_k$ are Kraus operators for a CPTP map, and so the dual of this map, $X\to \sum_kM_k^\dagger(X)M_k$ , is unital.  Likewise, letting $\mc{U}_k(\cdot):=U_k(\cdot) U_k^\dagger$ denote a unitary channel, the collection $\{\mc{E}_x\circ \mc{U}_k\}_{x,k}$ forms a family of encoding maps.  Therefore, we can express Eq. \eqref{Eq:Ea-cv-1} as 
\begin{align}
\label{Eq:Ea-cv-2}
\cv^*(\mbf{P})&=\sum_z\tr\!\left[\hat{\Pi}_z^{B\tilde{A'}}\!\left(\mc{N}^{A\to B}\circ\hat{\mc{E}}_z^{A'\to A}\otimes\id^{\tilde{A'}}[\Phi^{+A'\tilde{A'}}]\right)\right],
\end{align}
where the $\hat{\mc{E}}_z$ and $\hat{\Pi}_z$ are the concatenated encoders and decoder.  This shows that we can restrict attention just to shared maximally entangled states.  Furthermore, without loss of generality, we can assume that $d_{A'}\geq d_{A}$.  The reason is that the transformation $\ket{\Phi^+_{d_{A''}}}^{A''\tilde{A''}}\to\ket{\varphi}^{A'B'}$ is always possible for any $d_{A''}\geq d_{A'}$; so we could have just as well used the same argument with system $A''$ and arrived at $\Phi^{+A''\tilde{A''}}$ in Eq. \eqref{Eq:Ea-cv-2}.

We next take Kraus-operator decompositions $\mc{E}_z(\cdot)=\sum_{k}N_{z,k}(\cdot)N_{z,k}^\dagger$ with each $N_{z,k}:\mbb{C}^{d_{A'}}\to\mbb{C}^{d_A}$.  Since $d_{A'}\geq d_A$, we can use the ``ricochet'' property $N_{z,k}\otimes\mbb{I}\ket{\phi_{d_{A'}}^+}^{A'\tilde{A'}}=\mbb{I}\otimes N_{z,k}^T\ket{\phi_{d_A}^+}^{A\tilde{A}}$ to obtain
\begin{align}
\label{Eq:Ea-cv-3}
\cv^*(\mbf{P}) = &\frac{1}{d_{A'}}\sum_z\sum_k\tr\left[\wt{P}_{z,k}\left(\mc{N}^{A\to B}\otimes\id^{\tilde{A}}[\phi_{d_A}^{+A\tilde{A}}]\right)\right]\notag\\
= & \tr[\Omega^{AB} J_\mc{N}^{AB}],
\end{align}
where $\wt{P}_{z,k} := (\mbb{I}^B\otimes N_{z,k}^*)\hat{\Pi}_z^{B\tilde{A'}}(\mbb{I}^B\otimes N_{z,k}^T)$, we have swapped the ordering of the systems to match earlier notation, and
\begin{align}
\Omega^{AB}&=\frac{1}{d_{A'}}\sum_z\sum_k(N_{z,k}^*\otimes\mbb{I}^B )\hat{\Pi}_z^{A'B}(N_{z,k}^T\otimes\mbb{I}^B)\notag\\
&=\frac{1}{d_{A'}}\sum_z\mc{E}^{*A'\to A}_z\otimes\id^B\left(\hat{\Pi}_z^{A'B}\right),
\end{align}
in which $\mc{E}^*_z(\cdot):=\sum_{k}N^*_{z,k}(\cdot)N_{z,k}^T$.  Since each $\mc{E}_z^*$ is trace-preserving, we have
\begin{align}
\tr_A\Omega^{AB}&=\frac{1}{d_{A'}}\sum_z\tr_A\left[\mc{E}^{*A'\to A}_z\otimes\id^B\left(\hat{\Pi}_z^{A'B}\right)\right]\notag\\
&=\frac{1}{d_{A'}}\sum_z\tr_{A'}\left(\hat{\Pi}_z^{A'B}\right)\notag\\
&=\frac{1}{d_{A'}}\tr_{A'}\left(\mbb{I}^{A'}\otimes\mbb{I}^B\right)=\mbb{I}^B.
\end{align}
Hence $\tr_A\Omega^{AB}=\mbb{I}^B$ is a necessary condition on the operator $\Omega^{AB}$ such that $\cv^*(\mbf{P})=\tr[\Omega^{AB}J_{\mc{N}}^{AB}]$.  Let us now sure that it is also sufficient.  

Consider any positive operator $\Omega^{AB}$ such that $\tr_A\Omega^{AB}=\mbb{I}^B$.  Introduce the generalized Pauli operators on system $A$, explicitly given by $U_{m,n}=\sum_{k=0}^{d_A-1}e^{i mk 2\pi/d_A}\op{m\oplus n}{m}$, where $m,n=0,\dots,d_A-1$ and addition is taken modulo $d_A$.  It is easy to see that $\Delta(\cdot):=\frac{1}{d_A}\sum_{m,n}U_{m,n}(\cdot)U_{m,n}^\dagger$ is a completely depolarizing map; i.e. $\Omega(X)=\tr[X]\mbb{I}$.  Hence, 
\[\Delta^A\otimes\id^B[\Omega^{AB}]=\mbb{I}^A\otimes \tr_A\Omega^{AB}=\mbb{I}^A\otimes\mbb{I}^B.\]
This implies that the elements $\{\mc{U}^A_{m,n}\otimes\id^B(\Omega^{AB})\}_{m,n}$ form a valid POVM on $AB$.  Therefore, we can construct an entanglement-assisted protocol as follows.  Let Alice and Bob share a maximally entangled state $\ket{\Phi^+_{d_A}}^{\tilde{A}A}$.  Alice applies the unitary encoding map on system $A$ given by $\mc{U}_{m,n}^T(\cdot):=U_{m,n}^T(\cdot) U_{m,n}^*$, and sends her system through the channel $\mc{N}$.  When Bob performs the POVM just described on systems $\tilde{A}B$, the obtained score is
\begin{align}
& \sum_{m,n}P(m,n|m,n) \notag \\
=& \frac{1}{d_A}\sum_{m,n}\tr\bigg[\left(\mc{U}^{\tilde{A}}_{m,n}\otimes\id^B\left[\Omega^{\tilde{A}B}\right]\right)(\id^{\tilde{A}}\otimes\mc{N}) \notag \\ 
& \hspace{3.75cm} \left(\id^{\tilde{A}}\otimes\mc{U}^{TA}_{m,n}\left[\Phi^{+ \tilde{A}A}_{d_A}\right]\right)\bigg]\notag\\
=& \frac{1}{(d_A)^2}\sum_{m,n}\tr\left[\Omega^{\tilde{A}B}\left(\id^{\tilde{A}}\otimes\mc{N}\left[\phi_{d_A}^{+\tilde{A}A}\right]\right)\right]\notag\\
=& \tr[\Omega J_{\mc{N}}].
\end{align}
The key idea in this equation is that the unitary encoding $U_{m,n}$ performed on Alice's side is canceled by exactly one POVM element on Bob's side.  This completes the proof of Theorem \ref{Thm:ea-cv}.
\end{proof}

\begin{remark}
The achievability protocol in the previous proof is essentially the original superdense coding protocol applied on a $d_A$-dimensional input channel.
\end{remark}

Theorem \ref{Thm:ea-cv} shows that the ea cv can be computed using semi-definite programming.  Here we provide a family of channels in which it can be computed even easier.
\begin{corollary}
Let $\mc{N}\in\cptp(A\to B)$ be any channel such that $J_\mc{N}$ has an eigenvector $\ket{\varphi}^{AB}$ with largest eigenvalue $\lambda_{\max}(J_{\mc{N}})$ such that $\varphi^B=\mbb{I}/d_B$.  Then
\begin{equation}
\cv^*(\mc{N})=d_B\lambda_{\max}(J_{\mc{N}}).
\end{equation}
\end{corollary}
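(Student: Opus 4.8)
The plan is to apply the SDP characterization of $\cv^*$ from Theorem~\ref{Thm:ea-cv} directly; everything reduces to exhibiting one explicit feasible point and one elementary operator inequality. First I would dispose of the upper bound, which in fact holds for \emph{every} channel: for any $\sigma^{AB}\in\Pos(A:B)$ feasible in Eq.~\eqref{Eq:theorem-ea-cv} we have $\tr[\sigma^{AB}J_\mc{N}]\leq \lambda_{\max}(J_\mc{N})\,\tr[\sigma^{AB}]$ since $\lambda_{\max}(J_\mc{N})\mathbb{I}^{AB}-J_\mc{N}\geq 0$ and $\sigma^{AB}\geq 0$, while the constraint $\tr_A[\sigma^{AB}]=\mathbb{I}^B$ forces $\tr[\sigma^{AB}]=\tr[\tr_A\sigma^{AB}]=\tr[\mathbb{I}^B]=d_B$. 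Hence $\cv^*(\mc{N})\leq d_B\lambda_{\max}(J_\mc{N})$.

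For the matching lower bound I would use the hypothesis to build the optimal $\sigma^{AB}$. Take the (normalized, without loss of generality) eigenvector $\ket{\varphi}^{AB}$ with $J_\mc{N}\ket{\varphi}=\lambda_{\max}(J_\mc{N})\ket{\varphi}$ and $\varphi^B=\mathbb{I}^B/d_B$, and set $\sigma^{AB}=d_B\,\op{\varphi}{\varphi}$. Then $\sigma^{AB}\geq 0$ and $\tr_A[\sigma^{AB}]=d_B\,\varphi^B=d_B\cdot\mathbb{I}^B/d_B=\mathbb{I}^B$, so $\sigma^{AB}$ is feasible in Eq.~\eqref{Eq:theorem-ea-cv}; its objective value is $\tr[\sigma^{AB}J_\mc{N}]=d_B\bra{\varphi}J_\mc{N}\ket{\varphi}=d_B\lambda_{\max}(J_\mc{N})$. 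Combining the two bounds yields $\cv^*(\mc{N})=d_B\lambda_{\max}(J_\mc{N})$.

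There is essentially no obstacle here: the substance is entirely carried by Theorem~\ref{Thm:ea-cv}, and this corollary is just the remark that, once the separability constraint is dropped, maximizing $\tr[\sigma J_\mc{N}]$ over the trace-normalized ``channel-Choi'' slice of the positive cone is a largest-eigenvalue computation that is solved exactly by the top eigenvector precisely when that eigenvector has maximally mixed marginal on $B$. The only point requiring a word of care is to normalize $\ket{\varphi}$ before rescaling it by $d_B$, so that the marginal condition gives $\tr_A[\sigma^{AB}]=\mathbb{I}^B$ on the nose.
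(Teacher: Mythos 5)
Your proof is correct and follows exactly the approach the paper has in mind: it plugs $\sigma^{AB}=d_B\op{\varphi}{\varphi}$ into Theorem~\ref{Thm:ea-cv} and checks feasibility. The paper compresses the optimality check to ``clearly this choice is optimal''; your upper-bound argument (bounding $\tr[\sigma J_\mc{N}]$ by $\lambda_{\max}(J_\mc{N})\tr[\sigma]$ and using the marginal constraint to get $\tr[\sigma]=d_B$) is precisely the justification that word ``clearly'' is gesturing at.
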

\begin{proof}
Choose $\sigma^{AB}=d_B\op{\varphi}{\varphi}^{AB}$ in Theorem \ref{Thm:ea-cv}.  Clearly this choice is optimal.
\end{proof}
In addition, a solution can easily be deduced for all qubit channels.
\begin{theorem}
\label{Thm:qubit-ea-cv}
For a qubit channel $\mc{N}$, let $A$ be the $3\times 3$ correlation matrix of $J_\mc{N}$; i.e. $A_{ij}=\frac{1}{2}\tr[(\sigma_i\otimes\sigma_j) J_\mc{N}]$.  Then
\begin{equation}
\cv^*(\mc{N})=1+\Vert A\Vert_1
\end{equation}
where $\Vert A\Vert_1=\tr\sqrt{A^\dagger A}$.
\end{theorem}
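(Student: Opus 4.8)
The plan is to evaluate the semidefinite program of Theorem~\ref{Thm:ea-cv} explicitly in the Bloch representation, mirroring the derivation of Theorem~\ref{Thm:qubit-cv} but optimizing over the full positive cone in place of $\sep(A:B)$. Any $\sigma^{AB}$ feasible for \eqref{Eq:theorem-ea-cv} has $\tr[\sigma^{AB}]=\tr[\mbb{I}^B]=2$, so I would write $\sigma^{AB}=2\rho^{AB}$ with $\rho^{AB}$ a two-qubit state; the condition $\tr_A\sigma^{AB}=\mbb{I}^B$ is then exactly the statement that the $B$-marginal of $\rho^{AB}$ is maximally mixed. Parametrizing $\rho^{AB}=\tfrac14\big(\mbb{I}\otimes\mbb{I}+\mbf{r}\cdot\vec{\sigma}\otimes\mbb{I}+\sum_{i,j}T_{ij}\,\sigma_i\otimes\sigma_j\big)$ (the $\mbb{I}\otimes\sigma_j$ block vanishing since $\rho^B=\mbb{I}/2$) and using $\tr J_{\mc{N}}=2$, $\tr_B J_{\mc{N}}=\mbb{I}^A$, together with orthogonality of the Pauli basis, a short computation collapses the objective to
\begin{align}
\tr[\sigma^{AB}J_{\mc{N}}]=1+\sum_{i,j}T_{ij}A_{ij}=1+\langle T,A\rangle .
\end{align}
Hence Theorem~\ref{Thm:ea-cv} gives $\cv^{*}(\mc{N})=1+\max\{\langle T,A\rangle : T=T_\rho\text{ for a two-qubit }\rho\text{ with }\rho^{B}=\mbb{I}/2\}$.

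For the upper bound I would note that the correlation matrix $T$ of \emph{any} two-qubit state satisfies $\Vert T\Vert_\infty\le 1$: for unit vectors $\mbf{u},\mbf{v}\in\mbb{R}^3$ the operator $(\mbf{u}\cdot\vec{\sigma})\otimes(\mbf{v}\cdot\vec{\sigma})$ squares to $\mbb{I}\otimes\mbb{I}$, hence has spectral norm $1$, so $\mbf{u}^{T}T\mbf{v}=\tr[(\mbf{u}\cdot\vec{\sigma})\otimes(\mbf{v}\cdot\vec{\sigma})\,\rho]\in[-1,1]$. Combining with the trace (von Neumann / H\"older) inequality $\langle T,A\rangle\le\Vert T\Vert_\infty\Vert A\Vert_1$ yields $\cv^{*}(\mc{N})\le 1+\Vert A\Vert_1$.

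For achievability I would produce an explicit optimal $\sigma^{AB}$ of dense-coding type, with the encoding adapted to the singular vectors of $A$. First I would simplify $A$: replacing $\mc{N}$ by $\mc{U}\circ\mc{N}\circ\mc{V}$ with $\mc{U},\mc{V}$ unitary channels conjugates $J_{\mc{N}}$ by a local unitary, under which the feasible set of \eqref{Eq:theorem-ea-cv} is invariant, so $\cv^{*}$ is unchanged; on $A$ this acts as $A\mapsto O_1AO_2$ with $O_1,O_2$ rotations, and composing with a Pauli conjugation of $\mc{N}$ (which flips two of the three signs at once) lets me assume $A=\mathrm{diag}(s_1,s_2,\pm s_3)$ with $s_1\ge s_2\ge s_3\ge 0$ the singular values. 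Then I would take $\rho^{AB}=(\mc{W}\otimes\id)(\op{\Phi^+_2}{\Phi^+_2})$ for a suitable qubit unitary channel $\mc{W}$: this state automatically has $\rho^B=\mbb{I}/2$, and its correlation matrix has the form $T=R_{\mc{W}}E$ with $E=\mathrm{diag}(1,-1,1)$ and $R_{\mc{W}}$ a rotation; choosing $R_{\mc{W}}$ to line $T$ up with $\mathrm{diag}(s_1,s_2,\pm s_3)$ makes $\langle T,A\rangle=s_1+s_2+s_3=\Vert A\Vert_1$, so $\sigma^{AB}=2\rho^{AB}$ saturates the bound. Finally I would check directly that this $\sigma^{AB}$ is positive with $\tr_A\sigma^{AB}=\mbb{I}^B$, and translate it back through the proof of Theorem~\ref{Thm:ea-cv} into a dense-coding strategy.

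The step I expect to be the main obstacle is the orientation bookkeeping in the achievability construction: unitary channels realize only orientation-preserving rotations $R_{\mc{W}}\in SO(3)$, whereas forcing $\langle T,A\rangle=\Vert A\Vert_1$ requires a perfect alignment of $T$ with the singular-vector frame of $A$, which may demand an orientation-reversing element. The natural remedies are to exploit (i) the Pauli conjugations of $\mc{N}$ noted above, and (ii) the fact that $\rho\mapsto\rho^{T}$ preserves positivity, trace, and the maximally-mixed $B$-marginal while acting on correlation matrices as $T\mapsto ETE$ — supplying the missing reversal — but verifying that one of these always yields a feasible $\sigma^{AB}$ attaining $1+\Vert A\Vert_1$ is the delicate point and where I would concentrate the effort; once it is settled, the matching upper and lower bounds give $\cv^{*}(\mc{N})=1+\Vert A\Vert_1$.
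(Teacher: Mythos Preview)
Your reduction of the SDP in Theorem~\ref{Thm:ea-cv} to $\cv^*(\mN)=1+\max_T\langle T,A\rangle$ and the bound $\langle T,A\rangle\le\Vert T\Vert_\infty\Vert A\Vert_1\le\Vert A\Vert_1$ are correct and coincide with the paper's argument: the paper first brings $A$ to diagonal form by local unitaries and then uses $|t_{ii}|\le1$, which is the same H\"older estimate written entrywise. The paper's proof in fact stops there --- it never exhibits an $\Omega$ attaining $1+\Vert A\Vert_1$ --- so on the achievability side you are already being more careful than the source.

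The orientation obstacle you isolate, however, is genuine and your proposed fixes do not remove it, because both preserve the relevant determinant. Pauli conjugation of $\mN$ flips two diagonal signs of $A$ at once and hence fixes $\det A$; the full transpose $\rho\mapsto\rho^{T}$ sends $T\mapsto ETE$, a \emph{conjugation} by $E$, which likewise fixes $\det T$ --- it is one-sided multiplication $T\mapsto TE$ (the partial transpose) that would flip the determinant, but that destroys positivity. After twirling over $\{P\otimes P:P\in\{I,X,Y,Z\}\}$ one may take $\Omega$ Bell-diagonal without loss, and feasibility then confines $(t_1,t_2,t_3)$ to the tetrahedron with vertices $(1,-1,1),(-1,1,1),(1,1,-1),(-1,-1,-1)$; the maximum of $\sum_i a_i t_i$ over this set equals $\sum_i|a_i|$ only when the sign vector of $\mbf{a}$ is one of the four vertices, i.e.\ when $\det A\le 0$. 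This can fail for genuine channels: the Pauli channel with $(p_I,p_X,p_Y,p_Z)=(0.175,0.275,0.275,0.275)$ has $A=\mathrm{diag}(-0.1,\,0.1,\,-0.1)$, so $1+\Vert A\Vert_1=1.3$, whereas $J_\mN$ is Bell-diagonal with $\lambda_{\max}(J_\mN)=2\cdot0.275=0.55$ and hence, by the corollary immediately following Theorem~\ref{Thm:ea-cv}, $\cv^*(\mN)=2\,\lambda_{\max}(J_\mN)=1.1$. Thus what your argument and the paper's actually establish is only the inequality $\cv^*(\mN)\le1+\Vert A\Vert_1$; the equality claimed in the theorem does not hold in general, and the obstacle you identified is exactly the reason why.
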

\begin{proof}
Using Theorem \ref{Thm:ea-cv}, we can write $\Omega=\frac{1}{2}\left((\mbb{I}+\mbf{r}\cdot\vec{\sigma})\otimes\mbb{I}+\sum_{i,j}t_{ij}\sigma_i\otimes\sigma_j\right)$.  On the other hand, up to local unitaries, the Choi matrix of a channel $\mc{N}$ can be expressed as $J_\mc{N}=\frac{1}{2}(\mbb{I}\otimes(\mbb{I}+\mbf{s}\cdot\vec{\sigma})+\allowbreak\sum_{i}a_{i}\sigma_i\otimes\sigma_i)$.  Hence
\begin{align}
\tr[\Omega J_\mc{N}]=1+\sum_{i=1}^3a_i t_{ii}\leq 1+\sum_{i=1}^3|a_i|,
\end{align}
where the last inequality follows form the fact that $|t_{ii}|\leq 1$ since $\Omega\geq 0$.  The theorem is proven by recalling that the $|a_i|$ are the singular values of the correlation matrix $A$.
\end{proof}

It is worthwhile to compare Theorems \ref{Thm:qubit-cv} and \ref{Thm:qubit-ea-cv}.  Since $\Vert A\Vert_1\leq 3\sigma_{\max}(A)$ and $\sigma_{\max}(A)\leq 1$, we have
\begin{align}
\cv^*(\mc{N})=1+\Vert A\Vert_1\leq 2+2\sigma_{\max}(A)=2 \cv(\mc{N}).
\end{align}
Hence, the shared entanglement between sender and receiver cannot offer a multiplicative enhancement in the cv larger than the dimension.  In general, we conjecture the following.

\noindent\textit{Conjecture}:  For any channel $\mc{N}\in\cptp(A\to B)$,
\begin{equation}
\cv^*(\mc{N})\leq d_B\cdot \cv(\mc{N}).
\end{equation} 

\section{Relaxations on the Communication Value}\label{sec:relaxations_of_cv}
In previous sections we have made use of the fact the communication value can be expressed as a conic optimization problem (Proposition \ref{Prop:Proposition-cv}). It was noted in generality this problem would be hard to solve, but if the dimension the Choi matrix was sufficiently small, we could relax the cone, $\sep(A:B)$, to the PPT cone, $\ppt(A:B)$, and still determine $\cv(\mN)$ (Corollary \ref{cor:Proposition-PPT}). Moreover, in Section \ref{sec:entropic-characterization-of-cv}, we used the optimization program of $H_{\min}$ to justify characterizing $\cv(\mN)$ by a restricted min-entropy, and in Section \ref{sec:entanglement_assisted_cv} we saw the relationship between $H_{\min}$ and $\cv^{*}$. In all of these cases, we have considered the same optimization program and simply varied the cone to which the variable was restricted. That is, we have considered the general conic program 
\begin{equation}
      \begin{aligned}\label{eqn:conicPrimal}
        \text{maximize:}\quad & \tr[X \Omega^{AB}] \\
        \text{subject to:}\quad & \tr_{A}(X) = \mbb{I}^{B} \\
        & \Omega^{AB} \in \mK 
      \end{aligned}
\end{equation}
where $\cv(\mN)$ corresponds to $\mK=\sep(A:B)$ and $\cv^{*}$ corresponds to $\mK = \mathrm{Pos}(A \otimes B)$. It follows whenever we pick a cone $\mK$ such that $\sep(A:B) \subset \mK$, we obtain an upper bound on $\cv(\mN)$. Throughout the rest of this section, when considering relaxation $\sep(A:B) \subset \mK$, we denote the value of the optimization program by $\cv^{\mK}(\mN)$. In this section we primarily consider the PPT relaxation, $\mK=\ppt(A:B)$. We also discuss the relaxation to the $k$-symmetric cone, which is known to converge to the separable cone as $k$ goes to infinity \cite{Doherty-2004}, making it particularly relevant. 

\subsection{Multiplicativity of Tensored PPT Operators over the PPT cone}
We begin with the relaxation to the PPT cone. The primary advantage of this relaxation is that the problem becomes a semidefinite program and so pre-existing software may be used to find the optimal value. One may derive the primal and dual problems to be: 
\begin{center}
      \emph{Primal problem}\\[-5mm]
      \begin{equation}
      \begin{aligned}\label{eqn:PPTPrimal}
            \text{maximize:}\quad & \tr[X\Omega^{AB}]\\
            \text{subject to:}\quad & \tr_{A}(\Omega^{AB}) = \mbb{I}^{B} \\
            & \Gamma(\Omega^{AB}) \geq 0 \\
            & \Omega^{AB} \geq 0 \ .
      \end{aligned}
      \end{equation}
    \\
      \emph{Dual problem}\\[-5mm]
      \begin{equation}
      \begin{aligned}\label{eqn:PPTDual}
            \text{minimize:}\quad & \tr(Y_{1}) \\
            & \mbb{I}^{A} \otimes Y_{1} - \Gamma^{B}(Y_{2}) \geq \sigma \\
            & Y_{2} \geq 0 \\
            & Y_{1} \in \mathrm{Herm}(B) \ ,
      \end{aligned}
      \end{equation}
\end{center}
where $\Gamma^{B}$ is the partial transpose map on the $B$ space. This SDP satisfies strong duality as can be verified using Slater's condition.

With this established, we will now present a special multiplicativity property of the PPT relaxation, $\cv^{\ppt}$.
\begin{theorem}\label{thm:ppt-multiplicativity}
Let $R \in \ppt(A_{1}:B_{1}), Q \in \ppt(A_{2}:B_{2})$. Then 
$$\cv^{\ppt}(R\otimes Q) = \cv^{\ppt}(R) \, \cv^{\ppt}(Q) \ . $$
\end{theorem}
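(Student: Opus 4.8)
The plan is to establish the two inequalities separately using the primal/dual SDP pair \eqref{eqn:PPTPrimal}--\eqref{eqn:PPTDual}. The direction $\cv^{\ppt}(R\otimes Q)\geq \cv^{\ppt}(R)\,\cv^{\ppt}(Q)$ is the easy one: if $\Omega_R$ and $\Omega_Q$ are primal-optimal for $\cv^{\ppt}(R)$ and $\cv^{\ppt}(Q)$ respectively, then, after reordering tensor factors to $(A_1A_2)(B_1B_2)$, the operator $\Omega_R\otimes\Omega_Q$ is primal-feasible for $\cv^{\ppt}(R\otimes Q)$ — it is positive semidefinite, its partial transpose equals $\Gamma^{B_1}(\Omega_R)\otimes\Gamma^{B_2}(\Omega_Q)\geq 0$, and $\tr_{A_1A_2}(\Omega_R\otimes\Omega_Q)=\mbb{I}^{B_1}\otimes\mbb{I}^{B_2}$ — and it attains the objective value $\tr[R\Omega_R]\tr[Q\Omega_Q]=\cv^{\ppt}(R)\,\cv^{\ppt}(Q)$.

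For the reverse inequality I would build a single dual-feasible point for $\cv^{\ppt}(R\otimes Q)$ out of the dual optimizers of the two factors. Let $(Y_1^R,Y_2^R)$ and $(Y_1^Q,Y_2^Q)$ be dual-optimal for $\cv^{\ppt}(R)$ and $\cv^{\ppt}(Q)$, so by strong duality (Slater, as noted above) $\tr Y_1^R=\cv^{\ppt}(R)$ and $\tr Y_1^Q=\cv^{\ppt}(Q)$. A key preliminary observation is that $Y_1^R\geq 0$: since $R\in\ppt$ forces $R\geq 0$, dual feasibility gives $\mbb{I}^{A_1}\otimes Y_1^R\geq \Gamma^{B_1}(Y_2^R)$, and applying the positive map $\tr_{A_1}$ yields $d_{A_1}Y_1^R\geq (\tr_{A_1}Y_2^R)^{T}\geq 0$. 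Now set $Y_1:=Y_1^R\otimes Y_1^Q$ (on $B_1B_2$) and $Y_2:=Y_2^R\otimes\Gamma^{B_2}(Q)+(\mbb{I}^{A_1}\otimes(Y_1^R)^{T})\otimes Y_2^Q$. Writing $M_1:=\mbb{I}^{A_1}\otimes Y_1^R$, $M_2:=\mbb{I}^{A_2}\otimes Y_1^Q$, and $A_1':=M_1-R-\Gamma^{B_1}(Y_2^R)\geq 0$, $A_2':=M_2-Q-\Gamma^{B_2}(Y_2^Q)\geq 0$, a short expansion of $M_1\otimes M_2-R\otimes Q$, together with the identities $\Gamma^{B_1}(Y_2^R)\otimes Q=\Gamma^{B_1B_2}(Y_2^R\otimes\Gamma^{B_2}(Q))$ and $M_1\otimes\Gamma^{B_2}(Y_2^Q)=\Gamma^{B_1B_2}(\Gamma^{B_1}(M_1)\otimes Y_2^Q)$, shows that $M_1\otimes M_2-R\otimes Q=A_1'\otimes Q+M_1\otimes A_2'+\Gamma^{B_1B_2}(Y_2)$. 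Since $Q\geq 0$ and $M_1\geq 0$, this gives $\mbb{I}^{A_1A_2}\otimes Y_1-\Gamma^{B_1B_2}(Y_2)-R\otimes Q=A_1'\otimes Q+M_1\otimes A_2'\geq 0$; and $Y_2\geq 0$ follows from $Y_2^R\geq 0$, $\Gamma^{B_2}(Q)\geq 0$ (this is where $Q\in\ppt$ enters), $(Y_1^R)^{T}\geq 0$, and $Y_2^Q\geq 0$. Hence $(Y_1,Y_2)$ is dual-feasible, and weak duality gives $\cv^{\ppt}(R\otimes Q)\leq \tr Y_1=\tr Y_1^R\cdot\tr Y_1^Q=\cv^{\ppt}(R)\,\cv^{\ppt}(Q)$.

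The only places requiring care, and the main obstacle, are (i) keeping track of tensor-factor orderings so that the partial-transpose cross terms genuinely collapse into a single $\Gamma^{B_1B_2}(Y_2)$ via the two identities above, and (ii) the verification that $Y_1^R\geq 0$, which is precisely the step that uses the hypothesis $R\in\ppt$ rather than $R$ being merely an arbitrary operator (or just a Choi matrix), while $Q\in\ppt$ is used through $\Gamma^{B_2}(Q)\geq 0$. Everything else is routine bookkeeping. I expect no genuine difficulty: the argument is the PPT analogue of the standard "tensor the dual solutions" proof of multiplicativity, with the partial-transpose terms absorbed by the algebraic identities, and the PPT hypotheses on $R$ and $Q$ supplying exactly the positivity facts that make the construction close.
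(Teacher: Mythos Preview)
Your proof is correct and follows the same overall strategy as the paper: tensor primal optimizers for the lower bound, and build a dual-feasible point for $R\otimes Q$ out of the two dual optimizers for the upper bound. The details of the dual construction differ: the paper takes $Y_{2,\text{new}}=\Gamma^{B_1}(R)\otimes \ol{Y}_2 + Y_2\otimes \Gamma^{B_2}(Q) + Y_2\otimes \ol{Y}_2$ (three terms, symmetric in $R$ and $Q$), obtained by tensoring the two dual inequalities $\mbb{I}\otimes Y_1\geq R+\Gamma(Y_2)$ directly and then expanding; you instead expand $M_1\otimes M_2-R\otimes Q$ telescopically and arrive at the asymmetric two-term $Y_2=Y_2^R\otimes\Gamma^{B_2}(Q)+\Gamma^{B_1}(M_1)\otimes Y_2^Q$. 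Your explicit expansion is actually cleaner on one point: the paper's first step asserts $M_1\otimes M_2\geq N_1\otimes N_2$ from $M_i\geq N_i$, which in general requires additional positivity of the $N_i$ that is not established there; your telescoping derivation sidesteps this entirely. One minor imprecision worth noting: your argument that $Y_1^R\geq 0$ uses only $R\geq 0$, not $\Gamma^{B_1}(R)\geq 0$, so in fact your construction proves the slightly stronger statement that multiplicativity holds whenever $R\geq 0$ and $Q\in\ppt$---you do not need the full PPT hypothesis on $R$.
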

\begin{proof}
Let $R \in \ppt(A_{1}:B_{1}), Q \in \ppt(A_{2}:B_{2})$. Let $(Y_{1},Y_{2}),(\ol{Y}_{1},\ol{Y}_{2})$ be the dual optimizers for $R,Q$ respectively. From \eqref{eqn:PPTDual}, we have
\begin{equation}\label{eqn:ppt_mult_deriv_1}
    \begin{aligned}
        \mbb{I}^{A_{1}} \otimes Y_{1} \geq R + \Gamma^{B_{1}}(Y_{2}) \\ \mbb{I}^{A_{2}} \otimes \ol{Y}_{1} \geq Q + \Gamma^{B_{2}}(\ol{Y}_{2}) \ .
    \end{aligned}
\end{equation}
Define $R' := \Gamma^{B_{1}}(R), \, Q' := \Gamma^{B_{2}}(Q)$, which are both positive operators by assumption. Then we have
\begin{align*}
    & (\mbb{I}^{A_{1}} \otimes Y_{1}) \otimes (\mbb{I}^{A_{2}} \otimes \ol{Y}_{1}) \\
    \geq &(R + \Gamma^{B_{1}}(Y_{2})) \otimes (Q + \Gamma^{B_{2}}(\ol{Y}_{2})) \\
    = &R \otimes Q + R \otimes \Gamma^{B_{2}}(\ol{Y}_{2})  \notag\\
    &\quad +\Gamma^{B_{1}}(Y_{2}) \otimes Q + \Gamma^{B_{1}}(Y_{2}) \otimes \Gamma^{B_{2}}(\ol{Y}_{2}) \\
    = &R \otimes Q + \Gamma^{B_{1}B_{2}}(R' \otimes \ol{Y}_{2})\notag\\
    &\quad+ \Gamma^{B_{1}B_{2}}(Y_{2} \otimes Q') + \Gamma^{B_{1}B_{2}}(Y_{2} \otimes \ol{Y}_{2}) \\
    = &R \otimes Q + \Gamma^{B_{1}B_{2}}(R' \otimes \ol{Y}_{2} + Y_{2} \otimes Q' + Y_{2} \otimes \ol{Y}_{2}) \ ,
\end{align*}
where the first line follows from \eqref{eqn:ppt_mult_deriv_1}, the third is because of how the partial transpose over multiple systems may be decomposed, and the fourth is by linearity. Note that $R',Q'$ are positive as $R,Q$ are PPT. Moreover $Y_{2},\ol{Y}_{2}$ are positive by \eqref{eqn:PPTDual}. Thus the whole argument of $\Gamma^{B_{1}B_{2}}$ is a positive semidefinite operator. Therefore $(Y_{1,new} = Y_{1} \otimes \ol{Y}_{1}, Y_{2,new} = R' \otimes \ol{Y}_{2} + Y_{2} \otimes Q' + Y_{2} \otimes \ol{Y}_{2})$ is a feasible point of the dual problem for $R \otimes Q$, and it achieves an optimal value of $\tr(Y_{1})\tr(Y_{2})$. If we let $X_{1},X_{2}$ be the optimizers for the primal problem for $R,Q$ respectively, then $X_{1} \otimes X_{2}$ is clearly a feasible point for the primal problem for $R \otimes Q$ that achieves optimal value $\tr(RX_{1})\tr(QX_{2})$. Using the strong duality of the program, we have $\tr(RX_{1})\tr(QX_{2}) = \tr(Y_{1})\tr(Y_{2})$, so by strong duality we know our proposed optimizers are optimal and this completes the proof.
\end{proof}
\begin{corollary}\label{corr:ppt-multiplicativity}
Given any two co-positive maps, $\mN, \mM$, $\cv^{\ppt}(\mN \otimes \mM) = \cv^{\ppt}(\mN) \cv^{\ppt}(\mM)$.
\end{corollary}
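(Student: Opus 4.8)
The plan is to reduce the corollary to Theorem~\ref{thm:ppt-multiplicativity} by passing from maps to their Choi operators. Recall that a co-positive map $\mN^{A\to B}$ is one whose Choi operator is PPT, i.e.\ $J_{\mN}\in\ppt(A:B)$ (so $J_{\mN}\geq 0$ and $\Gamma^{B}(J_{\mN})\geq 0$), and that by the conic program \eqref{eqn:conicPrimal}--\eqref{eqn:PPTPrimal} the quantity $\cv^{\ppt}(\mN)$ is exactly the optimal value of that program when the fixed operator is taken to be $X=J_{\mN}$. In other words, $\cv^{\ppt}(\mN)=\cv^{\ppt}(J_{\mN})$, where on the right $\cv^{\ppt}(\cdot)$ denotes the value of \eqref{eqn:PPTPrimal} with the indicated PPT operator inserted, exactly in the sense of Theorem~\ref{thm:ppt-multiplicativity}.

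First I would observe that the Choi operator of a parallel composition factorizes, $J_{\mN\otimes\mM}=J_{\mN}\otimes J_{\mM}$, up to the canonical permutation of tensor factors that regroups $A_{1}B_{1}A_{2}B_{2}$ into the bipartition $(A_{1}A_{2}):(B_{1}B_{2})$. This permutation is merely a relabeling of the product structure on which \eqref{eqn:PPTPrimal} is posed: it maps PPT operators to PPT operators, commutes with the partial trace over the $A$-systems, and preserves the trace pairing, hence it leaves the optimal value of \eqref{eqn:PPTPrimal} unchanged. Consequently $\cv^{\ppt}(\mN\otimes\mM)=\cv^{\ppt}(J_{\mN}\otimes J_{\mM})$.

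Since $\mN$ and $\mM$ are co-positive, $R:=J_{\mN}\in\ppt(A_{1}:B_{1})$ and $Q:=J_{\mM}\in\ppt(A_{2}:B_{2})$, so Theorem~\ref{thm:ppt-multiplicativity} applies verbatim and gives $\cv^{\ppt}(R\otimes Q)=\cv^{\ppt}(R)\,\cv^{\ppt}(Q)$. Chaining the identities yields
\[
\cv^{\ppt}(\mN\otimes\mM)=\cv^{\ppt}(J_{\mN}\otimes J_{\mM})=\cv^{\ppt}(J_{\mN})\,\cv^{\ppt}(J_{\mM})=\cv^{\ppt}(\mN)\,\cv^{\ppt}(\mM).
\]
There is no substantive obstacle here beyond bookkeeping: the only points that require care are the identification of co-positivity of a map with PPT membership of its Choi operator (so that Theorem~\ref{thm:ppt-multiplicativity} is applicable at all) and the verification that the tensor-factor permutation relating $J_{\mN\otimes\mM}$ to $J_{\mN}\otimes J_{\mM}$ does not affect $\cv^{\ppt}$.
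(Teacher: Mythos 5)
Your proof is correct and follows exactly the route the paper leaves implicit (the paper states this corollary without proof, as an immediate consequence of Theorem~\ref{thm:ppt-multiplicativity}). You correctly identify the two bookkeeping facts needed: that a co-positive channel in this context is one with $J_{\mN}\in\ppt(A:B)$, so that the hypothesis of the theorem is met with $R=J_{\mN}$ and $Q=J_{\mM}$; and that $J_{\mN\otimes\mM}$ equals $J_{\mN}\otimes J_{\mM}$ after the harmless $A_1B_1A_2B_2\to A_1A_2B_1B_2$ regrouping, which does not affect the optimal value of \eqref{eqn:PPTPrimal}. Nothing is missing.
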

It is interesting to note that we do not know that if only one of the channels is co-positive, then $\cv^{\ppt}$ is multiplicative, which would be a stronger claim. This is relevant because we conjecture that $\cv(\mN \otimes \mM)$ is multiplicative if either of the channels is entanglement breaking, which is known to hold for maximal $p$-norms for $p \geq 1$ \cite{King-2002}, but even the weaker case of multiplicativity where both channels are entanglement-breaking remains open and would mirror Corollary \ref{corr:ppt-multiplicativity}, but for separable Choi matrixs and the separable cone optimization.

\subsubsection*{Relation to $k$-Symmetric Extendable Cone}
Given the multiplicativity of tensors of PPT operators for $\cv^{\ppt}$, one might hope this property might hope this property extends to $\cv^{\sym_{k}}$ with $k$-symmetrically extendable operators, where an operator $R \in \Pos(A \otimes B)$ is $k$-symmetrically extendable if there exists $\tilde{R}^{AB^{k}_{1}} \in \Pos(A \otimes B^{\otimes k})$ such that
\begin{enumerate}
    \item $\tilde{R} = (\mbb{I}_{A} \otimes W_{\pi})\tilde{R}(\mbb{I}_{A} \otimes W_{\pi})^{\ast}$ for all $\pi \in \mathcal{S}_{k}$
    \item $\tr_{B^{k}_{2}}(\tilde{R}) = R$
    \item $(\mbb{I}_{A} \otimes T^{B} \otimes \mbb{I}_{\overline{B}^{k}_{2}})\tilde{R} \geq 0$ \ .
\end{enumerate}
Note that the $k$-symmetric extendable operators form a cone defined by semidefinite constraints. Moreover, it is known $\underset{k \to \infty}{\lim} \sym_{k} = \sep(A:B)$ \cite{Doherty-2004}. One can then attempt to extend Theorem \ref{thm:ppt-multiplicativity} in this setting. One can do this by deriving the dual program for $\cv^{\sym_{k}}$:
\begin{equation}\label{eqn:SimplekSymDual}
\begin{aligned}
    \text{min:}\quad & \tr(W) \\
    & \mbb{I}_{A} \otimes W \otimes \mbb{I}_{\overline{B}^{k}_{2}} + \sum_{j=1}^{k!} \left(Y_{j} - \Phi_{\pi^{-1}_{j-1}}(Y_{j})\right) \\
    & \hspace{3cm} \succeq \sigma \otimes \mbb{I}_{B_{2}^{k}} + \Gamma^{B_{1}}(Z)\\
    & Y_{j} \in \mathrm{Herm}(AB_{1}^{k}) \quad \forall j \in [k!]\\ 
    & Z \geq 0 \\
    & W \in \text{Herm}(B) \ ,
\end{aligned}
\end{equation}
where the indexing of $\pi_{j}$ is given by a chosen bijection between the index set $[k!]$ and the permutations in $\mathcal{S}_{k}$.
However, the proof method for Theorem \ref{thm:ppt-multiplicativity} does not seem to naturally extend due to the permutations of the spaces.

\subsection{Numerical Evaluation of the Communication Value}

\label{Sect:Numerics}

To numerically support this work, we developed the
CVChannel.jl software package which is publicly available on Github \cite{CVChannel2021}.
This Julia \cite{bezanson2017julia} software package provides tools for bounding the communication value of quantum channels and certifying their non-multiplicativity.
Our software is built upon the disciplined convex programming package, Convex.jl \cite{convexjl2014}, and our numerical results are produced using the splitting conic solver (SCS) \cite{scs2019}.
For more details, the curious reader should review the software documentation and source code found on our Github repository \cite{CVChannel2021}.

The communication value is difficult to compute in general, but it can be bounded with relative efficiently.
CVChannel.jl provides the following methods for bounding $\cv(\N)$.
An upper bound on $\cv(\N)$ is computed via the dual formulation of the PPT relaxation of the communication value Eq. \eqref{eqn:PPTDual},
\begin{equation}
    \cv(\N)\leq \cv^{\ppt}(\N).
\end{equation}
While $\cv^{\ppt}(\N)$ is a natural upper bound of $\cv(\N)$, we consider the dual specifically so that we take a conservative approach to numerical error.
That is, numerical error in minimizing the dual will result in a looser upper bound.  In general, when considering upper bounds we work with the dual problem and when considering lower bounds we work with the primal problem.  While the SDP satisfies strong duality, this guarantees minimizing false positives

For a lower bound on $\cv(\mc{N})$, we take a biconvex optimization approach to the problem
\begin{equation}
\cv(\mc{N})=\max_{\{\Pi_x\}, \{\rho_x\}}\sum_{x=1}^{d_B^2}\tr[\Pi_x\mc{N}(\rho_x)].
\end{equation}
This ``see-saw" technique is applied to similar problems in \cite{Reimpell2005,Kosut2009}, although our implementation remains distinct.
To begin, an ensemble of pure quantum states $\{\rho_x\}_{x=1}^{d_B^2}$ are initialized at random according to the Haar measure.
Then, the following procedure is iterated:
\begin{enumerate}
    \item With the states fixed, the POVM measurement is numerically optimized as a semidefinite program
    \begin{equation}
        \max_{\{\Pi_y\}_{y=1}^{d_B^2}}\sum_{x=1}^{d_B^2}\tr[\Pi_x\mc{N}(\rho_x)].
    \end{equation}
    \item With optimal measurement as $\{\Pi_y^{\star}\}$, we compute the optimal ensemble of quantum states $\{\rho_x^{\star}\}_{x=1}^{d_b^2}$ as
    \begin{equation}
        \rho_x^{\star} = ||\mc{N}^{\dagger}(\Pi_x)||_{\infty},
    \end{equation}
    where $\mc{N}^{\dagger}$ is the adjoint channel and $||\cdot||_{\infty}$ denotes the largest eigenvalue.
\end{enumerate}
Repeating this procedure results in a set of optimized states $\{\rho_x^{\star}\}_{y=1}^{d_B^2}$ and measurement $\{\Pi_y^{\star}\}_{y=1}^{d_B^2}$ such that
\begin{equation}
    \cv^{SeeSaw}(\N)=\sum_{x=1}^{d_B^2}\tr[\Pi_x^{\star}\mc{N}(\rho_x^{\star})] \leq \cv(\N).
\end{equation}
To improve the see-saw optimization, the procedure is simply performed many times with randomly initialized states.
Combining these techniques, we numerically bound the communication value, 
\begin{equation}
    \cv^{SeeSaw}(\N) \leq \cv(\N)\leq \cv^{\ppt}(\N, \;\text{dual}).
\end{equation}

To numerically certify that quantum channels $\N$ and $\mc{M}$ are non-multiplicative, we need to compute a lower bound on $\cv(\N\otimes\mc{M})$ and upper bound on $\cv(\N)$ and $\cv(\mc{N})$.
CVChannel.jl computes the lower bound as $\cv^{SeeSaw}(\N\otimes\mc{M})\leq \cv(\N\otimes\mc{M})$ and the upper bound as $\cv(\N)\leq \cv^{\ppt}(\N, \; \text{dual})$.
Non-multiplicativity is numerically confirmed when
\begin{equation}
    \cv^{SeeSaw}(\N\otimes\mc{M}) - \cv^{\ppt}(\N)\cv^{\ppt}(\mc{M}) > \varepsilon
\end{equation}
where $\varepsilon>0$ is a conservative bound to the numerical error.
One drawback of this procedure is its susceptibility to false negatives due to the fact that the PPT Relaxation is a loose upper bound of the communication value.

\subsection{Examples}
Having established properties of the $\ppt$ relaxation of the communication value, we investigate channels which are known in other settings to admit non-multiplicative behaviour. In particular, we look at the family of Werner-Holevo channels \cite{Werner-2002a}, the dephrasure channel \cite{Leditzky-2018}, and the Siddhu channel \cite{Siddhu-2020}. We see that the Werner-Holevo channel is not multiplicative over a range of parameters, but the dephrasure and Siddhu channel which are known for their superactivation of coherent information are always multiplicative for the communication value. In some sense this should not be surprising as communication value captures a notion of using the quantum channel to transmit classical information whereas the coherent information measures the ability to transfer quantum information. However, it exemplifies how different the coherent information and communication values are as measures.
\subsubsection*{Werner-Holevo Channels}
In Section \ref{Sect:Werner-Holevo}, we showed how to determine the $\cv$ for the Werner-Holevo channels. In this section we extend the method to obtain this result to the construction of a linear program (LP) for determining the $\cv^{\ppt}$ for $n$ Werner-Holevo channels ran in parallel. We then use this to show non-multiplicativity for $\cv(\mc{W}_{d,\lambda} \otimes \mc{W}_{d,\lambda})$ as a function of $\lambda$, as well as the non-multiplicativity of $\cv^{\ppt}$ for more copies of the channel. We note our derivation assumes the dimension is the same for all channels, but a generalization is straightforward.
\begin{proposition}
Considering $n$ Werner-Holevo channels, there is a linear program 
$$ \max\{\langle a , c \rangle \, : \, Ac \geq 0 , \, Bc \geq 0, \, \langle g , c \rangle = 1 \} \ , $$
which obtains the value of $\cv^{\ppt}(\otimes_{i=1}^{n} J(\mc{W}_{d,\lambda_{i}}))$. Moreover, there exists an algorithm to generate the constraints $a,A,B,g$ which takes at most $\mathcal{O}(n2^{2n})$ steps.
\end{proposition}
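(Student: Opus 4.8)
The plan is to generalize the argument behind Proposition~\ref{prop:cv-WH-1-copy} from one copy to $n$ copies. First I would note that the $n$-fold Choi operator $J:=\bigotimes_{i=1}^{n}J_{\mc{W}_{d,\lambda_i}}$ is invariant under conjugation by $\bigotimes_{i=1}^{n}(U_i\otimes U_i)$ for arbitrary $U_i\in\mc{U}(\mbb{C}^d)$, since each factor is $U\otimes U$-invariant. Let $\mc{T}:=\bigotimes_{i=1}^{n}\mc{T}_{U_iU_i}$ be the corresponding product twirl, averaging each system $A_iB_i$ independently as in Eq.~\eqref{Eq:UU-twirling}. Since $\mc{T}$ is self-adjoint and fixes $J$, we have $\tr[\mc{T}(\Omega)J]=\tr[\Omega J]$; since $\mc{T}$ is a mixture of unitary conjugations it preserves $\Omega\ge 0$; since $\Gamma^{B}\circ\mc{T}_{U_iU_i}=\mc{T}_{U_i\ol{U}_i}\circ\Gamma^{B}$ on each factor, with $\mc{T}_{U_i\ol{U}_i}$ again a mixture of unitary conjugations, it also preserves $\Gamma^{B}(\Omega)\ge 0$; and $\tr_A\mc{T}(\Omega)=\mbb{I}^{B}$ whenever $\tr_A\Omega=\mbb{I}^{B}$ because each $A_iB_i$ is twirled by $U_i\otimes U_i$. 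Hence, without loss of generality, the optimizer of the PPT program \eqref{eqn:PPTPrimal} for $J$ lies in the fixed-point algebra of $\mc{T}$, i.e. the commutant of the product representation, which is spanned by the $2^{n}$ mutually orthogonal projectors $\Pi_b:=\bigotimes_{i=1}^{n}\Pi_{b_i}^{(i)}$ with $b\in\{+,-\}^{n}$ and $\Pi_{\pm}^{(i)}=\tfrac12(\mbb{I}\pm\mbb{F})$ on $A_iB_i$. Writing the variable as $\Omega=\sum_{b}c_b\,\Pi_b$ with $c\in\mbb{R}^{2^{n}}$ reduces the SDP to a problem in the single vector $c$.

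Next I would convert each constraint into a linear condition on $c$. Since $\tr_{A_i}\Pi_{\pm}^{(i)}=\tfrac{d\pm1}{2}\mbb{I}^{B_i}$, the normalization $\tr_A\Omega=\mbb{I}^{B}$ collapses to a single scalar equation $\langle g,c\rangle=1$ with $g=\bigotimes_{i}\bigl(\tfrac{d+1}{2},\tfrac{d-1}{2}\bigr)$. Because the $\{\Pi_b\}$ are orthogonal projectors summing to the identity, $\Omega\ge 0$ is equivalent to $c_b\ge 0$ for all $b$, i.e. $Ac\ge 0$ with $A=\mbb{I}_{2^{n}}$. For the partial transpose, $\Gamma^{B_i}(\mbb{F})=\Phi^{+}$ on each factor, so $\Gamma^{B}(\Pi_b)=\bigotimes_{i}\tfrac12(\mbb{I}\pm\Phi^{+})$; these commuting operators are simultaneously diagonalized by the orthogonal projector products $P_e:=\bigotimes_{i}P_{e_i}^{(i)}$, $e\in\{0,1\}^{n}$, where $P_{0}^{(i)}=\tfrac1d\Phi^{+}$ and $P_{1}^{(i)}=\mbb{I}-\tfrac1d\Phi^{+}$. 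The coefficient vector of $\Gamma^{B}(\Omega)$ in the $\{P_e\}$ basis is $Bc$ with $B=\bigotimes_{i}M$, $M=\tfrac12\left(\begin{smallmatrix}d+1 & 1-d\\ 1 & 1\end{smallmatrix}\right)$, so $\Gamma^{B}(\Omega)\ge 0$ becomes $Bc\ge 0$. Finally, using $\tr\Pi_{\pm}^{(i)}=\tfrac{d(d\pm1)}{2}$ together with the coefficients in $J_{\mc{W}_{d,\lambda_i}}$, the objective is $\tr[\Omega J]=\langle a,c\rangle$ with $a=\bigotimes_{i}\bigl(d\lambda_i,\,d(1-\lambda_i)\bigr)$. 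This is precisely the LP $\max\{\langle a,c\rangle : Ac\ge 0,\ Bc\ge 0,\ \langle g,c\rangle=1\}$, and one checks that for $n=1$ it reproduces the system \eqref{eqn:WH-LP-trace}--\eqref{eqn:WH-LP-4}. Its feasible set is bounded ($g$ has strictly positive entries for $d\ge 2$), so the maximum is attained and equals $\cv^{\ppt}\bigl(\bigotimes_{i=1}^{n}J(\mc{W}_{d,\lambda_i})\bigr)$.

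For the complexity claim, I would observe that $a$, $g$, $A$, and $B$ all have the Kronecker-product structure $\bigotimes_{i=1}^{n}(\cdot)$ with each factor a fixed $2$-vector or $2\times2$ matrix whose entries are the elementary expressions above. The algorithm constructs the $n$ small factors in $\mc{O}(n)$ time and then forms the Kronecker products: the vectors $a,g$ have $2^{n}$ entries, each a product of $n$ numbers, costing $\mc{O}(n2^{n})$; the matrix $B$ has $2^{n}\times 2^{n}=2^{2n}$ entries, each a product of $n$ numbers, costing $\mc{O}(n2^{2n})$ (or even $\mc{O}(2^{2n})$ if built by iterated Kronecker products); and $A=\mbb{I}_{2^{n}}$ is cheaper. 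The dominant cost is therefore $\mc{O}(n2^{2n})$, as claimed.

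I expect the main obstacle to be the careful justification that the twirl reduction is legitimate for the PPT program specifically---that averaging the optimization variable over the product group simultaneously leaves the objective unchanged and preserves the normalization, positivity, and partial-transpose-positivity constraints---together with the bookkeeping of two distinct orthogonal-projector bases, $\{\Pi_b\}$ which diagonalizes $\Omega$ and $\{P_e\}$ which diagonalizes $\Gamma^{B}(\Omega)$, and the resulting tensor-product change-of-basis matrix $B$. Once these structural facts are in place, reading off the LP data and bounding the generation cost is routine.
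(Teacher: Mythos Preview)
Your proof is correct and follows the same overall strategy as the paper: apply the product twirl $\bigotimes_i\mc{T}_{U_iU_i}$ to the optimization variable, argue that this preserves all constraints of the PPT program, and then reduce the SDP to a finite LP on the $2^n$-dimensional commutant.

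The one noteworthy difference is your choice of coordinates. The paper parametrizes the twirled optimizer in the product basis $\{\mbb{I},\mbb{F}\}^{\otimes n}$, writing $\Omega=\sum_{s}c_s\bigotimes_i\mbb{F}^{s(i)}$; as a result, positivity of $\Omega$ becomes a nontrivial sign matrix $A$ with entries $(-1)^{s(i)\wedge j(i)}$, the trace constraint has $g(s)=d^{n-w(s)}$, and the PPT matrix $B$ is obtained from the $\{\Phi^+,\Phi^{\perp}\}$ decomposition of $\bigotimes_i(\Phi^+)^{s(i)}$. You instead work directly in the orthogonal projector basis $\{\Pi_+,\Pi_-\}^{\otimes n}$, which makes the positivity constraint trivial ($A=\mbb{I}_{2^n}$) and exposes $a$, $g$, and $B$ as explicit Kronecker products of fixed $2$-vectors and $2\times 2$ matrices. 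The two LPs are related by the invertible change of variables $c_{\pm}=x\pm y$ on each factor, and your check against \eqref{eqn:WH-LP-trace}--\eqref{eqn:WH-LP-4} confirms this for $n=1$. Your parametrization is arguably cleaner---the tensor structure of the LP data is manifest and the complexity bound $\mc{O}(n2^{2n})$ follows immediately from forming the Kronecker product $B$---while the paper's version keeps the connection to the $\{\mbb{I},\mbb{F}\}$ generators more visible. Either way, the substance of the argument is the same.
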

\begin{proof}[Derivation of Constraints]
Let $\Pi_{0} := \Pi_+$, $\Pi_{1} := \Pi_-$. This labelling will simplify notation.
We are interested in $\cv^{\ppt}(\bigotimes_{i=1}^{n} J(\mc{W}_{d,\lambda_{i}}))$.
Recalling the objective function of \eqref{eqn:PPTPrimal} is 
$$\tr[\bigotimes_{i=1}^{n} J(\mc{W}_{d,\lambda_{i}}) \Omega^{A^{n}B^{n}}] \ ,$$
we can twirl $\Omega$ by moving the symmetry of the Holevo channels onto $\Omega$. This results in $\Omega = \sum_{s \in \{0,1\}^{n}} c_{s} R_{s}$ where 
$$ R_{s} = \bigotimes_{i=1}^{n} \mbb{F}^{s(i)} = \sum_{j \in \{0,1\}^{n}} \left( \bigotimes_{i \in [n]} (-1)^{s(i)\wedge j(i)} \Pi_{j(i)} \right) \ , $$
where the constraint on the sign is because $\mbb{F}^0 = \mbb{I} = \Pi_0 + \Pi_1$ and $\mbb{F}^1 = \Pi_0 - \Pi_1$, so a term is negative iff $s(i)=j(i)=1$. Combining these, we can express $\Omega$ as a linear combination of orthogonal subspaces with coefficients stored in a vector $c$: 
\begin{align}\label{eq:general-wh-lp-optimizer-decomp}
    \Omega = \sum_{s \in \{0,1\}^{n}} c_{s} \sum_{j \in \{0,1\}^{n}} \left( \bigotimes_{i \in [n]} (-1)^{s(i)\wedge j(i)} \Pi_{j(i)} \right) \ .
\end{align}
With the state simplified into mutually orthogonal subspaces, we just need to convert the constraints of \eqref{eqn:PPTPrimal} to constraints on $c \in \mbb{R}^{2^{n}}$. 

Guaranteeing positivity of $\Omega$ is equivalent to guaranteeing the weight of each orthogonal subspace in \eqref{eq:general-wh-lp-optimizer-decomp} is non-negative. As multiple elements of $c$ can have weight on multiple subspaces, the constraint is that the relevant linear combination of $c$ is non-negative for each subspace. Thus the positivity constraints may be written as $Ac \geq 0$ where $A \in \mbb{R}^{2^{n} \times 2^{n}}$ matrix storing the sign information $(-1)^{s(i)\wedge j(i)}$ for all $s$,$j$.

The PPT constraints correspond to $\Omega^{\Gamma} \geq 0$. Noting that $\mbb{F}^{\Gamma} = \Phi^{+}$, the unnormalized maximally entangled state. We have
$ \Omega^{\Gamma} = \sum_{s \in \{0,1\}^{n}} c_{s} \bigotimes_{i \in [n]} X^{s(i)} \ , $
where
$$ X^{s(i)} := \begin{cases} d^{-1}(\Phi^{\perp}+\Phi^{+}) & s(i) = 0 \\ \Phi^{+} &  s(i) = 1 \end{cases} \ , $$
where $\Phi^{\perp} = d\mbb{I} - \Phi^{+}$. In other words, we have decomposed $\Omega^{\Gamma}$ into linear combinations of a set of orthogonal subspaces.\footnote{We note this implies $d^{-1}(\Phi^{\perp}+\Phi^{+}) = \mbb{I}$. The choice of presentation is to make it clear we are considering two orthogonal subspaces.} Again, we only need to store the constraints on $c$ which in this case is the order of $d$ and if the coefficient is zero. By the definition of $X^{s(i)}$, there is not weight of a subspace for $c_{s}$ iff the $i^{th}$ element in the tensor is $\Phi^{\perp}$ and $s(i) = 1$, and otherwise the weight is given by $d^{-(2^{n}-w(s))}$ where $w(\cdot)$ is the Hamming weight of the string $s$. Thus the PPT constraints may be written as $Bc \geq 0$ where $B \in \mbb{R}^{2^{n} \times 2^{n}}$.

Recalling $\Omega = \sum_{s \in \{0,1\}^{n} c_{s} R_{s}}$ and $\tr_{A}(F) = \mbb{I}^{B}$, $\tr_{A}(\mbb{I}) = d\mbb{I}^{B}$, the partial trace condition is reduced to $\langle g , c \rangle = 1$ where $g \in \mathbb{R}^{2^{n}}$ and $g(s) = d^{n-w(s)}$.

Finally, we have the objective function. We write $\bigotimes_{i=1}^{n} J(\mc{W}_{d,\lambda_{i}}) = \sum_{s \in \{0,1\}^{n}} \left(\bigotimes_{i} \zeta_{i}(s(i))\Pi_{s({i})} \right)$, where $\zeta_{i}(s(i)) = \begin{cases} \lambda_{i} f_0 & i = 0 \\ (1-\lambda_{i})f_{1} & i = 1 \end{cases}$ where $f_{i}$ is the normalization constant in front of the projector. Calculating $\tr[\bigotimes_{i=1}^{n} J(\mc{W}_{d,\lambda_{i}})\Omega]$ using the above expression along with \eqref{eq:general-wh-lp-optimizer-decomp}, one can simplify the objective function to
$$ \sum_{s \in \{0,1\}^{n}} c_{s} \left( \sum_{j \in \{0,1\}^{n}}\left[ \prod_{i \in [n]} (-1)^{s(i)\wedge j(i)} \varphi_{i}(j(i)) \right] \right) \ , $$
where $\varphi_{i}(j(i))$ is the same as $\zeta_{i}(j(i))$, except without the normalization constant. Thus we may define $a$ as the argument of the large parentheses. This completes the derivation of the LP. Finally we note to construct the constraints one needs to loop through nested loops of sizes $2^{n},2^{n},n$ which results in the $\mathcal{O}(n2^{2^n})$ steps in the algorithm.
\end{proof}
Using these numerics, we can look at the behaviour of the PPT-relaxation of the communication value of the $n$-fold Werner Holevo Channel (Fig. \ref{fig:multi-copy-werner-holevo}). We can see that the non-multiplicativity over the PPT cone grows exponentially (Fig. \ref{fig:multi-copy-werner-holevo}) and that all non-multiplicativity dies out at $\lambda=0.3$ in all cases. We note it is known that for the tensor product of two Werner states, the space of PPT operators is the same as the space of separable operators. In this case, we see the non-additivity of the true communication value for the Werner-Holevo channels.
\begin{figure}[h]
    \centering
    \includegraphics[width = \columnwidth]{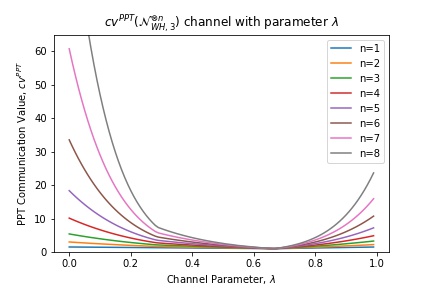}
    \caption{The $\cv^{\ppt}$ value of the $n$-fold Werner Holevo channel for all values of channel parameter $\lambda \in [0,1]$.}
    \label{fig:multi-copy-werner-holevo}
\end{figure}
\begin{figure}[h]
    \centering
    \includegraphics[width = \columnwidth]{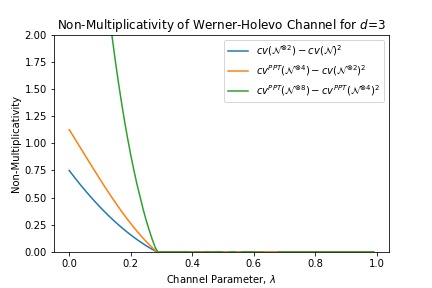}
    \caption{Here we see the non-multiplicativity of tensoring the Werner-Holevo channel with itself. We note that the blue line characterizes the multiplicativity of $\cv$ rather than just $\cv^{\ppt}$.}
    \label{fig:non-multiplicativity}
\end{figure}

\subsubsection{PPT Relaxation of Werner-Holevo with the Identity}\label{subsec:WHID}
An immediate corollary of Theorem \ref{thm:covariance-multiplicativity} is that the Werner-Holevo channel when ran in parallel with the identity channel of any dimension is multiplicative. That is, $\cv(\mc{W}_{d,\lambda} \otimes id_{d'}) = d' \cdot\cv(\mc{W}_{d,\lambda})$. However, here we find that this is not the case for $\cv^{PPT}$ which is non-multiplicative, exhibiting a clear separation between the $\cv$ and its relaxation. This separation is given for the $\mc{W}_{d,0} \otimes id_{d'}$ in Fig. \ref{fig:cv-ppt-versus-cv-wh-with-id}. It is determined using the following proposition.
\begin{proposition}
The PPT communication value of the Werner-Holevo channel ran in parallel with an identity channel, $\cv^{\ppt}(\mc{W}_{d,\lambda} \otimes\id_{d'})$, is given by the linear program
\begin{equation}
\begin{aligned}
\max & \;\; dd'[w+yd'+(2\lambda-1)(x+zd')]\\
&\;\; 0\leq w-x+d'y-d'z\\
&\;\; 0\leq w-x\\
&\;\; 0\leq w+x+d'y+d'z\\
&\;\; 0\leq w+x&\\
&\;\; 0\leq w+dx-y-dz\\
&\;\; 0\leq w-y \\
&\;\; 0\leq w+dx+y+dz \\
&\;\; 0\leq w+y\\
&\;\; 1=dd'w+d'x+dy+z.
\end{aligned}
\end{equation}
\end{proposition}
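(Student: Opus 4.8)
The plan is to follow the template of the proof of Proposition~\ref{prop:cv-WH-1-copy}, now for the joint channel $\mc{W}_{d,\lambda}\otimes\id_{d'}$, whose Choi matrix factorizes as $J_{\mc{W}_{d,\lambda}}\otimes\phi^+_{d'}$, with $\phi^+_{d'}$ the unnormalized maximally entangled state (the Choi matrix of $\id_{d'}$). The primal PPT program \eqref{eqn:PPTPrimal} optimizes over $\Omega$ on $AA':BB'$ subject to $\Omega\geq0$, $\Gamma^{BB'}(\Omega)\geq0$, and $\tr_{AA'}[\Omega]=\mbb{I}^{BB'}$. Since $J_{\mc{W}_{d,\lambda}}$ is fixed by every $U\otimes U$ with $U\in\mc{U}(\mbb{C}^d)$ acting on $AB$, and $\phi^+_{d'}$ is fixed by every $V\otimes\ol V$ with $V\in\mc{U}(\mbb{C}^{d'})$ acting on $A'B'$, I would twirl $\Omega$ by the product group generated by these. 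Each group element is a product unitary across the cut $AA':BB'$, so conjugation by it (hence the average $\mc{T}$) maps the PPT feasible set into itself and only conjugates $\tr_{AA'}[\Omega]$ by a unitary, preserving the constraint $\tr_{AA'}[\Omega]=\mbb{I}^{BB'}$; and since $\mc{T}$ is self-adjoint and fixes $J_{\mc{W}_{d,\lambda}}\otimes\phi^+_{d'}$, it leaves the objective $\tr[(J_{\mc{W}_{d,\lambda}}\otimes\phi^+_{d'})\Omega]$ unchanged. Thus we may assume $\Omega=\mc{T}(\Omega)$ lies in the commutant $\mathrm{span}\{\mbb{I}_{AB},\mbb{F}_{AB}\}\otimes\mathrm{span}\{\mbb{I}_{A'B'},\phi^+_{d'}\}$, so
\[\Omega=w\,\mbb{I}\otimes\mbb{I}+x\,\mbb{F}\otimes\mbb{I}+y\,\mbb{I}\otimes\phi^+_{d'}+z\,\mbb{F}\otimes\phi^+_{d'},\]
with the first tensor factor on $AB$, the second on $A'B'$, and $w,x,y,z\in\mbb{R}$.

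It then remains to translate the three constraints and the objective of \eqref{eqn:PPTPrimal} into linear relations among $w,x,y,z$, exactly as in the single-copy case. Using $\tr[J_{\mc{W}_{d,\lambda}}\mbb{I}]=d$, $\tr[J_{\mc{W}_{d,\lambda}}\mbb{F}]=(2\lambda-1)d$, $\tr[\phi^+_{d'}]=d'$, and $\tr[(\phi^+_{d'})^2]=d'^2$, the objective becomes $dd'[\,w+yd'+(2\lambda-1)(x+zd')\,]$. Using $\tr_A[\mbb{I}_{AB}]=d\mbb{I}_B$, $\tr_A[\mbb{F}_{AB}]=\mbb{I}_B$, $\tr_{A'}[\mbb{I}_{A'B'}]=d'\mbb{I}_{B'}$, $\tr_{A'}[\phi^+_{d'}]=\mbb{I}_{B'}$, the partial-trace condition collapses to $dd'w+d'x+dy+z=1$. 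For $\Omega\geq0$ I would diagonalize in the orthogonal subspace projectors $\{\Pi_+,\Pi_-\}$ on $AB$ tensored with $\{P,\mbb{I}-P\}$ on $A'B'$, where $P=\tfrac1{d'}\phi^+_{d'}$, so that $\Omega\geq0$ is equivalent to $w\pm x\geq0$ together with $(w\pm x)+d'(y\pm z)\geq0$ (matched signs). For $\Gamma^{BB'}(\Omega)\geq0$ I would use $\mbb{I}_{AB}^{\Gamma_B}=\mbb{I}_{AB}$, $\mbb{F}_{AB}^{\Gamma_B}=\phi^+_{AB}$, $\mbb{I}_{A'B'}^{\Gamma_{B'}}=\mbb{I}_{A'B'}$, $(\phi^+_{d'})^{\Gamma_{B'}}=\mbb{F}_{A'B'}$, and then diagonalize in $\{Q,\mbb{I}-Q\}$ on $AB$ (with $Q=\tfrac1d\phi^+_{AB}$) tensored with $\{\Pi_+',\Pi_-'\}$ on $A'B'$; this yields $w\pm y\geq0$ and $(w+dx)\pm(y+dz)\geq0$. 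Collecting the eight inequalities together with the single equality gives precisely the stated linear program, and its optimum equals $\cv^{\ppt}(\mc{W}_{d,\lambda}\otimes\id_{d'})$.

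The only step needing genuine care is the first: verifying that the product-group twirl both leaves the PPT program invariant and reduces $\Omega$ to this four-parameter family. The commutant computation itself is routine — the commutant of a tensor product of group representations is the tensor product of the individual commutants, and the commutants of $U\otimes U$ and of $V\otimes\ol V$ are the spans of $\{\mbb{I},\mbb{F}\}$ and of $\{\mbb{I},\phi^+_{d'}\}$ respectively — and invariance of the feasible set follows from convexity of $\ppt(AA':BB')$ plus the fact that every group element is a product unitary across the cut. Unlike in Proposition~\ref{prop:cv-WH-1-copy}, no identification of the PPT and separable invariant subspaces is being invoked; we are computing the PPT relaxation directly, which is the point here. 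Everything after the reduction is bookkeeping of traces and partial transposes of $\mbb{I}$, $\mbb{F}$, and $\phi^+_{d'}$.
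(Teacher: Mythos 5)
Your proof is correct and follows essentially the same route as the paper's derivation: twirl by the product group $U\otimes U$ on $AB$ and $\ol V\otimes V$ on $A'B'$ (what the paper calls $UU\ol VV$-covariance) to reduce $\Omega$ to the four-parameter commutant, then read off the positivity and PPT constraints by diagonalizing in the orthogonal product projectors $\Pi_\pm\otimes\{P,\mbb I-P\}$ and $\{Q,\mbb I-Q\}\otimes\Pi'_\pm$; your eight inequalities, objective, and trace constraint all match the paper's. The remark that no PPT-equals-SEP identification of invariant subspaces is needed here — in contrast to Proposition~\ref{prop:cv-WH-1-copy} — is an accurate observation and exactly the reason this LP computes $\cv^{\ppt}$ rather than $\cv$.
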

\begin{proof}[Derivation]
The derivation is similar to that of the previous $\cv^{PPT}$ LP derivations, we just also consider $\ol{V}V$ covariance for the identity channel. Let us consider the channel $\mc{W}_{d,\lambda} \otimes\id_{d'}$, where $\mc{W}_{d,\lambda}$ is defined in \eqref{eqn:WH-defn}. Then $\mc{J}_{\mc{W}}\otimes \phi^+_{d'}$ is $UU\ol{V}V$-covariant, and so for any feasible operator $\sigma^{AB}$ in the $\cv^{PPT}$ SDP, we have
\begin{align}
    & \tr[\sigma^{AA:BB'}J_{\mc{W}}\otimes J_{\id_{d'}}] \notag \\ = & \tr[\sigma^{AA':BB'}\mc{T}_{UU}(\mc{J}_{\mc{W}})\otimes\mc{T}_{\ol{V}V}(\phi^+_{d'})]\notag\\
    =& \tr[\mc{T}_{UU}\otimes\mc{T}_{\ol{V}V}(\sigma^{AA':BB'})J_{\mc{W}}\otimes \phi^+_{d'}].\notag
\end{align}
Note that $\mc{T}_{UU}\otimes\mc{T}_{\ol{V}V}(\sigma)$ is still a feasible operator, and so without loss of generality we can assume that $\sigma^{AA':BB'}$ is itself $UU\ol{V}V$-covariant.  Thus, we can parametrize $\sigma$ as
\begin{align*}
     w\mbb{I}^{AB}\otimes\mbb{I}^{A'B'}+x\mbb{F}_d\otimes\mbb{I}+y\mbb{I}\otimes\phi^+_{d'}+z\mbb{F}_d\otimes\phi^+_{d'}.
\end{align*}
The space of $UU\ol{V}V$ operators is spanned by the set of four orthogonal operators
\begin{equation*}
    \left\{
    \begin{array}{cc}
        \Pi_d^-\otimes\phi_{d'}^+ & \Pi_d^-\otimes(d'\mbb{I}-\phi_{d'}^+) \\[2mm]
        \Pi_d^+\otimes\phi_{d'}^+ & \Pi_d^+\otimes(d'\mbb{I}-\phi_{d'}^+)
    \end{array}
    \right\}
\end{equation*}
Positivity then amounts to the conditions
\begin{equation}\label{eqn:WHID-Pos-cond}
    \begin{aligned}
        w-x+d'y-d'z&\geq 0 \\
        w-x&\geq 0 \\
        w+x+d'y+d'z&\geq 0 \\
        w+x&\geq 0.
    \end{aligned}
\end{equation}
The partial transpose of $\sigma$, $\sigma^{\Gamma_{BB'}}$ is given by
\begin{align*}
    w\mbb{I}^{AB}\otimes\mbb{I}^{A'B'}+x\phi^+_d\otimes\mbb{I}+y\mbb{I}\otimes\mbb{F}_{d'}+z\phi^+_d\otimes\mbb{F}_{d'}.
\end{align*}
To check positivity, we now use just need to swap the orthogonal basis operators:
\begin{equation*}
    \left\{
    \begin{array}{cc}
    \phi_{d}^+\otimes\Pi_{d'}^- & (d\mbb{I}-\phi_{d}^+)\otimes\Pi_{d'}^- \\[2mm] \phi_{d}^+\otimes\Pi_{d'}^+ & (d\mbb{I}-\phi_{d}^+)\otimes\Pi_{d'}^+
    \end{array}
    \right\}
\end{equation*}
This yields the conditions
\begin{equation}\label{eqn:WHID-PPT-cond}
    \begin{aligned}
    w+dx-y-dz&\geq 0 \\
    w-y&\geq 0 \\
    w+dx+y+dz&\geq 0 \\
    w+y&\geq 0.
    \end{aligned}
\end{equation}
Finally, we compute the objective function
\begin{align}
& \tr[\sigma^{AA':BB'}J_\mc{W}\otimes\phi_{d'}^+] \notag\\
=&d'(w\tr[J_{\mc{W}}]+x\tr[\mbb{F}J_{\mc{W}}])+{d'}^2(y\tr[J_{\mc{W}}]+z\tr[\mbb{F}J_{\mc{W}}])\notag\\
=&d'(wd+xd(2\lambda-1))+{d'}^2(yd+zd(2\lambda-1))\notag\\
=&dd'[w+yd'+(2\lambda-1)(x+zd')], \label{eqn:WHID-obj-func}
\end{align}
and the partial trace condition
\begin{align}\label{eqn:WHID-tr-cond}
    \tr_{AA'}[\sigma^{AA':BB'}]&=(dd' w+d'x+dy+z)\mbb{I}^{BB'}.
\end{align}
Combining \eqref{eqn:WHID-Pos-cond} -- \eqref{eqn:WHID-tr-cond} completes the derivation.
\end{proof}

\begin{figure}[h]
    \centering
    \includegraphics[width=\columnwidth]{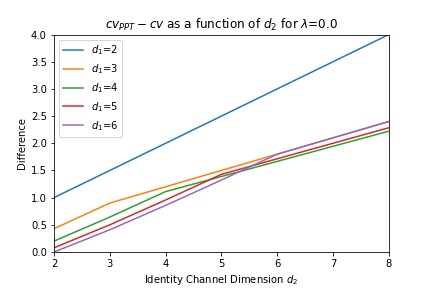}
    \caption{This shows the gap between $\cv(\mathcal{N}_{WH,d_{1},\lambda}\otimes \id_{d_{2}}) = d_{2} \cv(\mathcal{N}_{WH,d,\lambda})$ and $\cv^{PPT}(\mathcal{N}_{WH,d_{1},\lambda}\otimes \id_{d_{2}})$ for $\lambda = 0$.}
    \label{fig:cv-ppt-versus-cv-wh-with-id}
\end{figure}

\subsubsection*{Dephrasure Channel}
We next consider the dephrasure channel,
$$ \mathcal{N}_{p,q}(X) := (1-q)\left((1-p)\rho + pZ\rho Z \right) + q \tr(X) \dyad{e} \ , $$
where $p,q \in [0,1]$. The interesting aspect of the dephrasure channel is that in some parameter regime it admits superadditivity of coherent information \cite{Leditzky-2018}. We will first present it's communication value.  
\begin{lemma}
$\cv(\mathcal{N}_{p,q}) = 2-q$.
\end{lemma}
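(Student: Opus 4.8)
\emph{Proof proposal.} The plan is to establish the two inequalities $\cv(\mathcal{N}_{p,q})\geq 2-q$ and $\cv(\mathcal{N}_{p,q})\leq 2-q$ separately. The observation driving both directions is that the dephasing part $X\mapsto(1-p)X+pZXZ$ acts trivially on $Z$-eigenstates, while the erasure part simply injects a fixed flag state that carries no information about the message.

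For the lower bound I would write down an explicit encoding/decoding. Since $Z\ket{0}=\ket{0}$ and $Z\ket{1}=-\ket{1}$, the computational-basis states are fixed by the dephasing map, so $\mathcal{N}_{p,q}(\dyad{i})=(1-q)\dyad{i}+q\dyad{e}$ for $i\in\{0,1\}$. Taking the two signal states $\dyad{0},\dyad{1}$ and the POVM $\{\dyad{0},\dyad{1},\dyad{e}\}$ on the qutrit output produces the classical channel $\mathbf{P}$ with $P(i\mid i)=1-q$ and $P(e\mid 0)=P(e\mid 1)=q$ (all other entries zero), so $\cv(\mathbf{P})=(1-q)+(1-q)+q=2-q$. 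Hence $\cv(\mathcal{N}_{p,q})\geq 2-q$, and notably this bound is independent of $p$.

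For the upper bound I would exploit that $\cv$ is a convex function of the channel: by the primal conic program of Proposition~\ref{Prop:Proposition-cv} the feasible set for $\Omega^{AB}$ does not depend on $\mathcal{N}$ and $\mathcal{N}\mapsto J_{\mathcal{N}}$ is linear, so $\cv$ is a supremum of linear functionals of $J_{\mathcal{N}}$ and in particular $\cv(\sum_i p_i\mathcal{N}_i)\leq\sum_i p_i\cv(\mathcal{N}_i)$. Decomposing $\mathcal{N}_{p,q}=(1-q)\mathcal{M}_p+q\,\mathcal{R}$, where $\mathcal{M}_p(X)=(1-p)X+pZXZ$ followed by the isometric embedding of the qubit into the qutrit output and $\mathcal{R}(X)=\tr(X)\dyad{e}$ is a replacer channel, gives $\cv(\mathcal{N}_{p,q})\leq(1-q)\cv(\mathcal{M}_p)+q\,\cv(\mathcal{R})$. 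Now $\cv(\mathcal{R})=1$ by the proposition characterizing replacer channels, and $\cv(\mathcal{M}_p)\leq\min\{d_A,d_B\}=2$ by the general bound in Eq.~\eqref{Eq:cv-bounds}. Therefore $\cv(\mathcal{N}_{p,q})\leq 2(1-q)+q=2-q$, which combined with the lower bound finishes the proof.

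An equivalent, perhaps more transparent route to the upper bound is to produce a dual feasible point: in Eq.~\eqref{Eq:cv-dual} take $Z^{B}=(1-q)(\dyad{0}+\dyad{1})+q\,\dyad{e}$, which has $\tr Z^{B}=2-q$. As in the alternative derivation around Eq.~\eqref{eqn:cv-max-holevo-alt}, feasibility reduces to checking $Z^{B}\succeq\mathcal{N}_{p,q}(\rho)$ for all inputs $\rho$; writing $\rho=\dyad{\psi}$ with $\ket{\psi}=a\ket{0}+b\ket{1}$, one computes that $Z^{B}-\mathcal{N}_{p,q}(\dyad{\psi})$ equals $(1-q)$ times the operator whose only nonzero block is the $2\times2$ matrix with diagonal $(\lvert b\rvert^{2},\lvert a\rvert^{2})$ and off-diagonal $-(1-2p)a\bar b$, which is positive semidefinite because its trace is $1$ and its determinant is $\lvert a\rvert^{2}\lvert b\rvert^{2}\bigl(1-(1-2p)^{2}\bigr)\geq 0$. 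There is no serious obstacle here: the only thing requiring care is keeping the qubit-to-qutrit bookkeeping straight and confirming that the convex decomposition (or the dual witness) is legitimate; the substantive content is simply that $Z$-noise is invisible to a basis-state encoding.
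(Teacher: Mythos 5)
Your proof is correct, and it differs from the paper's in instructive ways. The paper works entirely inside the PPT semidefinite program: it exhibits a primal feasible operator $X=\dyad{00}+\dyad{11}+\tfrac12(\dyad{0e}+\dyad{1e})$ giving the lower bound, and a dual feasible pair $(Y_1,Y_2)$ of the PPT dual to get the matching upper bound. (The primal operator happens to be separable, so this also bounds the true $\cv$ from below.) Your lower bound is simpler and more operational: you read off the classical channel induced by signal states $\{\dyad{0},\dyad{1}\}$ and the computational/flag POVM, which gives $\cv(\mathbf{P})=2-q$ directly from the definition without appealing to the conic program at all.

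Your first upper-bound argument via convexity of $\cv$ — writing $\mathcal{N}_{p,q}=(1-q)\mathcal{M}_p+q\,\mathcal{R}$ and using $\cv(\mathcal{M}_p)\le 2$ from Eq.~\eqref{Eq:cv-bounds} and $\cv(\mathcal{R})=1$ for a replacer channel — is not in the paper, is quite a bit more elementary, and explains cleanly \emph{why} the answer is independent of $p$. (The convexity of $\cv$ follows, as you say, from Proposition~\ref{Prop:Proposition-cv} since the feasible set is fixed and $\mathcal{N}\mapsto J_{\mathcal{N}}$ is linear; this fact is true even though the paper does not state it explicitly.) Your second upper-bound argument is essentially a streamlined version of the paper's dual-witness check: you verify $Z^B\geq \mathcal{N}_{p,q}(\rho)$ on pure inputs directly by a $2\times 2$ determinant computation, which is the separable-cone dual condition from the derivation around Eq.~\eqref{eqn:cv-max-holevo-alt}, whereas the paper constructs an explicit $Y_2$ for the PPT dual (Eq.~\eqref{eqn:PPTDual}). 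Your version avoids the auxiliary operator entirely and does not need to invoke $\ppt=\sep$ for $d_Ad_B\le 6$. All three pieces — your lower bound and both upper-bound routes — are sound; the qubit-to-qutrit bookkeeping is handled correctly and the determinant $|a|^2|b|^2(1-(1-2p)^2)\ge 0$ is right.
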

\begin{proof}
We are going to prove this by constructing feasible operators in the primal and dual which achieve this value. First we note the Choi matrix:
\begin{align*}
J(\mathcal{N}_{p,q}) =& (1-q)\left(\op{00}{00}+\op{11}{11}\right) \\ & + \gamma\left(\ket{00}\bra{11} + \ket{11}\bra{00}\right) \\ & \hspace{1cm} + q\left(\op{0e}{0e} + \dyad{1e}\right) \ ,
\end{align*}
where $\gamma := (1-q)(1-2p)$. Then for the primal problem, we may choose 
$$X =\dyad{00} + \dyad{11} + 1/2\left(\dyad{0e} + \dyad{1e}\right) \ . $$ This clearly satisfies $\tr_{A}(X) = \mbb{I}^{B}$, it is PPT as it is diagonal, and $\langle X , J(\mathcal{N}_{p,q}) \rangle = 2 -q$. For the dual problem, let 
\begin{align*}
Y_{1} &= (1-q)(\dyad{0}+\dyad{1}) + q\dyad{e} \\ 
Y_{2} &= \kappa (\dyad{01} + \dyad{10}) - \gamma(\ket{01}\bra{10}+\ket{10}\bra{01}) \ ,
\end{align*}
where $(1-q) \geq \kappa \geq |\gamma| = (1-q)|(1-2p)|$. Note this interval is never empty as $|(1-2p)|\in[0,1]$ for all $p \in [0,1]$.

Then $Y_{1}$ is clearly Hermitian, and $Y_{2} \succeq 0$ as it's eigenvalues are $\kappa \pm \gamma \geq |\gamma| \pm \gamma \geq 0$ and $0$ with multiplicity $4$. Then, one may calculate from these expressions that
\begin{align*}
& \mbb{I}_{A} \otimes Y_{1} - \Gamma(Y_{2}) - J(\mathcal{N}_{p,q}) \\
=& \left((1-q-\kappa \right) \left[\ket{00}\bra{11} + \ket{11}\bra{00}\right] \ ,
\end{align*}
Therefore we have constructed a feasible choice. Finally, $\tr(Y_{1}) = 2-q$ completes the proof.
\end{proof}
Note what the above implies is the `dephasing' property of the dephrasure is irrelevant. This is in some sense intuitive as the dephasing cannot hurt the classical information if the optimal strategy is sending data in the classical basis. Indeed, it is easy to see the above value may be achieved by using the signal states $\{\dyad{0},\dyad{1}\}$ and the projective measurement decoder $\{\dyad{0}+1/2\dyad{e},\dyad{1}+1/2\dyad{e}\}$ as then for both signal states you will guess correctly $(1-q)+q/2$ conditioned on the state sent. As one might expect, in such a situation the communication value of the channel would be multiplicative with itself. As we require an upper bound, we verify this by an exhaustive numerical search using the dual problem of $\cv^{\ppt}$.

\begin{theorem}
$\cv(\mathcal{N}_{p,q}^{\otimes 2}) =\cv(\mathcal{N}_{p,q})^{2}$, i.e. the dephrasure channel's communication value is multiplicative.
\end{theorem}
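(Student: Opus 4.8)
The plan is to bracket $\cv(\mathcal{N}_{p,q}^{\otimes2})$ by $(2-q)^2$ on both sides. The lower bound $\cv(\mathcal{N}_{p,q}^{\otimes2})\ge\cv(\mathcal{N}_{p,q})^2=(2-q)^2$ is immediate: tensoring two copies of a single-copy optimal encoding/decoding induces the product classical channel, whose cv is the product of the two cv's, and the value $2-q$ is the preceding lemma. The work is the matching upper bound, and the key structural fact is that $\mathcal{N}_{p,q}$ has block-diagonal output. Splitting the qutrit output as the qubit block $\mathrm{span}\{\ket0,\ket1\}$ and the flag $\ket e$, one has $\mathcal{N}_{p,q}(\rho)=(1-q)\mathcal{D}_p(\rho)$ on the qubit block and $q\tr(\rho)\dyad e$ on the flag, where $\mathcal{D}_p(\rho)=(1-p)\rho+pZ\rho Z$ is the qubit dephasing channel. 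Hence $\mathcal{N}_{p,q}^{\otimes2}$ sends every input into a fixed direct sum of the four mutually orthogonal sectors qubit$\,\otimes\,$qubit, qubit$\,\otimes\,\ket e$, $\ket e\,\otimes\,$qubit, $\ket e\,\otimes\,\ket e$ of $\mbb{C}^3\otimes\mbb{C}^3$, with weights $(1-q)^2$, $q(1-q)$, $q(1-q)$, $q^2$, acting there as $\mathcal{D}_p\otimes\mathcal{D}_p$, $\mathcal{D}_p\otimes\tr(\cdot)$, $\tr(\cdot)\otimes\mathcal{D}_p$, and $\tr(\cdot)\otimes\tr(\cdot)$ respectively.

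Next, take an optimal finite strategy $(\{\rho_x\},\{\Pi_x\})$ for $\cv(\mathcal{N}_{p,q}^{\otimes2})$ in the bijectively-labeled form of Eq.~\eqref{Eq:cv-carethedory}, so that $\cv(\mathcal{N}_{p,q}^{\otimes2})=\sum_x\tr[\Pi_x\mathcal{N}_{p,q}^{\otimes2}(\rho_x)]$. Since every channel output is already block-diagonal across the four sectors, pinching each $\Pi_x$ onto those sectors changes neither the objective nor the POVM property, so we may assume $\Pi_x=\Pi_x^{00}\oplus\Pi_x^{0e}\oplus\Pi_x^{e0}\oplus\Pi_x^{ee}$. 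The success probability then decomposes as $(1-q)^2T_{00}+q(1-q)T_{0e}+q(1-q)T_{e0}+q^2T_{ee}$, where $T_{00}=\sum_x\tr[\Pi_x^{00}(\mathcal{D}_p\otimes\mathcal{D}_p)(\rho_x)]$; $T_{0e}=\sum_x\tr[\widetilde\Pi_x\,\mathcal{D}_p(\tr_2\rho_x)]$, after identifying the operator $\Pi_x^{0e}$ on qubit$\,\otimes\,\ket e$ with an operator $\widetilde\Pi_x$ on $\mbb{C}^2$ (symmetrically for $T_{e0}$); and $T_{ee}=\sum_x\tr[\Pi_x^{ee}\dyad{ee}]$. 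In each sector the restricted operators sum to the identity of that sector and the relevant inputs are marginals of the $\rho_x$, so each $T$ is the success score of some classical channel induced by a smaller quantum channel and hence at most that channel's cv: $T_{00}\le\cv(\mathcal{D}_p^{\otimes2})$, $T_{0e},T_{e0}\le\cv(\mathcal{D}_p)$, and $T_{ee}=\tr\dyad{ee}=1$.

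Finally I would substitute the small quantities. By the dimension bound~\eqref{Eq:cv-bounds}, $\cv(\mathcal{D}_p)\le\min\{2,2\}=2$ and $\cv(\mathcal{D}_p^{\otimes2})\le\min\{4,4\}=4$ (the latter also follows from qubit multiplicativity, Theorem~\ref{Thm:qubit-multiplicative}); in fact $\cv(\mathcal{D}_p)=2$ because dephasing transmits the computational basis perfectly. Substituting, $\cv(\mathcal{N}_{p,q}^{\otimes2})\le 4(1-q)^2+4q(1-q)+q^2=(2-q)^2$, which meets the lower bound, giving $\cv(\mathcal{N}_{p,q}^{\otimes2})=\cv(\mathcal{N}_{p,q})^2$. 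The identical argument with $2^n$ sectors in fact yields $\cv(\mathcal{N}_{p,q}^{\otimes n})=(2-q)^n$ and hence $\mathcal{CV}(\mathcal{N}_{p,q})=\log(2-q)$.

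The main obstacle I anticipate is bookkeeping rather than conceptual: one must check carefully that applying $\mathcal{N}_{p,q}$ to each tensor factor leaves no cross-sector coherences, that pinching the measurement is lossless, and that the sub-strategies extracted in each sector are legitimate strategies for $\mathcal{D}_p^{\otimes2}$ and $\mathcal{D}_p$ so that the $T$'s are genuinely bounded by the corresponding cv's. The alternative route hinted at in the surrounding text -- bounding $\cv(\mathcal{N}_{p,q}^{\otimes2})$ by $\cv^{\ppt}(\mathcal{N}_{p,q}^{\otimes2})$ via the dual SDP~\eqref{eqn:PPTDual} and evaluating it -- is less clean here because $J_{\mathcal{N}_{p,q}}$ is not PPT (so Theorem~\ref{thm:ppt-multiplicativity} does not apply) and the covariance result Theorem~\ref{thm:covariance-multiplicativity} is stated for equal input and output dimension; making that route analytic would require reducing the SDP to a finite linear program using the Pauli covariance of $\mathcal{N}_{p,q}$ (in the spirit of the Werner--Holevo reductions) or else a numerical verification on a grid in $(p,q)$, which is the ``exhaustive numerical search'' the authors describe.
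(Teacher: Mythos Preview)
Your proof is correct and takes a genuinely different route from the paper's.  The paper's ``proof'' is purely numerical: it evaluates the dual SDP for $\cv^{\ppt}(\mathcal{N}_{p,q}^{\otimes 2})$ on the grid $p,q\in\{0,0.01,\dots,1\}$ and observes that the computed values agree with $(2-q)^2$ to numerical precision.  Your argument, by contrast, is fully analytic: you exploit the fact that the output of $\mathcal{N}_{p,q}$ is block-diagonal with respect to the qubit/erasure-flag decomposition, pinch the decoding POVM onto the resulting four sectors of the two-copy output (lossless because the outputs have no cross-sector coherence), and bound each sector's contribution by the cv of the corresponding sub-channel ($\mathcal{D}_p^{\otimes 2}$, $\mathcal{D}_p$, or the trivial replacer).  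The trivial dimension bound $\cv\le\min\{d_A,d_B\}$ then gives exactly $(2-q)^2$.

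Your approach buys considerably more than the paper's: it is valid for all $(p,q)\in[0,1]^2$ rather than only at grid points, it carries no numerical error, and as you note it extends verbatim to $n$ copies (yielding $\cv(\mathcal{N}_{p,q}^{\otimes n})=(2-q)^n$ and hence the single-letter cv capacity $\mathcal{CV}(\mathcal{N}_{p,q})=\log(2-q)$), something the paper's grid search cannot deliver.  The bookkeeping concerns you flag are all easily discharged: the absence of cross-sector coherence follows because $\mathcal{N}_{p,q}=(1-q)\mathcal{A}+q\mathcal{B}$ with $\mathcal{A},\mathcal{B}$ having images in orthogonal subspaces, so the tensor expansion of $\mathcal{N}_{p,q}^{\otimes 2}$ places each of its four terms in a distinct orthogonal sector; the pinched blocks $\{\Pi_x^{\bullet}\}_x$ sum to the identity on their sector, so the extracted sub-strategies are genuine cv strategies for the sub-channels; and the number of outcomes is irrelevant since $\cv^{m\to m}$ is non-decreasing in $m$.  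Your closing remarks on why the PPT and covariance routes are obstructed are also accurate.
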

\begin{proof}
A search over the dual problem $\cv^{\ppt}(\mathcal{N}_{p,q}^{\otimes 2})$ for $p,q \in [0,0.01,...,1]$ is always within numerical error of $\cv(\mathcal{N}_{p,q})^{2}$. As the dual problem always obtains an upper bound on $\cv^{\ppt}$, and $\cv^{\ppt}$ is an upper bound on $\cv$, we may conclude that the dephrasure channel is multiplicative.
\end{proof}

\subsubsection*{Siddhu Channel}
Finally we consider the following family of channels:
\begin{align*}
    \mathcal{N}_{s}(X) := \sum_{i=0}^{1} K_{i}XK_{i}^{\dagger} \ ,
\end{align*}
where 
\begin{align*}
    K_{0} = \sqrt{s} \op{0}{0} + \op{2}{1} \quad K_{1} = \sqrt{1-s} \op{1}{0} + \op{2}{2} \ ,
\end{align*}
where $s \in [0,1/2]$. This channel is known to have non-additive coherent information over its entire parameter range when tensored with itself. However, we will now show the communication value of the channel is multiplicative with itself over the whole range.
\begin{lemma}
$\cv(\mathcal{N}_{s}) = 2$ for all $s \in [0,1/2]$.
\end{lemma}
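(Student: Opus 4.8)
The plan is to prove $\cv(\mathcal{N}_s)=2$ exactly as in the dephrasure lemma above: exhibit a primal-feasible point attaining $2$ and a dual-feasible point with trace $2$. First I would record the channel action on basis dyads. A direct computation with the given Kraus operators yields $\mathcal{N}_s(\op{0}{0})=s\op{0}{0}+(1-s)\op{1}{1}$, $\mathcal{N}_s(\op{1}{1})=\mathcal{N}_s(\op{2}{2})=\op{2}{2}$, and for the off-diagonal dyads $\mathcal{N}_s(\op{0}{1})=\sqrt{s}\,\op{0}{2}$, $\mathcal{N}_s(\op{0}{2})=\sqrt{1-s}\,\op{1}{2}$, $\mathcal{N}_s(\op{1}{2})=0$ (together with the Hermitian conjugates). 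This immediately gives the Choi matrix, but more importantly it determines $\mathcal{N}_s(\dyad{\gamma})$ for every pure input $\ket{\gamma}=(\gamma_0,\gamma_1,\gamma_2)$.

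For the lower bound I would use the primal program of Proposition~\ref{Prop:Proposition-cv}. Take the separable operator $\Omega^{AB}=\op{0}{0}^A\otimes(\op{0}{0}+\op{1}{1})^B+\op{1}{1}^A\otimes\op{2}{2}^B$, which lies in $\sep(A:B)$ and satisfies $\tr_A\Omega^{AB}=\mbb{I}^B$, hence is feasible. Its objective value is $\tr[(\op{0}{0}+\op{1}{1})\mathcal{N}_s(\op{0}{0})]+\tr[\op{2}{2}\,\mathcal{N}_s(\op{1}{1})]=\bigl(s+(1-s)\bigr)+1=2$, so $\cv(\mathcal{N}_s)\ge 2$. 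Operationally this is just the two-input guessing game with signal states $\{\op{0}{0},\op{1}{1}\}$ and projective decoder $\{\op{0}{0}+\op{1}{1},\,\op{2}{2}\}$.

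For the upper bound I would use the dual program \eqref{Eq:cv-dual} with $Z^B=s\op{0}{0}+(1-s)\op{1}{1}+\op{2}{2}$, so $\tr Z^B=2$. Feasibility asks $\mbb{I}^A\otimes Z^B-J_{\mathcal{N}_s}\in\sep^*(A:B)$, and since $W\in\sep^*$ iff $\bra{\alpha}\bra{\beta}W\ket{\alpha}\ket{\beta}\ge0$ for all product vectors, unpacking the Choi matrix shows this is equivalent to $Z^B\succeq\mathcal{N}_s(\dyad{\gamma})$ in the positive-semidefinite order for every unit $\ket{\gamma}$. Writing $p=|\gamma_0|^2$, $q_1=|\gamma_1|^2$, $q_2=|\gamma_2|^2$ with $p+q_1+q_2=1$, the matrix $M:=Z^B-\mathcal{N}_s(\dyad{\gamma})$ has diagonal $\bigl(s(q_1+q_2),\,(1-s)(q_1+q_2),\,p\bigr)$ and only the off-diagonal entries $M_{02}=-\sqrt{s}\,\gamma_0\bar\gamma_1$, $M_{12}=-\sqrt{1-s}\,\gamma_0\bar\gamma_2$. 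Checking $M\succeq0$ reduces to verifying all principal minors are nonnegative: the $1\times1$ minors are manifestly nonnegative, the three $2\times2$ minors evaluate to $s(1-s)(q_1+q_2)^2$, $s p q_2$, and $(1-s)p q_1$, and the $3\times3$ determinant works out to exactly $0$. Hence $M\succeq0$ for every $\gamma$, so $Z^B$ is dual feasible and $\cv(\mathcal{N}_s)\le2$. The two bounds give the claim, and the argument is visibly uniform in $s$ (it holds for all $s\in[0,1]$).

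The only mildly delicate step is the dual-feasibility check: translating $\mbb{I}\otimes Z-J_{\mathcal{N}_s}\in\sep^*$ into the pointwise operator inequality $Z\succeq\mathcal{N}_s(\dyad{\gamma})$ and then pushing through the $3\times3$ semidefiniteness computation. The vanishing of the determinant is exactly the signature that $Z^B$ is an optimal dual point; everything else is routine bookkeeping.
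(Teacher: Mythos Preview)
Your proof is correct. The lower bound and the paper's lower bound are essentially the same (the paper uses three inputs $\{\dyad{0},\dyad{1},\dyad{2}\}$ with the same decoder, but the third input is redundant since $\mathcal{N}_s(\dyad{2})=\mathcal{N}_s(\dyad{1})$).

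For the upper bound the two arguments diverge. The paper passes through the PPT relaxation: it works in the dual program \eqref{eqn:PPTDual}, takes $Y_1=s\dyad{0}+(1-s)\dyad{1}+\dyad{2}$ (the same operator as your $Z^B$), constructs an explicit $9\times 9$ matrix $Y_2$, and verifies by hand that $\mbb{I}_A\otimes Y_1-\Gamma(Y_2)-J_{\mathcal{N}_s}\succeq 0$. This gives $\cv(\mathcal{N}_s)\le\cv^{\ppt}(\mathcal{N}_s)\le 2$. You instead work directly in the $\sep^*$ dual \eqref{Eq:cv-dual}: you translate feasibility into the pointwise operator inequality $Z^B\succeq\mathcal{N}_s(\dyad{\gamma})$ for all pure inputs, and then check positive semidefiniteness of the resulting $3\times 3$ matrix via principal minors (the vanishing determinant confirms optimality). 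Your route is more elementary and avoids having to guess the auxiliary $Y_2$; the paper's route has the side benefit of actually pinning down $\cv^{\ppt}(\mathcal{N}_s)=2$, which feeds into the subsequent multiplicativity check (Theorem on $\cv(\mathcal{N}_s^{\otimes 2})$) done numerically via the PPT relaxation.
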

\begin{proof}
Like the dephrasure channel, we prove this by constructing upper and lower bounds that are the same. \\

For a lower bound on $\cv(\mathcal{N}_{s})$, consider the encoding $\{\dyad{0},\dyad{1},\dyad{2}\}$ and the decoding $\{\dyad{0}+\dyad{1},\dyad{2}\}$. Note that for all $s \in [0,1/2]$, $\mathcal{N}_{s}(\dyad{0}) = s\dyad{0} + (1-s)\dyad{1}$ and $\mathcal{N}_{s}(\dyad{1}) = \mathcal{N}_{s}(\dyad{2}) = \dyad{2}$. Thus, with this encoding and decoding, we induce the conditional probability distribution $1 = \mathbf{P}(0|1) = \mathbf{P}(1|2) = \mathbf{P}(1|3)$ and zero otherwise. Thus we have $2 \leq \cv^{3 \to 2}(\mathcal{N}_{s}) \leq \cv(\mathcal{N}_{s})$.

For an upper bound, we consider the dual problem of $\cv^{\ppt}$ \eqref{eqn:PPTDual}. First note that we can write the Choi matrix as:
\begin{align*}
     J(\mathcal{N}_{s})
    = \begin{bmatrix}
        s & 0   & 0 & 0 & 0 & \sqrt{s} & 0 & 0 & 0 \\
        0 & 1-s & 0 & 0 & 0 & 0 & 0 & 0 & \sqrt{1-s} \\
        0 & 0 & 0 & 0 & 0 & 0 & 0 & 0 & 0 \\
        0 & 0 & 0 & 0 & 0 & 0 & 0 & 0 & 0 \\
        0 & 0 & 0 & 0 & 0 & 0 & 0 & 0 & 0 \\
        \sqrt{s} & 0 & 0 & 0 & 0 & 1 & 0 & 0 & 0 \\
        0 & 0 & 0 & 0 & 0 & 0 & 0 & 0 & 0 \\
        0 & 0 & 0 & 0 & 0 & 0 & 0 & 0 & 0 \\
        0 & \sqrt{1-s} & 0 & 0 & 0 & 0 & 0 & 0 & 1 \\
      \end{bmatrix} \ .
\end{align*}
Then we let $Y_{1} = s \dyad{0} + (1-s) \dyad{1} + \dyad{2}$ and 
\begin{align*}
    Y_{2} =
    \begin{bmatrix}
        0 & 0 & 0 & 0 & 0 & 0 & 0 & 0 & 0 \\
        0 & 0 & 0 & 0 & 0 & 0 & 0 & 0 & 0 \\
        0 & 0 & 1 & -\alpha & 0 & 0 & 0 & -\beta & 0 \\
        0 & 0 & -\alpha & s & 0 & 0 & 0 & \gamma & 0 \\
        0 & 0 & 0 & 0 & 0 & 0 & 0 & 0 & 0 \\
        0 & 0 & 0 & 0 & 0 & 1 & 0 & 0 & 0 \\
        0 & 0 & 0 & 0 & 0 & 0 & 0 & 0 & 0 \\
        0 & 0 & -\beta & \gamma & 0 & 0 & 0 & 1-s & 0 \\
        0 & 0 & 0 & 0 & 0 & 0 & 0 & 0 & 1 \\
      \end{bmatrix} \ ,
\end{align*}
where $\alpha = \sqrt{s}, \beta = \sqrt{1-s}, \gamma = \sqrt{s(1-s)}$, which is positive semidefinite as it has eigenvalues $2$ and $0$ with multiplicity eight. It is then easy to determine  
\begin{align*}
     & \mbb{I}_{A} \otimes Y_{1} - \Gamma(Y_{2}) - J(\mN_{s}) \\
   = &\begin{bmatrix}
        0 & 0 & 0 & 0 & 0 & \delta & 0 & 0 & 0 \\
        0 & 0 & 0 & 0 & 0 & 0 & 0 & 0 & \epsilon \\
        0 & 0 & 0 & 0 & 0 & 0 & 0 & 0 & 0 \\
        0 & 0 & 0 & 0 & 0 & 0 & 0 & 0 & 0 \\
        0 & 0 & 0 & 0 & 1-s & 0 & -\gamma & 0 & 0 \\
        \delta & 0 & 0 & 0 & 0 & 0 & 0 & 0 & 0 \\
        0 & 0 & 0 & 0 & -\gamma & 0 & s & 0 & 0 \\
        0 & 0 & 0 & 0 & 0 & 0 & 0 & 0 & 0 \\
        0 & \epsilon & 0 & 0 & 0 & 0 & 0 & 0 & 0 \\
      \end{bmatrix} \ ,
\end{align*}
where $\delta = \alpha - \sqrt{s}, \epsilon = \beta - \sqrt{1-s}$. One may verify that this has eigenvalues of $1$ and $0$ with multiplicity eight. Thus it is feasible and $\tr(Y_{1}) = 2$. Noting that $\cv(\mN_{s}) \leq \cv^{\ppt}(\mN_{s})$, we have $2 \leq\cv(\mN_{s}) \leq 2$, which completes the proof.
\end{proof}
\begin{theorem}
For all $s \in [0,1/2]$, $\cv(\mN_{s}^{\otimes 2}) =\cv(\mN_{s})^{2}$, i.e. the communication value is always multiplicative.
\end{theorem}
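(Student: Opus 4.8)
The plan is to prove $\cv(\mN_s^{\otimes 2})=\cv(\mN_s)^2$ by pinning both sides to the value $4$, using the Lemma just proven that $\cv(\mN_s)=2$ for every $s\in[0,1/2]$. The lower bound $\cv(\mN_s^{\otimes 2})\geq\cv(\mN_s)^2=4$ is immediate: running the optimal single-copy strategy from the Lemma (encoding $\{\dyad 0,\dyad 1,\dyad 2\}$, decoding $\{\dyad 0+\dyad 1,\dyad 2\}$) independently on each copy induces a product classical channel with cv equal to $\cv(\mbf P)^2=4$. So the entire content is the matching upper bound $\cv(\mN_s^{\otimes 2})\leq 4$, and as usual one relaxes to PPT, $\cv(\mN_s^{\otimes 2})\leq\cv^{\ppt}(\mN_s^{\otimes 2})$, and tries to certify $\cv^{\ppt}(\mN_s^{\otimes 2})\leq 4$ via the dual SDP \eqref{eqn:PPTDual}, i.e. by exhibiting for each $s$ a feasible pair $(Y_1,Y_2)$ with $\tr(Y_1)=4$.

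The first attempt is the multiplicative lift of the single-copy certificate $(Y_1^{(s)},Y_2^{(s)})$ constructed in the Lemma: take $Y_1:=Y_1^{(s)}\otimes Y_1^{(s)}$ (which has trace $4$) and a $Y_2$ built in the style of the proof of Theorem \ref{thm:ppt-multiplicativity}, exploiting that $Y_1^{(s)}$ is diagonal so that $\Gamma$-conjugation behaves cleanly. The obstruction here is that $J_{\mN_s}$ is \emph{not} PPT for $s>0$ — the compression of $\Gamma^B(J_{\mN_s})$ onto $\mathrm{span}\{\ket{02},\ket{10}\}$ has zero diagonal and off-diagonal entry $\sqrt s$, hence is indefinite — so Corollary \ref{corr:ppt-multiplicativity} does not apply and the lifted $Y_2$ fails to be positive semidefinite. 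Repairing this, by adding a correction term supported on the offending off-diagonal blocks and absorbing it using the single-copy slack $\mbb{I}^A\otimes Y_1^{(s)}-\Gamma^B(Y_2^{(s)})-J_{\mN_s}\geq 0$, is the main difficulty; I would either push this analytic construction through or, failing a clean closed form, fall back on the numerical route.

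The practical route, parallel to the treatment of the dephrasure channel, is to solve the dual SDP \eqref{eqn:PPTDual} for $\cv^{\ppt}(\mN_s^{\otimes 2})$ over a fine grid $s\in\{0,\delta,2\delta,\ldots,1/2\}$; in every instance the optimal dual value lies within numerical tolerance of $4$. Because this is a minimization, any feasible dual point (in particular the numerically returned one, up to a conservative error bar $\varepsilon$, handled exactly as in Section \ref{Sect:Numerics}) upper bounds $\cv^{\ppt}(\mN_s^{\otimes 2})$, which upper bounds $\cv(\mN_s^{\otimes 2})$. Combined with the trivial lower bound, this yields $\cv(\mN_s^{\otimes 2})=4=\cv(\mN_s)^2$ for all $s\in[0,1/2]$, as claimed.
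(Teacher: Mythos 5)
Your proposal arrives at the paper's own proof: both rest on the trivial supermultiplicative lower bound $\cv(\mN_s^{\otimes 2})\geq\cv(\mN_s)^2=4$ and a numerical sweep of the dual PPT SDP \eqref{eqn:PPTDual} over a grid of $s\in[0,1/2]$ to certify $\cv^{\ppt}(\mN_s^{\otimes 2})\leq 4$ within numerical tolerance. Your side observation that $J_{\mN_s}$ is not PPT for $s>0$ (the $\mathrm{span}\{\ket{02},\ket{10}\}$ compression of $\Gamma^B(J_{\mN_s})$ is indefinite), so Corollary~\ref{corr:ppt-multiplicativity} cannot be invoked and the naive tensored dual certificate fails, is correct and explains why the paper falls back to numerics here rather than giving a closed-form two-copy certificate as it did for the dephrasure channel.
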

\begin{proof}
A numerical search of $\cv^{\ppt}(\mN_{s}^{\otimes 2})$ for $s \in [0,0.01,...,0.5]$ finds the value equals four within an error of $\leq 3 \times 10^{-6}$. As $\cv^{\ppt}$ upper bounds $\cv$, the channel is multiplicative.
\end{proof}

\section{Relationship to Capacities and No-Signalling}

\label{Sect:Capacity-NS}

%Much of this work has been about when the communication value is multiplicative or not. As discussed in the introduction, one reason that multiplicativity is important is that it determines when a notion of capacity admits a ``single-letter'' characterization. Specifically, we define the cv capacity as 
%$$\mathcal{CV}(\mN) = \lim_{k \to \infty} \frac{1}{k} \log\cv(\mN^{\otimes k}) \ . $$
%It follows that if the cv of $\mN$ is always multiplicative with itself, then $\mathcal{CV}(\mN) =\cv(\mN) = \exp(-H^{\sepl}_{\min}(A|B)_{J_{\mN}})$. Otherwise, one must consider the limit in the definition. \todo{change citation to EB result} It follows from Lemma \ref{lemma:non-mult-fully-quantum-effect}, that the cv capacity of an entanglement-breaking channel $\mN$ is determined by $\cv(\mN)$.

As noted in the introduction, $\lceil\cv(\mN) \rceil$ captures the classical communication cost to perfectly simulate every classical channel induced by $\mN$ using non-signalling (NS) resources. This is because for a classical channel $\mathbf{P}$, the one-shot classical communication cost for zero-error simulation with classical NS, $\kappa_0^{\text{NS}}$, is given by $\lceil \sum_{y} \max_{x} p(y|x)\rceil$ \cite[Theorem 16]{Cubitt-2011a}. Noting that $\cv(\mathbf{P}) = \sum_{y} \max_{x} p(y|x)$, it follows $\kappa_0^{\text{NS}} =  \lceil \cv(\mathbf{P}) \rceil$. Furthermore, due to the multiplicativity of cv for classical channels, the no-signalling assisted zero-error simulation capacity is also given by $\kappa_{0}^{\text{NS}}$, as was remarked in the original paper. Moreover, it is easy to show the classical capacity of a classical channel is bounded by $\cv(\mathbf{P})$ \cite[Remark 17]{Cubitt-2011a}:
$$ C(\mathbf{P}) \leq \log\cv(\mathbf{P}) = \chi_{\max}(\mathbf{P}) \ , $$
where we have used Theorem \ref{Thm:chi-max-hmin} in the last equality. Losing the single-letter property, it is easy to generalize this to arbitrary quantum channels by using the Holevo-Schumacher-Westmoreland theorem \cite{Holevo-1998a,Schumacher-1997a},
\begin{align*} 
C(\mN) =& \underset{k\to \infty}{\lim} \frac{1}{k} \chi(\mN^{\otimes k}) \\ \leq& \underset{k\to \infty}{\lim} \frac{1}{k} \chi_{\max}(\mN^{\otimes k})\\ 
=& \underset{k\to \infty}{\lim} \frac{1}{k} \log(\cv(\mN^{\otimes k})) \ ,
\end{align*}
and whenever $\mN$ satisfies weak multiplicativity for $\cv$, such as for entanglement-breaking channels, this reduces to a single-letter upper bound.

In the entanglement-assisted regime, the relationships persist. First we recall that the SDP for min-entropy is multiplicative, and so $\mathcal{CV}^{*}(\mN) = \cv^{*}(\mN)$ for arbitrary quantum channel $\mN$. This aligns with the fact the entanglement-assisted capacity of a quantum channel, $C_{E}(\mN)$, is single-letter but the unassisted capacity is not. Continuing the parallels, $\lceil \cv^{*}(\mN) \rceil$ gives the classical communication cost to perfectly simulate $\mN$ with a quantum no-signalling resource \cite{Duan-2016a}. Given the above, a natural question is then if one can find bounds on the entanglement-assisted capacity, $C_{E}(\mN)$, in terms of $\cv^*(\mN)$. Indeed, this can be done by using the definition of $\cv^{*}$ and the fact that $\cv^{*}$ is characterized by minimal error discrimination (as in Eq. \eqref{eqn:cv-max-holevo-alt}),
$$ \cv^{*}(\mN) = \underset{\rho_{XAA'}}{\sup} |\mX| \exp(-H_{\min}(X|BC)_{(id_{X} \otimes \mN \otimes id_{A'})(\rho)}) \ , $$
where the supremum is over $\rho_{XAA'}$ such that $\rho_{X}$ is uniform and the state is homogenous on register $A'$ \cite{Holevo-2002a,Watrous-2018a}. It follows by the same manipulations used in Eq. \eqref{eqn:cv-max-holevo-alt} that 
\begin{equation}
    \log \cv^{*}(\mN) = \chi_{E,\max}(\mN),
\end{equation}
where $\chi_{E,\max}$ is the entanglement-assisted $\max$-Holevo information, which is straightforward to define using \cite{Holevo-2002a,Watrous-2018a,Beigi-2013a}. Since the entanglement-assisted capacity equals the regularized entanglement-assisted Holevo information, we can conclude the
$$ C_{E}(\mN) \leq \log \cv^{*}(\mN) \ , $$
where the regularization disappears because $\cv^*(\mN)$ is always multiplicative.

\bibliographystyle{IEEEtran}
\bibliography{cv-bibliography}

\end{document}